\newtheorem{thm}{Theorem}
\newtheorem{cor}{Corollary}
\newtheorem{lem}{Lemma}
\theoremstyle{definition}
\newtheorem{regime}{Asymptotic regime}
\newtheorem{rem}{Remark}
\theoremstyle{definition}
\newtheorem{model}{Model}
\DeclarePairedDelimiter\floor{\lfloor}{\rfloor}
\title{Linear-cost unbiased posterior estimates for crossed effects and matrix factorization models via couplings}
\newcommand\add[1]{\textcolor{black}{#1}}
\DeclareMathOperator{\erf}{erf}
\newcommand{\bb}{\boldsymbol}
\newcommand\X{\boldsymbol{X}}
\newcommand\Y{\boldsymbol{Y}}
\newcommand\x{\boldsymbol{x}}
\newcommand\y{\boldsymbol{y}}
\newcommand\br{\boldsymbol{r}}
\newcommand\s{\boldsymbol{s}}
\newcommand\bmu{\boldsymbol{\mu}}
\newcommand\sX{\mathcal{X}}
\newcommand\sP{\mathcal{P}}
\newcommand{\bigo}[1]{O(#1)}
\newcommand{\prob}{\pi}
\author[a]{Paolo Maria Ceriani}
\author[a]{Andrea Pandolfi}
\author[a,b]{Giacomo Zanella}
\affil[a]{Department of Decision Sciences, Bocconi University, Milan, Italy}
\affil[b]{Bocconi Institute for Data Science and Analytics, Bocconi University, Milan, Italy}
\date{}                   
\begin{document}
\maketitle

\begin{abstract}
We design and analyze unbiased Markov chain Monte Carlo (MCMC) schemes based on couplings of blocked Gibbs samplers (BGSs), whose total computational costs scale linearly with the number of parameters and data points. 
Our methodology is designed for and applicable to high-dimensional BGS with conditionally independent blocks, which are often encountered in Bayesian modeling.
We provide bounds on the expected number of iterations needed for coalescence for Gaussian targets, as well as on the tails of the coalescence times distribution. These imply that practical two-step coupling strategies achieve coalescence times that match the relaxation times of the original BGS scheme up to logarithmic factors.  
To illustrate the practical relevance of our methodology, we apply it to high-dimensional crossed random effect and probabilistic matrix factorization models, for which we develop a novel BGS scheme with improved convergence speed.
Our methodology provides unbiased posterior estimates at linear cost (usually requiring only a few BGS iterations for problems with thousands of parameters), matching state-of-the-art procedures for both frequentist and Bayesian estimation of those models.
\end{abstract}

\noindent%
{\it Keywords:} Unbiased MCMC, Blocked Gibbs Samplers, Scalable Bayesian Inference, High-dimensional statistical models, Parallel Computing

\section{Introduction}
In recent years, unbiased Markov Chain Monte Carlo via couplings (UMCMC) has emerged as a promising framework to remove bias from MCMC estimates, thus potentially allowing for early stopping, simplifying the 
convergence diagnostic process and facilitating parallelization \citep{glynn_rhee, jacob2019unbiased}. 
In UMCMC, coupled chains are run for a random number of iterations (at least up to coalescence) and their values are combined to produce unbiased estimates. A natural question that arises is whether this class of estimates incurs a greater computational cost than conventional MCMC based on simple ergodic averages and to quantify this potential difference. Framing the question differently, one may ask whether it is possible to devise UMCMC methods with computational cost matching top performing MCMCs, while enjoying the above mentioned benefits.

On a different line of research, various works have shown how carefully designed blocked Gibbs Samplers (BGSs), i.e.\ Gibbs sampling schemes that update entire blocks of coordinates jointly, can achieve state-of-the-art performance for sampling from the posterior distributions of various challenging high-dimensional Bayesian models, such as non-nested models with crossed dependencies \citep{papaspiliopoulos2018scalable, tim_scalable}. In particular, BGSs achieve linear computational costs in the number of parameters and observations in asymptotic regimes where both diverge to infinity. 

In this work, we seek to combine these two lines of research, aiming to design UMCMC BGS methods with linear computational cost in the aforementioned high-dimensional regimes.
Specifically, we provide a theoretical contribution, i.e.\ the analysis of BGS couplings for Gaussian targets via explicit bounds on the expected number of iterations, showing that practical two-step BGS coupling schemes achieve coupling times that match relaxation times up to a logarithmic factor; and some methodological ones, discussing implementation aspects of couplings of BGS with conditionally independent blocks and developing a novel BGS scheme for probabilistic matrix factorization which empirically reduces the MCMC complexity to linear for those models. 
To illustrate the practical relevance of our methodology, we apply it to crossed random effect models \citep{gelman2005analysis,baayen2008mixed}, a commonly used class of additive models connecting a response variable to categorical predictors, and to probabilistic matrix factorization (PMF) models \citep{matfactmnih}, dimensionality reduction models based on low-rank representations.  

The remaining part of the article is organized as follows: after briefly presenting the objectives of the paper and the three running examples that motivate our research in Section \ref{sec:mot}, we review how to exploit couplings to obtain unbiased estimates in Section \ref{sec:backg}. In particular, Section \ref{sec:methodology} discusses the use of couplings for BGSs.
The main theoretical results are presented in Section \ref{sec:bound}: we provide bounds on the expected number of iterations needed for coalescence of coupled chains, as well as on the tails of such coalescence times, with important implications for parallel computations.
We apply the methodology and the theoretical results to Gaussian crossed random effect models in Section \ref{sec:crem}, generalized linear mixed models (GLMMs) with crossed effects in Section \ref{sec:ngcrem} and to PMF models in Section \ref{sec:pmf}. 
The code to reproduce the simulations reported in this paper can be found at \href{https://github.com/paoloceriani/couplings_bgs}{github.com/paoloceriani/coupling\_bgs}.
Proofs, additional results and simulations are deferred to the supplement.

\section{Motivation and objectives}
\label{sec:mot}
\add{BGSs are particularly well-suited for high-dimensional Bayesian models with a high degree of conditional independence, for which they achieve state-of-the-art performance, and, unlike most available sampling schemes, can result in a total computational cost that scales linearly in the number of observations and parameters. }
In this paper, we consider the following three models as running examples, motivating the methodology and theory developed later.
Despite the relatively simple formulations, these models are computationally challenging to estimate: correlated errors lead to expensive GLS estimates and strong posterior dependence 
results in slow mixing of standard MCMCs.

\begin{model}[Gaussian crossed random effects]\label{ex:gcrem}
Cross-classified data, where each observation can be simultaneously classified according to two or more variables, are commonly found in the scientific literature, with applications in various domains including health and social sciences \citep{gelman2005analysis,baayen2008mixed}. 
A univariate response variable $y$ is assumed to depend additively on the 
effects of $K$ categorical variables, termed \textit{factors}, each one with $I_k$ different possible values, called \textit{levels}, for $k=1,...,K$. The effect of the $i$-th level of the $k$-th factor is described by a random variable ${a}_{k,i}$.
Let $\mathbf{a}_{k} = (a_{k,1},..., a_{k,I_k})$ denote the $I_k$-dimensional vector of effects of the $k$-th factor,  for $k=1,...,K$; $\mathbf{y}=(y_n)_{n=1}^N$, $\mathbf{a}=(\mathbf{a}_k)_{k=1}^K$ and  $\boldsymbol{\tau}= (\tau_k)_{k=0}^K$ the vectors of all data, effects and precisions respectively.  Let $i_k[n]$ denote the level of the $k$-th factor associated to the $n$-th observation \citep[see e.g. Section 1.1 and Chapter 11 of ][ for details on this notation]{Gelman_Hill_2006}.
In this paper, for clarity of exposition, we will consider the intercept-only version, although the concepts discussed extend to more general versions with covariates and random slopes \citep{gao_16,tim_scalable}.
The model with its standard prior can then be written as
\begin{gather}
\begin{aligned}
		\label{eq:gauss_crem}
		y_n | \mu, \mathbf{a}, \tau_0 &\sim N\left(\mu +\sum_{k=1}^K a_{k,i_{k}[n]},\tau_0^{-1}\right) & n=1,...,N,\\
		a_{k,i} | \tau_k &\sim N(0, \tau_k^{-1}) & i=1,..., I_k, \, k=1,..., K , \\
		p(\tau_k) &\propto \tau_k^{-0.5} &\text{ for } k=0,..., K ,
\end{aligned} 
\end{gather}
and $p(\mu)\propto 1$, where $\mu$ is a random intercept. In \eqref{eq:gauss_crem}, $p(\cdot)$ denotes the density of the random variable inside the brackets and $N(\mu,\sigma^2)$ denotes a Gaussian distribution with mean $\mu$ and variance $\sigma^2$. We will often consider the model with $K=2$ factors, where one can think of $y_n$ as the rating that user $i_1{[n]}$ gave to film $i_2{[n]}$.
\end{model}

\begin{model}[GLMMs with crossed effects]	\label{ex:ngcrem}
Generalized linear mixed models (GLMMs) extend the framework of linear mixed models to accommodate non-Gaussian response variables by incorporating a link function, but still retaining the same dependence structure.
They are a powerful tool widely used in many academic fields, such as political science, biology, and medicine \citep{wood2017, jiang}.
\add{Standard choices for the link function include the binomial link to model binary or discrete responses \citep{ghitza2013,book:bates}, Poisson or negative binomial to model count data \citep[Ch.\ 10]{book:DunnSmyth2018}, gamma or log-normal link to model continuous positive data \citep[Ch.\ 11]{book:DunnSmyth2018}.}
Extending Model \ref{ex:gcrem} to allow for general response gives
	\begin{align}\label{eq:general_crem}
		\mathcal{L}(y_n | \mu, \mathbf{a} ) = \mathcal{L}(y_n | \eta_n )\text{ with }\eta_n = \mu +\sum_{k=1}^K a_{k, i_{k}[n]} & \text{ for } n=1,...,N, 
	\end{align} 
	where $\mathcal{L}(\cdot)$ denotes the law of the random variable within brackets. Different choices of the conditional distribution $\mathcal{L}(y_n | \eta_n )$ lead to different models, e.g. $\mathcal{L}(y_n | \eta_n )  = N(\eta_n, \tau_0^{-1})$ is equivalent to Model \ref{ex:gcrem} or $\mathcal{L}(y_n | \eta_n )= Bern(p)$ with $p =\frac{1}{1+e^{-\eta_n}}$ leads to a logit model for binary data. In Section \ref{sec:ngcrem} we will consider the case $ \mathcal{L}(y_n | \mu, \mathbf{a} )= Lapl( \eta_n ,1/\sqrt{2})$, where $Lapl(\mu, b)$ denotes the univariate Laplace distribution with mean $\mu$ and scale $b$.
\end{model}

\begin{model}[Probabilistic matrix factorization]
	\label{ex:pmf}
	Low rank matrix factorization methods provide one of the simplest and most effective approaches to collaborative filtering. \add{These methods have been widely used in recommender systems after the success of the Netflix Prize competition, where they proved to achieve competitive predictive performance \citep{koren2009netflix, matfactmnih}.} 
	As for Models \ref{ex:gcrem} and \ref{ex:ngcrem} with $K=2$, one can think of observation $y_n$ as representing the rating that user $i[n]$ gives to film $j[n]$. Denoting by $\textbf{u}_i$ and $\textbf{v}_j$ respectively the $d$-dimensional latent user-specific and film-specific factors for $I=1,..., I_1$ and $j =1,...,I_2$, and by $\textbf{u}=(\textbf{u}_{i})_{i=1}^{I_1}\in\mathbb{R}^{I_1\times d}$, $\textbf{v}=(\textbf{v}_{j})_{j=1}^{I_2}\in\mathbb{R}^{I_2\times d}$ their collections, the model can be formulated as
	\begin{gather}
	\begin{aligned}
		\label{eq:pmf}
		y_n | \rho, \textbf{u}, \textbf{v}, \tau_0 & \sim N(\rho \textbf{u}_{i[n]}^\top \textbf{v}_{j[n]}, \tau_0^{-1}) &n=1,...,N,\\
		\textbf{u}_i,\textbf{v}_j  &\sim N(\textbf{0},1_d) &i=1,...,I_1, \; \; j=1,...,I_2,  \\
		\tau_0&\sim \hbox{Gamma}(c,d),\, &\rho^{-\frac{1}{2}} \sim \hbox{Gamma}(a,b), 
	\end{aligned}
	\end{gather}
	where $\hbox{Gamma}(a,b)$ denotes a Gamma random variable with shape parameter $a$ and scale parameter $b$, 
$1_d$ denotes the $d$-dimensional identity matrix and $\rho$ indicates a positive quantity acting as a scaling factor for the random effects. PMF models can be seen as a multiplicative extension of Models \ref{ex:gcrem} and \ref{ex:ngcrem}, and are usually more challenging to estimate (due to invariance with respect to orthogonal transformations, a lower degree of linearity, etc.).
\end{model}

\medskip
\add{Despite the computational challenges posed by these models, a Bayesian approach might still be preferred. For example, in the political science literature, multilevel regression and poststratification \citep{ghitza2013, goplerud2022} are widely used to derive precise vote choice for small subgroups of the population. A Bayesian approach regularizes estimates for these units, borrowing strength across related groups.
Similar advantages can be found in large rating systems, where only few ratings are available for each user or item.
For PMF models, a Bayesian approach can lead to  an increase in predictive accuracy \citep{mnih2007pmf} and allows dealing with uncertainty more effectively \citep{matfactmnih}.}

\subsection{Asymptotic regimes of interest and computational cost}
\label{sssec:theory}
Models \ref{ex:gcrem}, \ref{ex:ngcrem} and \ref{ex:pmf} naturally lead to situations where both the number of observations $N$ and parameters $p=\bigo{\sum_{k=1}^K I_k}$ are large. We use the notation $(T_n)_{n \in \mathbb{N}}=\bigo{f(n)}$ if there exist constants $c,C\in\mathbb{R}$ with $0<c<C<\infty$ such that $c f(n)\le T_n \le C f(n)$ for all $n$. 
In the following, we will talk about asymptotic regimes in terms of $N\to\infty$, implicitly assuming that $p$ is a function of $N$ that is also diverging as $N\to\infty$. 

Also, the above models are commonly used in \emph{sparse} settings, where a small fraction of the possible combinations of effects are observed, i.e.\ $N\ll\prod_{k=1}^KI_k$.
For example, when $K=2$ one often has $1\ll p< N\ll I_1 \times I_2$ (see \cite{gao_16} for further discussion).
Using the analogy of films and ratings for recommender systems, the above corresponds to assuming that the number of ratings, users and films is large, but each user rates only a small fraction of the films.  
Depending on the degree of sparsity in the observation design, one could have either $p=\bigo{N}$ or $p/N\to 0$ as $N\to\infty$.

We consider the task of performing posterior inference for the above models using MCMC methods.
We are interested in quantifying the computational effort required for the posterior estimation (both in the MCMC and UMCMC frameworks) as $N\to\infty$.
In the (U)MCMC context, the total cost is defined as the product of the cost per iteration and the expected number of iterations for the convergence (coalescence) of the chains.
As discussed below, recent works suggest that BGS can achieve state-of-the-art performances of $\bigo{N}$ posterior estimation cost. 
Our main objective is to assess whether UMCMC methods with the same cost can be devised for this problem, as well as to provide some guidance on how to do so.

\subsection{Related literature and block-updating schemes}
\label{ssec:comp_cost}
Models with crossed dependencies are computationally harder than classical Bayesian hierarchical models with nested structures. 
For example, even in the Gaussian case (i.e.\ Model \ref{ex:gcrem}), evaluating the marginal likelihood once (e.g.\ computing $\mathcal{L}(\boldsymbol{y}\mid\boldsymbol{\tau},\mu)$ for a given $\boldsymbol{\tau}$ and $\mu$) requires the inversion of a $\bigo{\sum_{k=1}^KI_k}$-dimensional matrix. Despite the matrix being sparse, the crossed dependence structure leads to a dense Cholesky factor \citep[Sec.3]{andrea2024}, and more generally prevents the use of efficient sparse linear algebra tools available for, e.g., nested or spatial hierarchical model, leading to a computational cost of at least $\bigo{N^{3/2}}$ for each evaluation \citep{gao_16,perry2017fast,tim_scalable,menictas2023streamlined}. 
The situation is obviously worse in the non-Gaussian case, where analogous computations involve general $\bigo{\sum_{k=1}^KI_k}$-dimensional integrals.

On the other hand, the above models lend themselves naturally to block updating schemes, such as BGSs in the sampling context or block coordinate ascent (aka backfitting) for maximum a posteriori (MAP) or generalized least square (GLS) computations. For example, given the conditional independence structure of Model \ref{ex:gcrem}, 
the posterior conditional distribution of $\textbf{a}_k$ factorizes as
$\mathcal{L}(\textbf{a}_k|\mu,\textbf{a}_{-k}, \boldsymbol{\tau}, \textbf{y})
=\otimes_{i=1}^{I_k}\mathcal{L}(a_{k,i}|\mu,\textbf{a}_{-k}, \boldsymbol{\tau}, \textbf{y})$, where $\otimes$ denotes the product of independent distributions.
Thus a BGS with components $\mu, \boldsymbol{a}_1, ...,\boldsymbol{a}_K$ and $\boldsymbol{\tau}$, which we will call \emph{vanilla} BGS, can be trivially implemented at $\bigo{N}$ cost per iteration for Model \ref{ex:gcrem}. 
However, this vanilla version can mix slowly.
In particular, \cite{gao_16} showed that for Model \ref{ex:gcrem} with $K=2$ factors, known variances and full observation designs, the vanilla BGS requires $\bigo{\sqrt{N}}$ to converge, leading to a prohibitive $\bigo{N^\frac{3}{2}}$ total cost.
This follows from the fact that observed values create strong a posteriori dependence between unknown factors. 
\cite{papaspiliopoulos2018scalable} proposed a collapsed Gibbs Sampler (see Algorithm \ref{alg:cg} below) which preserves the $\bigo{N}$ cost per iteration and converges in $\bigo{1}$ iterations under appropriate assumptions \citep[see also][Thm.2]{tim_scalable}.
Similar techniques have been employed to develop a \textit{back-fitting} algorithm to perform GLS estimation for an analogue of Model \ref{ex:gcrem} with $\bigo{N}$ cost in \cite{gosh_back}.  
A first question of interest that we consider is whether the same computational efficiency can be extended to the UMCMC context, which we answer positively in Section \ref{sec:crem}.
\add{In the UMCMC case, one can stop MCMC runs after a few iterations (e.g.\ around $10$, see Section \ref{ssec:numerics_gcrem}), which is similar to the number of iterations typically required for the convergence of backfitting \citep{gosh_back}. Note that UMCMC methods require running many independent chains, which increases the overall computational cost. However, when many parallel machines are available, the wall-clock time can be made comparable with that of backfitting. A more detailed discussion is provided in Section \ref{sec:tail_bound}.}

\section{Background on couplings for estimation and BGSs}
\label{sec:backg}
We now provide some background material on UMCMC and BGSs. 
Specifically, Section \ref{ssec:coup_est} provides a concise recap of how to exploit couplings for unbiased MCMC estimation, as presented in \cite{jacob2019unbiased}, while Sections \ref{ssec:bgs_notation} and \ref{ssec:2step} introduce, respectively, BGS kernels and two-step coupling algorithms.

\subsection{Notation}
In the following, vectors are denoted in bold, matrices in capital letters and univariate quantities in standard lowercase. We denote the space of probability measures over a space $\sX$ by $\mathcal{P}(\sX)$.
Given $p,q\in \mathcal{P}(\sX)$, $\Gamma(p,q)$ is the set of couplings between $p$ and $q$, i.e.\ joint distributions on $\mathcal{X} \times \mathcal{X}$ whose first and second marginals are, respectively,  $p$ and $q$. For a kernel $P$ on $\mathcal{X}$, we denote by $\bar{P}$, or more explicitly $\bar{P}[P]$, a kernel on $\mathcal{X} \times \mathcal{X}$ such that 
$\bar{P}[P]((\x,\y), \cdot)\in\Gamma \left( P(\x, \cdot), P(\y, \cdot)  \right)$ for every $(\x,\y)\in \mathcal{X} \times \mathcal{X}$. 
We denote by $(p \otimes q)\in\sP(\sX \times \mathcal{Y})$ the product measure defined as $(p \otimes q)(A\times B ) = p(A)q(B)$ for all $A\subseteq \sX$ and $B\subseteq  \mathcal{Y}$. 
With a slight abuse of notation we use $\Gamma(p,q)$ to denote both the collection of distributions and that of random variables, i.e.\ we also write $(\X,\Y)\in\Gamma(p,q)$ for random vectors $(\X,\Y)$  such that $\X \sim p, \textbf{Y} \sim q$.
A coupling $(\X,\Y)\in\Gamma(p,q)$ is called \textit{maximal} if it maximizes the probability of equality between the realizations of the two variables, i.e.\ if $\Pr(\X = \Y) = 1- \| p-q \|_{TV}$ where $ \| \cdot \|_{TV}$ denotes the norm induced by the total variation distance. We will denote by $\Gamma_{max}(p,q)\subset \Gamma(p,q) $ the collection of maximal couplings of $p,q$. Analogously, we write $\bar{P}[P] \in \Gamma_{max}[P]$ if $\bar{P}((\x,\y),\cdot) \in \Gamma_{max}(P(\x, \cdot), P(\y, \cdot))$ for every $\x,\y \in \mathcal{X}$. 
A coupling $(\mathbf{X},\mathbf{Y})\in\Gamma \left( p, q \right)$ minimizing $\mathbb{E}[\|\mathbf{X-Y}\|^2 ]$ among all couplings of $p$ and $q$ is called Wasserstein-2 ($W_2$) optimal, and we will denote the family of such optimal couplings as $\Gamma_{W_2}(p,q)$. Analogously, we say that $\bar{P}$ is a $W_2$-optimal coupling of a kernel $P$, and write $\bar{P}[P] \in \Gamma_{W_2}[P]$, if $\bar{P}[P]((\x,\y),\cdot) \in \Gamma_{W_2}(P(\x, \cdot), P(\y, \cdot))$ for every $\x,\y \in \mathcal{X}$. A recap on maximal and $W_2$ optimal couplings can be found in the supplementary material.

\subsection{Background on UMCMC}\label{ssec:coup_est}

We are interested in approximating expectations of the form 
$\mathbb{E}_{\pi}[h] = \int_{\sX} h(\x) \pi(d\x),$ 
where $\pi \in \mathcal{P(X)}$ 
and $h:\mathcal{X}\to\mathbb{R}$ a test function. 
Following \cite{glynn_rhee} and \cite{jacob2019unbiased}, we consider unbiased estimators of $\mathbb{E}_{\pi}[h] $ based on coupled Markov chains 
that marginally evolve according to a common $\pi$-invariant transition kernel $P$. 

Let  $(\X^{t},\Y^{t})_{t \ge 0}$ be a Markov chain  on $\mathcal{X} \times \mathcal{X}$ with coupled kernel $\bar{P}[P]$ such that: the two chains evolving according to $\bar{P}$ must  meet after finite time, i.e.\ if we define the meeting time $T= \min\{ t \ge 0 \;: \; \X^{t} = \Y^{t}\}$, 
it must hold $\Pr(T<\infty)=1$; and after meeting the two chains stay together, i.e.\ $\X^{t}=\Y^{t}$ for all $t\geq T$.
The initial distribution is taken to be $(\X^{0}, \Y^{0})\sim(\pi_0 P)\otimes \pi_0$ for some $\pi_0$, meaning that we initialize $\X^{-1} \sim \pi_0$ and $\Y^{0} \sim \pi_0$, with  $\X^{-1} $ and $\Y^{0}$ independent, and then take $\X^{0}|\X^{-1} \sim P(\X^{-1}, \cdot)$.

Under the above assumptions and some regularity conditions on the distribution of $T$ (see Sec.2.1 of \cite{jacob2019unbiased} or milder conditions in \cite{coupl_ext}), the random variable 
 \begin{align*}
	H_k\left((\X^t)_{t \ge 1},(\Y^t)_{t \ge 1} \right) &= h \left(\X^{k} \right) +\sum_{t= k+1}^{T-1}\left(h(\X^{t}) -h(\Y^{t})\right)
	&k\geq 0,
\end{align*}
is an unbiased estimator of $\mathbb{E}_\pi[h]$.
Note that $H_k = h \left(\X^{k} \right)$ if $k+1 > T-1$. 
Taking the average of $H_l$ for $l\in\{k,\dots,m\}$, where $k$ is a burn-in value and $m$ a maximum number of iterations, leads to the unbiased estimator \citep{jacob2019unbiased}
\begin{equation}\label{eq:unbiased_estimator} 
		\begin{aligned}
			H_{k:m}\left((\X^t)_{t \ge 1},(\Y^t)_{t \ge 1} \right) &= \frac{1}{m-k+1} \sum_{l=k}^m h(\textbf{X}^l) \\
	  &+ \sum_{l=k+1}^{T-1} \min \left(1, \frac{l-k}{m-k+1}\right) \left(h(\X^{l}) -h(\Y^{l-1})\right)
			&0\le k <m,
		\end{aligned}
	\end{equation}
which coincides with the usual MCMC ergodic average estimate plus a bias correction term.  

Standard guidelines in \cite{jacob2019unbiased} suggest to choose $k$ as a large quantile of the meeting time $T$ and $m$ as a multiple of $k$. Hence, for the method to be most practical, the meeting time should occur as early as possible.

\add{The introduction of a bias correction term makes UMCMC amenable to parallelization: independent unbiased estimates can be computed on different machines and then averaged. It can be shown that this does not lead to a significant increase in variance, in particular if the lagged variants of \eqref{eq:unbiased_estimator} are considered \citep{VanettiDoucet2020}. 
UMCMC also allows to estimate the rate of convergence in total variation distance or Wasserstein distance \citep{llag}.}

\subsection{Blocked Gibbs Sampler kernels}
\label{ssec:bgs_notation}
We now formally define BGS kernels. 
Let $\x=(\x_{1},\dots,\x_{K}) \sim \pi$, with $\pi\in\sP(\sX)$ and $\sX=\sX_1\times\cdots\times\sX_K$ partitioned in $K$ blocks of dimension $I_k$ for $k=1,...,K$, i.e.\ $\x_{k}\in\sX_k \subseteq \mathbb{R}^{I_k}$. 
We indicate by $\x_{-k}=(\x_{j})_{j\neq k}$ the whole vector except the $k$-th block and by $\pi\left(\x_{k}|\x_{-k} \right)$ the so-called full conditional distribution of the $k$-th block.
For a given (possibly random) updating sequence of the blocks $(k_1,\dots,k_s)\in \{1,\dots,K\}^s$ with $s\in\mathbb{N}$, a BGS iteratively samples from the full conditional distributions of each block given the current values of the other blocks. The resulting kernel can be written as the following composition of $s$ kernels
\begin{align}
	P&=P_{k_s} \cdots P_{k_1}\,,
	\label{eq:Gibbs_kernel}\\
	P_k(\x,d\x')
	&=
	\pi(d\x'_{k}|\x_{-k})\delta_{\x_{-k}}(d\x'_{-k})\,,
	\qquad \x\in\sX\,.\label{eq:kernel_k_update}
\end{align}
Various BGS variants can be derived depending on the chosen updating order.
For example, if we consider the sequence $(1,\dots,K)$ we obtain the (deterministic-scan) \emph{forward} version of BGS, which we denote with $P^{(F)}$.
Other natural updating orders include the backward order, as well as the forward-backward or random-scan versions.

\subsection{Two-step couplings}
\label{ssec:2step}

In this paper we consider couplings $\bar{P}$ that follow a two-step strategy: whenever the chains are \enquote{far away} we employ \emph{contractive} couplings $\bar{P}^c$ whose aim is to bring the chains closer to each other; when the chains are \enquote{close enough} we employ maximal couplings $\bar{P}^m$ (strategies to implement maximal couplings can be found in the supplement), which makes the chains meet with a certain probability. A detailed description of this strategy can be found in Section \ref{sec:methodology} below.

Two-step couplings have been previously used in the literature, see e.g.\ \cite{ROBERTS2002, Beskos2005,hmc_eberle,  biswas2019estimating}. The motivation behind this construction is that \emph{one-step} couplings, which aim for exact chain meeting at each step, are generally suboptimal in terms of meeting times \citep{Griffeath75}. The intuitive reason is that high meeting probability and effective contraction are typically separate qualities in couplings: when a maximal coupling fails, preserving marginals might imply sampling distant points in $\mathcal{X}$, thus reducing the probability of meeting in subsequent steps.  
\add{For these reasons, and for mathematical convenience, we mainly focus on \emph{two-step} couplings.
Nevertheless, we empirically observe that \emph{one-step} couplings also perform well in our applications (see Section \ref{ssec:numerics_gcrem}). Providing theory supporting this empirical findings is an interesting direction for future work.}

\subsubsection{Coupling strategies for blocked Gibbs samplers}
\label{sec:methodology}
To implement the two-step strategy described in the previous section, we need to specify $\bar{P}^c[P]$ and $\bar{P}^m[P]$, when $P$ is defined as a composition of kernels, i.e.\ $P=P_{k_s}\cdots P_{k_1}$ as in \eqref{eq:Gibbs_kernel}. A natural strategy is to sequentially compose maximal or optimally contractive couplings of $P_{k_i}$ for $i=1,\dots,s$. We denote the resulting coupling kernels as
\begin{align}
	\bar{P}^{m}((\x,\y), \cdot) &= \bar{P}_{max}[P_{k_s}] \cdots \bar{P}_{max}[P_{k_1}] \left((\x,\y), \cdot\right) &\forall \x,\y \in \mathcal{X},  \label{eq:coupled_kernels_m}\\
	\bar{P}^{c}((\x,\y), \cdot) &= \bar{P}_{W_2}[P_{k_s}] \cdots \bar{P}_{W_2}[P_{k_1}] \left((\x,\y), \cdot \right) &\forall \x,\y \in \mathcal{X}, \label{eq:coupled_kernels_c}
\end{align} 
where $\bar{P}_{max}[P_k]\in \Gamma_{max}[P_k]$ and $\bar{P}_{W_2}[P_k]\in \Gamma_{W_2}[P_k]$ for all $k=1,...,K$. 
A pseudocode for this strategy is provided in Algorithm \ref{alg:2s}, and it is the one we will refer to in the following theoretical analysis and numerical experiments.

By construction, $\bar{P}^{m}((\x,\y), \cdot)$ and $\bar{P}^{c}((\x,\y), \cdot)$ belong to $\Gamma[P]$. 
The appeal of $\bar{P}^{m}$ and $\bar{P}^{c}$ is that, in order to implement them, one needs to work only with the individual full conditionals involved in the original BGS scheme, which are often available in closed form, while the joint distribution $P(\x,\cdot)$ might be harder to work with. 
However these strategies are not guaranteed to be optimal: in general $\bar{P}^{m} \notin \Gamma_{max}[P]$ and $\bar{P}^{c}\notin \Gamma_{W_2}[P]$.
For example, in the case $P=P^{(F)}$, $\bar{P}^{c}$ coincides with the so-called Knothe-Rosenblatt map \citep{rosenblatt1952remarks,knothe1957contributions} of $P^{(F)}(\x,\cdot)$ and $P^{(F)}(\y,\cdot)$, which, in most cases, is different from the optimal transport one \citep[Section 2.3]{santambrogio}.
Nonetheless, we still observe very fast contraction of $\bar{P}^{c}$ in our numerics, which might be partly explained by the fact that in the Gaussian case one indeed has $\bar{P}^{c} \in \Gamma_{W_2}[P]$, as shown in the following lemma, which builds upon well known results about contractive couplings of Gaussian distributions. .
\begin{lem}[Optimality of composition of $W_2$ couplings for Gaussians]
	\label{lem:crn_opt}
	Let $\pi=N(\boldsymbol{\mu},\Sigma)$ and $\bar{P}^{c}$ as in \eqref{eq:coupled_kernels_c}, with $s=K$ and $(k_1,\dots,k_K)$ being a permutation of $(1,\dots,K)$. Then for all integers $n \ge 1$ it holds 
	$\left(\bar{P}^{c}\right)^n \in \Gamma_{W_2}[P^n]$.
\end{lem}

Finally, we defer to the supplementary material a discussion of the relationtship between the block-wise maximal coupling $\bar{P}^{m}$ in \eqref{eq:coupled_kernels_m} and the full maximal coupling of $\bar{P}[P]\in \Gamma _{max}[P]$.

\begin{algorithm}[h!]
	\textbf{Input:} \add{initial distribution $\pi_0$, kernels $P_1$, \dots, $P_K$, $\bar{P}^c$, $\bar{P}^m$, updating order $(k_1,\dots,k_s)\in \{1,\dots,K\}^s$, $\varepsilon$}\\	 
	sample $\X^{-1} \sim \pi_0, \Y^{0} \sim \pi_0$ and  $\X^{0} \sim P_{k_s}\cdots P_{k_1}(\X^{-1}, \cdot)$\\
		\While{$ \X^{t} \ne \Y^{t}$}
	{
		\If{\add{ $\|\X^{t} - \Y^{t}\| > \varepsilon$}}
		{ 
		\For{$k= k_1, \dots,k_s$}{
			$ (\X^{t+1},\Y^{t+1})  \sim  \bar{P}^c[P_k] ((\X^{t},\Y^{t}), \cdot)$
		}}
		\Else
		{
			\For{$k= 1, \dots,K$}{\add{
			$ (\X^{t+1},\Y^{t+1})  \sim  \bar{P}^m[P_k] ((\X^{t},\Y^{t}), \cdot)$}\\
			\If{\add{$k>1$\emph{ and maximal coupling failed for }$k'< k$}}{
				\add{$ (\X^{t+1},\Y^{t+1})  \sim  \bar{P}^c[P_k] ((\X^{t},\Y^{t}), \cdot)$}
			}
		}}
		$t \leftarrow t+1$ \\
	}
	\textbf{Output:} trajectory $(\X^t,\Y^t)_{t\in\{0,\dots,T\}}$
	\caption{\add{Blocked two-step coupling algorithm}}
	\label{alg:2s}
\end{algorithm}

\section{Bounds for couplings of Gaussian Gibbs Samplers}
\label{sec:bound}
In this section, we provide bounds on the expected meeting time of BGS coupled via Algorithm \ref{alg:2s} when the target distribution is Gaussian.

\add{Throughout  this section we take $\sX_k=\mathbb{R}^{I_k}$ for $k=1,\dots,K$, so that $\sX=\mathbb{R}^d$ with $d=I_1+\dots+I_K$, and $\pi=N(\boldsymbol{\mu}, \Sigma)$ a $d$-dimensional multivariate Gaussian.
In this case,  the Markov chain induced by BGS takes the form of a Gaussian auto-regression (see Lemma \ref{lem:rob} below).
Finally, recall that the \emph{relaxation time} of an irreducible $\pi$-reversible kernel $P$ is defined as $T_{rel} = 1 / AbsGap(P)$,  where $AbsGap(P)= 1- \sup_{\lambda \in \sigma(P), |\lambda| \ne 1} |\lambda|$ is the (absolute) spectral Gap of $P$ and $\sigma(P)$ denotes the spectrum of $P$. 
Relaxation times are closely related to mixing times \citep[see e.g.,][Section 12]{levin2017markov} and can be interpreted as the number of iterations needed for the chain to be $\varepsilon$ close to the target distribution, up to a multiplicative factor that depends on $\varepsilon$ and on the starting distribution (see also \cite{T_rel_ros} for more discussion on the link between convergence, asymptotic variances and the spectrum of reversible Markov chains).}

\add{In Section \ref{sec:rel_times}, we will show that $T_{rel}$ provides an upper bound (up to logarithmic factors) on the expected meeting time $\mathbb{E}[T]$.
The underlying assumption is that $T_{rel}$ might be informally seen as providing a lower bound on the size of $\mathbb{E}[T]$.
While we are not aware of rigorous results in this direction, this seems plausible given that $T_{rel}$ provides lower bounds on total variation mixing times \citep[Section 12]{levin2017markov} and that the quantiles of $T$ can be used to derive non-asymptotic upper bounds on those \citep{llag}. 
While interesting, we leave a more detailed and rigorous exploration of lower bounds to $\mathbb{E}[T]$ to future work.
Finally, in Section \ref{sec:bound_gcrem} we will relate the results of this section to the ones available in the literature for crossed random effect models \citep{papaspiliopoulos2018scalable}.}

\begin{lem}\label{lem:rob}
Let  $\pi = N(\boldsymbol{\mu}, \Sigma)$ and $P$ be a BGS kernel with updating order given by $(k_1,\dots,k_s)\in \{1,\dots,K\}^s$ for some $s\in\mathbb{N}$, i.e.\ $P=P_{k_s}\cdots P_{k_1}$ with $P_k$ defined in \eqref{eq:kernel_k_update}.
Then, one has
	\begin{equation}
		\label{eq:gaus_kernel_form}
		P(\x, \cdot) = N \left( B \x + \textbf{b}, \Sigma -B \Sigma B^\top \right),
	\end{equation}
where $B$ depends on the updating order $(k_1,\dots,k_s)$ and on the target precision matrix $Q=\Sigma^{-1}$, and $\mathbf{b}=(I-B)\boldsymbol{\mu}$. 
	Furthermore the relaxation time of $P$ is given by $T_{rel}= 1/(1-\rho(B))$, where $\rho(B)$ denotes the largest modulus eigenvalue of $B$.
\end{lem}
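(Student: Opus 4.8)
The plan is to derive the autoregressive form of the BGS kernel by composing the single-block update kernels $P_k$, and then identify the relaxation time with the spectral radius of the resulting drift matrix $B$. First I would analyze a single update $P_k$. Writing $\pi = N(\boldsymbol{\mu},\Sigma)$ with precision $Q=\Sigma^{-1}$, the full conditional $\pi(\x_{(k)}\mid \x_{(-k)})$ is Gaussian with a mean that is affine in $\x_{(-k)}$ and a covariance equal to the Schur complement $Q_{kk}^{-1}$ (where $Q_{kk}$ is the $k$-th diagonal block of $Q$). Hence $P_k(\x,\cdot) = N(B_k\x + \mathbf{b}_k,\, \Sigma_k)$ for an explicit matrix $B_k$ that acts as the identity on the blocks $\neq k$ and encodes the conditional-mean regression coefficients $-Q_{kk}^{-1}Q_{k,-k}$ on the $k$-th block, with $\mathbf{b}_k=(I-B_k)\boldsymbol{\mu}$ and $\Sigma_k$ a low-rank (in fact block) perturbation. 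Stationarity of $\pi$ under $P_k$ forces $\Sigma_k = \Sigma - B_k\Sigma B_k^\top$, which one can also check directly from the Schur-complement identities.

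Next I would compose. If $\x\sim N(\mathbf{m},V)$ and we apply a Gaussian-affine kernel $N(B_j\x+\mathbf{b}_j,\Sigma_j)$, the output is again Gaussian, and iterating over the scan order $(k_1,\dots,k_s)$ gives $P(\x,\cdot)=N(B\x+\mathbf{b},\,S)$ with $B = B_{k_s}\cdots B_{k_1}$ and $\mathbf{b}=(I-B)\boldsymbol{\mu}$ (the latter because each $\mathbf{b}_{k_j}=(I-B_{k_j})\boldsymbol{\mu}$, so $\boldsymbol{\mu}$ is a fixed point of the affine map, giving $\mathbf{b}=\boldsymbol{\mu}-B\boldsymbol{\mu}$). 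For the covariance $S$, invariance of $\pi$ under $P$ (each $P_{k_j}$ preserves $\pi$, hence so does their composition) yields the Lyapunov-type identity $\Sigma = B\Sigma B^\top + S$, i.e.\ $S = \Sigma - B\Sigma B^\top$, which is \eqref{eq:gaus_kernel_form}. This is cleaner than propagating $\Sigma_k$ through the composition by brute force.

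For the relaxation-time claim I would argue as follows. Iterating \eqref{eq:gaus_kernel_form}, the $t$-step kernel is $P^t(\x,\cdot)=N(B^t(\x-\boldsymbol{\mu})+\boldsymbol{\mu},\, \Sigma - B^t\Sigma (B^t)^\top)$, so the action of $P$ on (square-integrable) test functions is governed by the powers of $B$; concretely, for linear functionals $f(\x)=\mathbf{v}^\top(\x-\boldsymbol{\mu})$ one has $(Pf)(\x)=\mathbf{v}^\top B(\x-\boldsymbol{\mu})$, so $B^\top$ acts on the span of centered linear functions and its eigenvalues lie in $\sigma(P)$. More generally, passing to Hermite-type polynomial bases shows the nonunit part of $\sigma(P)$ is generated by products of eigenvalues of $B$, so $\sup_{\lambda\in\sigma(P),|\lambda|\neq 1}|\lambda| = \rho(B)$, giving $T_{rel}=1/(1-\rho(B))$. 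One subtlety: $P$ is in general \emph{not} reversible for a multi-block deterministic scan, so $B$ need not be symmetric and its spectrum may be complex; hence "$\rho(B)$" must be read as the spectral radius (largest modulus eigenvalue), and $AbsGap$ must be interpreted via the spectrum of the (possibly non-self-adjoint) operator $P$ on the appropriate $L^2(\pi)$ space. I would note this and either restrict the reversibility language or invoke the standard convention that $AbsGap$/$T_{rel}$ are defined through $\sup|\lambda|$ over the spectrum regardless of self-adjointness; the linear-functional computation above already exhibits the dominant mode and is enough to pin down the value.

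The main obstacle I anticipate is the spectral step: cleanly justifying that $\rho(B)$ is exactly the relevant quantity requires either (i) the full decomposition of $\sigma(P)$ acting on $L^2(N(\boldsymbol{\mu},\Sigma))$ into products of eigenvalues of $B$ — which is standard for Gaussian autoregressions but needs care about Jordan blocks of $B$ and about whether eigenvalue products can accidentally reach modulus $1$ — or (ii) a softer argument showing $\rho(B)$ lower-bounds the decay rate (via the linear functionals, which always realize $\rho(B)$) and upper-bounds it (via a norm estimate on $B^t$ translating into geometric ergodicity of $P$ at rate $\rho(B)+o(1)$). Everything else — the single-block Gaussian conditional, the composition of affine-Gaussian kernels, and the Lyapunov identity $S=\Sigma-B\Sigma B^\top$ — is routine linear algebra that I would carry out but not belabor.
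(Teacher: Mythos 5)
The paper does not actually prove this lemma: it states that the result is well known and cites \citet[Lemma 1]{robsahu} for the explicit form of $B$ in the forward case, so there is no in-paper argument to compare against. Your reconstruction is the standard one and is correct in substance: the single-block update is Gaussian-affine with conditional covariance $Q_{kk}^{-1}$ and regression coefficients $-Q_{kk}^{-1}Q_{k,-k}$, composition of affine-Gaussian kernels gives $B=B_{k_s}\cdots B_{k_1}$ with $\boldsymbol{\mu}$ as fixed point, and $\pi$-invariance yields the Lyapunov identity $S=\Sigma-B\Sigma B^\top$ without any need to propagate covariances through the scan. You also correctly isolate the only step that genuinely requires care, namely that $\sup_{\lambda\in\sigma(P),|\lambda|\neq 1}|\lambda|=\rho(B)$; your observation that products of $n\ge 1$ eigenvalues of $B$ have modulus at most $\rho(B)^n\le\rho(B)$ (once $\rho(B)<1$) disposes of the worry about higher-order Hermite modes dominating. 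The tension you flag between the paper's definition of $T_{rel}$ (stated for $\pi$-reversible kernels) and the lemma's generality (arbitrary, possibly non-reversible scan orders) is a genuine quirk of the paper's own presentation, not a gap in your argument, and your proposed resolution via the spectral radius of the non-self-adjoint operator is the standard convention adopted in \citet{robsahu}.
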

Lemma \ref{lem:rob} is a well-known result, whose proof we omit, see e.g.\ \citet[Lemma 1]{robsahu} for the explicit expression of $B$ in the forward updating case, corresponding to $s=K$, $k_i=i$ for all $i$ and $P=P^{(F)}$.

\subsection{Bound for reversible chains}
\label{ssec:bound_pirev}
Our first bound applies to $\pi$-reversible BGS kernels, i.e.\ one where the updating order in Algorithm \ref{alg:2s} satisfies $(k_1,\dots,k_s)=(k_s,\dots,k_1)$.
A classical example is the forward-backward kernel, defined as 
$P^{(FB)}=P_1\cdots P_{K-1} P_K P_{K-1}\cdots P_1$. 
Algorithmically, $P^{(FB)}$ performs updates from $\pi(\x_{k}\mid \x_{-k})$ sequentially for $k=1,\dots,K-1,K,K-1,\dots,1$.  
If $P$ is a $\pi$-reversible BGS kernel, it holds $\Sigma B^\top = B \Sigma$, with $B$ as in  Lemma \ref{lem:rob} \citep[see e.g.\ Proposition 4.27 of ][]{Khare2009RATESOC}. This allows for neater theoretical results.  
In the supplementary material, we extend the result to non-reversible Gibbs samplers, such as those generated by the forward kernel $P^{(F)}$, where the result requires additional technical assumptions.

\add{We introduce the notation required for our main result.
For a given covariance matrix $\Sigma$, we denote its inverse by $Q=\Sigma^{-1}$, and by $Q_{kk}$ the $k$-th diagonal block of $Q$. Additionally, we define $\Delta = diag (Q_{11} ,\dots, Q_{KK} )$ as a block-diagonal matrix, and $\bar{Q} = \Delta ^{-1/2} Q \Delta ^{-1/2}$.}

\begin{thm}[Bound for reversible chains]
	\label{thm:bound_expected_rev}
\add{Let $\pi=N(\boldsymbol{\mu},\Sigma)$ and $(\X^{t}, \Y^{t})_{t\ge 0}$ be a Markov chain marginally evolving with $\pi$-invariant BGS kernel and coupled via Algorithm \ref{alg:2s}. Assume $\varepsilon$ satisfies
\begin{equation}
  \label{eq:eps_cond_block}
  \varepsilon \leq \frac{\sqrt{\pi / 2}}{K \max (1 , \rho(\bar{Q})) \, \sqrt{\rho (\Delta )}}\,,
\end{equation}
with $Q$, $\Delta$, and $\bar{Q}$ defined above. Then $T= \min\{ t \ge 0\,:\, \X^{t} = \Y^{t}\}$ satisfies
	\[
		\mathbb{E}[T \mid \X ^0 , \Y^0] \leq 3 + \frac{4 + \log \|\X^{0}-\Y^{0}\|- 2\log \varepsilon + 2.5\log \kappa(\Sigma) +  \log K + \log \rho(\Delta ^{-1}) }{-\log \rho(B)} \,,
	\]
with $B$ as in \eqref{eq:gaus_kernel_form}.
If \eqref{eq:eps_cond_block} holds with equality, the upper bound can be simplified to
\begin{equation}\label{eq:bound}
	\mathbb{E}[T \mid \X ^0 , \Y^0] \leq 3 + \frac{4+\log \|\X^{0}-\Y^{0}\|+3\log K   +6  \log \kappa(\Sigma) }{-\log \rho(B)}\,.
\end{equation}
}
\end{thm}

\add{We make some comments on the upper bound in \eqref{eq:bound}. By Lemma \ref{lem:crn_opt}, the contraction coupling $\bar{P}^c$ employed in Algorithm \ref{alg:2s} is $W_2$ optimal for Gaussian targets. As a consequence, the BGS kernel reduces the expected distance between the chains at geometric rate $\rho(B)$, which appears in the denominator.
Since the BGS kernel contracts exponentially fast, the term in the numerator appears in logarithmic form.
In particular, the $\log K$ factor accounts for the number of maximal couplings that needs to be successful for the two chains to meet.
The $\log \kappa(\Sigma)$ terms reflect the fact that the distance between the chains in Algorithm \ref{alg:2s} is measured with the euclidean distance instead of the ``correct'' metric induced by $\Sigma$, namely $d(\x, \y) = \sqrt{(\x-\y)^\top \Sigma^{-1} (\x-\y)}$.
For this reason, even if the Gibbs sampler is invariant under block diagonal linear transformations that preserve the $K$-partite block structure, the distribution of $T$ and the bound in \eqref{eq:bound} are not.}

For the proof of Theorem \ref{thm:bound_expected_rev}, we used the following bound on the expected squared distance between Gaussian distributions coupled via maximal reflection coupling (whose description can be found in the supplementary material),
which may be of independent interest.
\begin{lem}
	\label{lem:bound_gen}
\add{Let $p=N(\bb \xi, \Sigma)$ and $q=N(\bb\nu, \Sigma)$ be $d$-dimensional Gaussians, and $(\X,\Y)\in \Gamma_{max}(p,q)$ coupled via maximal reflection coupling. 
Then for every $A \in \mathbb{R}^{m \times d}$, we have 
\begin{align*}
    2\dfrac{\|A(\bb \xi-\bb \nu)\|}{\left\|\Sigma^{-1 / 2}(\bb \xi-\bb \nu)\right\|^{2}} 
    & \leq
    \mathbb{E}\left[\|A(\X-\Y)\|\mid \X \neq \Y\right]
   \leq\dfrac{\|A(\bb \xi-\bb \nu)\|}{\left\|\Sigma^{-1 / 2}(\bb \xi-\bb \nu)\right\|^{2}} \,\times
    \\
    &
    \left(2\,e ^{- \frac{1}{2}\left\|\Sigma^{-1 / 2}(\bb \xi-\bb \nu)\right\|^{2}} + \sqrt{\frac{8}{\pi}} \left\|\Sigma^{-1 / 2}(\bb \xi-\bb \nu)\right\| + \left\|\Sigma^{-1 / 2}(\bb \xi-\bb \nu)\right\|^ 2\right)
\,.\end{align*}
}
\end{lem}
\add{Note that, for fixed $\Sigma$ and $A$, Lemma \ref{lem:bound_gen} shows that the expectation of $\| A(\X - \Y )\|$ scales as $\bigo{\|\boldsymbol{\xi}-\boldsymbol{\nu}\|^{-1}}$ as $\|\boldsymbol{\xi}-\boldsymbol{\nu}\| \rightarrow 0$.
Lemma \ref{lem:bound_gen} is also informative in the case where the maximal coupling is attempted when the means are ``far away''. It shows that, even if the coupling is unsuccessful, $\| A(\X - \Y )\|$ does not diverge and is of the same order of $\|A(\boldsymbol{\xi}-\boldsymbol{\nu})\|$, as $\|\boldsymbol{\xi}-\boldsymbol{\nu}\| \rightarrow +\infty$. This partially explains the success of the one-step coupling in the experiments of Section \ref{ssec:non_gauss_crem_simulations} and \ref{sec:pmf}. However, extending the analysis to maximal coupling across successive blocks is particularly challenging and we leave such investigation to future work.
}

\subsection{Connection to relaxation times}\label{sec:rel_times}
Theorem \ref{thm:bound_expected_rev} can be interpreted in terms of the relaxation time $T_{rel}$ of the BGS kernel.
\begin{cor}
	\label{cor:bound_rev}
	Under the assumptions of Theorem \ref{thm:bound_expected_rev}, denoting $T_{rel}=1/(1-\rho(B))$, we have
	\begin{equation}\label{eq:bound_pirev_t_rel}
		\mathbb{E}[T|\X^{0}, \Y^{0} ] \leq 3 + T_{rel} \left[ \log \left(4\, \| \X ^0 - \Y ^0 \| \, K ^3 \, \kappa (\Sigma) ^6 \right) \right]\,.
	\end{equation}
\end{cor}

Corollary \ref{cor:bound_rev} provides interesting insights and implications. 
In particular, interpreting $T_{rel}$ as the number of iterations required for $\X^{t}$ to converge, it suggests that in this context UMCMC provides unbiased estimates with an average number of iterations (and an overall computational cost) that is comparable to the minimal number of iterations required by standard MCMC to converge (up to a logarithmic factor).
Also, from a high-dimensional asymptotics perspective, it also implies that whenever the relaxation time of BGS is bounded as the number of data points and parameters grows (see e.g.\ Section \ref{sec:bound_gcrem}), then also the meeting time is bounded in expectation, meaning that UMCMC does not increase the overall complexity (while allowing for e.g.\ early stopping and parallelization).
On the other hand, \eqref{eq:bound_pirev_t_rel} implies that whenever the meeting times of the \emph{two-step} strategy diverge for a chosen BGS scheme, also the respective $T_{rel}$ diverges. 

\subsection{Bounds on tails of the meeting time distribution}
\label{sec:tail_bound}

\add{One of the main advantages of UMCMC is that it allows for early stopping of a Markov chain as soon as the meeting time occurs. As a consequence, improvements in precision are obtained not by extending a single run, but by generating multiple unbiased estimators in parallel. In this context, the upper bound of Theorem \ref{thm:bound_expected_rev} on the average meeting time is only partially informative: such a bound is most relevant when estimators are produced sequentially. The following result shows that the tail of the meeting time distribution decreases exponentially fast, up to a logarithmic factor, providing more direct theoretical guarantees for efficient parallelization of UMCMC schemes.}

\begin{thm}\label{thm:tail_bound_block}
\add{Under the assumptions of Theorem \ref{thm:bound_expected_rev}, 
for every 
  \[
    t> \max \left(\alpha ^{-1} ,\, \left\lceil\frac{\log \left\|\X^{0}-\Y^{0}\right\| + \frac{1}{2}\log \kappa (Q) -\log (\varepsilon)}{-\log \rho(B)}\right\rceil \right)\,
   \]
   with $\alpha = - \log \rho (B) \kappa (Q)^2 \rho(\Delta^{-1})\varepsilon^{-2}$, we have
\begin{equation}\label{eq:tail_bound}
    Pr ( T > 1 + 2t ) \leq 14\,\exp \left( \frac{t \,\log \rho (B)}{2\,\log (\alpha t) + 7 }\right)\,.
  \end{equation}
}
\end{thm}

\add{Theorem \ref{thm:tail_bound_block} implies that $\mathbb{E}[T^k] < +\infty$ for any $k>1$, and thus, by \citet[Theorem 2.1]{atchade2024unbiasedmarkovchainmonte}, it follows that the unbiased estimator \eqref{eq:unbiased_estimator} of $\mathbf{E}_\pi [h]$, with $h \in L^m(\pi)$, has finite moments of order $p$ for any $p<m$.}

\section{Application to Gaussian crossed effect models}\label{sec:crem}
In this section, we combine the findings of Section \ref{sec:bound} with existing results on Model \ref{ex:gcrem}. 
We highlight that all the theoretical results we will derive hold under the assumption of fixed $\boldsymbol{\tau}$ in \eqref{eq:gauss_crem} of Model \ref{ex:gcrem}.
We first describe in Section \ref{ssec:cgs} state-of-the-art marginal algorithms for Model \ref{ex:gcrem}, then present in Section \ref{sec:bound_gcrem} the resulting bound on the meeting times if a two-step coupling is implemented, and finally report numerical simulations in Section \ref{ssec:numerics_gcrem}.

\subsection{Collapsed Gibbs sampler for Model \ref{ex:gcrem}}
\label{ssec:cgs}
Despite the favorable cost per iteration of the vanilla Gibbs sampler for Model \ref{ex:gcrem} presented in Section \ref{ssec:comp_cost}, there are many settings of interest where its mixing is provably poor, often leading to a super linear overall computational cost. \cite{papaspiliopoulos2018scalable} noted that integrating out the global mean $\mu$  while updating the remaining regression parameters in $K$ blocks, leads to a much more efficient (i.e faster mixing) updating scheme, while preserving the same $\bigo{N}$ cost per iteration of vanilla BGS. The resulting algorithm is called \emph{collapsed Gibbs sampler} \citep{papaspiliopoulos2018scalable} and reported in  Algorithm \ref{alg:cg}: at every iteration we first sample from $\mathcal{L}(\mu|\mathbf{a}_{-k}, \boldsymbol{\tau}, \mathbf{y})$ and then iteratively update the factor effect block from $\mathcal{L}(\mathbf{a}_{k}| \mu, \mathbf{a}_{-k}, \boldsymbol{\tau}, \mathbf{y})$, repeating the procedure for $k=1,..,K$.
\begin{algorithm}
	\For{k= 1,...,K}
	{
		draw $\mu \sim \mathcal{L}(\mu | \mathbf{a}_{-k}, \boldsymbol{\tau}, \mathbf{y}) 
		$\\
		\For{ i=$1,..., I_k$}
		{draw $a_{k,i} \sim \mathcal{L}(a_{k,i}|\textbf{a}_{-k},\mu, \boldsymbol{\tau}, \textbf{y})
			$ 
		}
		draw $\tau_k \sim \mathcal{L}(\tau_{k}|\textbf{a}, \mu, \boldsymbol{\tau}_{-k}, \textbf{y})$
	}
	draw $\tau_0 \sim \mathcal{L}(\tau_0|\textbf{a}, \mu, \boldsymbol{\tau}_{-0}, \textbf{y})$
	\caption{One iteration of the collapsed Gibbs sampler for Model \ref{ex:gcrem}}
	\label{alg:cg}
\end{algorithm}

\subsection{Bound on meeting times under random design assumptions}\label{sec:bound_gcrem}
\label{ssec:bound_crem}
For Model  \ref{ex:gcrem} with $K=2$ factors, balanced level designs (i.e.\ the same number of observations is observed for every level of each factor) and fixed $\boldsymbol{\tau}$, \cite{papaspiliopoulos2018scalable} show that the relaxation time of the collapsed algorithm, denoted by $T_{cg}$, is upper bounded by $T_{cg} \le C \, T_{aux}$, where $C = 1+ \frac{\tau_0}{\min\{ \tau_1, \tau_2\} }$ is constant with respect to $N$ and $p$, and $T_{aux}$ is the relaxation time of the auxiliary two-block Gibbs sampler on the discrete space $\{ 1,..., I_1\} \times \{1,..., I_2\}$ with invariant distribution $\Pr\left((i,j) \right)=n_{ij}/N$, where $n_{ij}=\sum_{n=1}^N\mathbb{I}(i_1[n]=i)\mathbb{I}(i_2[n]=j)$ denotes the number of observations of level $i$ of factor 1 and $j$ of factor 2.
Under random design assumptions, the quantity $T_{aux}$ can be bounded using random graph theory results, as done in \cite{tim_scalable}. In particular, for $N$ multiple of $d_1$ and $d_2$, denote by $\mathcal{D}(N,d_1,d_2)$ the collection of all the possible observation patterns with exactly $N$ observations, $I_1=N/d_1$, $I_2=N/d_2$ and binary balanced levels (i.e.\ there must be exactly $d_1$ and $d_2$ observations for each level of factor 1 and 2 respectively and $n_{ij}\in \{0,1\}$ for all $i=1,...,I_1$ and $j=1,...I_2$). 
Then, supposing uniformly at random designs among $\mathcal{D}(N, d_1,d_2)$ with $d_1,d_2 >4$, one has
$
T_{aux} \le 1 + 2 (\min\{d_1,d_2\}-2)^{-\frac{1}{2}} + \gamma, 
$
asymptotically almost surely as $N \rightarrow + \infty$, for every $\gamma > 0$. The result follows from relating $T_{aux}$ to the spectrum of a random bipartite bi-regular graph, and then applying an extension of the Friedman's second largest eigenvalue theorem to bipartite graphs developed in \cite{Brito2018SpectralGI}. \add{Combining the above with Corollary \ref{cor:bound_rev}, we obtain the following bound for the expected meeting time, which notably extends to the case $P= P^{(F)}$ in the case $K=2$.}
\begin{cor}\label{cor:gcrem_bound}
	\add{Let $\pi=N(\boldsymbol{\mu}, \Sigma)$ be the posterior distribution of Model \ref{ex:gcrem} with $K=2$ factors, fixed $\boldsymbol{\tau}$.
	Let $\left(\X^{t},\Y^{t}\right)_{t\ge 0}$ be a Markov chain marginally evolving with $P^{(F)}$, coupled via Algorithm \ref{alg:2s}.
	Then, if $\varepsilon$ satisfies \eqref{eq:eps_cond_block}, it holds
	\[\Pr\left(\mathbb{E}[T|\X^{0},\Y^{0} ] \leq
	6 + C(\bb \tau , \, \| \X ^0 - \Y ^0 \|,\, \kappa(\Sigma))
 \left( 1 + \frac{2}{\sqrt{\min\{d_1,d_2\}-2}} + \gamma  \right)
 	\right)\to 1\,, \]
as $N \rightarrow + \infty$, with $C(\bb \tau , \, \| \X ^0 - \Y ^0 \|,\, \kappa(\Sigma)) = \left( 1+ \frac{\tau_0}{\min\{ \tau_1, \tau_2\} }\right) \log \left(32\, \| \X ^0 - \Y ^0 \| \, \kappa (\Sigma) ^6 \right)$, and the probability is with respect to a random design $(n_{ij})_{ij}\sim \mathcal{D}(N,d_1,d_2)$.
}
\end{cor}

Interestingly, Corollary \ref{cor:gcrem_bound} provides an upper bound on the average coupling time that remains bounded as $N$ and $p$ diverge.

\subsection{Numerics}\label{ssec:numerics_gcrem}
We compare the bound of Theorems \ref{thm:bound_expected_rev} with the average meeting time of simulated coupled chains for both the vanilla and collapsed Gibbs samplers of Sections \ref{ssec:comp_cost} and \ref{ssec:cgs}, for synthetic and real data in Section \ref{sssec:sim_data} and \ref{sssec:real_data} respectively.
Although the bound is valid only for Gaussian chains (Model \ref{ex:gcrem} with fixed $\boldsymbol{\tau}$), we compare them with the average meeting time of coupled chains with known (i.e.\ fixed) as well as unknown (i.e.\ assigning to it a prior and including it into the Bayesian model) $\boldsymbol{\tau}$, yielding similar behaviors.
The results support the intuition that, for the models under consideration, the convergence properties of the known and unknown variances case are similar and the bounds are reasonably predictive also of the behavior in the practically-used unknown variance case.

We implement the two-step coupling procedure of Algorithm \ref{alg:2s} for both the vanilla and the collapsed Gibbs sampler of Algorithm \ref{alg:cg}, with the standard priors in \eqref{eq:gauss_crem}. 
As $\bar{P}_{max}[P_k]$, we use the  maximal reflection coupling whenever implementable, and maximal rejection coupling otherwise. As $\bar{P}_{W_2}[P_k]$ we use the \emph{common random number} (crn) coupling. More details are reported in the supplementary material. 
The  threshold parameter $\varepsilon$ in Algorithm \ref{alg:2s} is set to $\bigo{ (K \, I)^{-1}}$.

\subsubsection{Simulated data}
\label{sssec:sim_data}
We simulate data according to Model \ref{ex:gcrem}, with $\tau_0=\tau_1=...=\tau_k = 1$, $I_1=...=I_K=I$ for fixed $I$,  and different number of factors $K$.
Observations are generated according to two different asymptotic regimes, with completely missing at random designs.
\begin{regime}
	\label{reg1}
 Each combination of factor levels is observed once or not with probability $\prob=0.1$, independently from the rest, i.e.\ $n_{ij}^{(s,l)}\stackrel{iid}\sim Bern(\prob)$ for $i=1,\dots,I_s, \, j=1,\dots,I_{l}$ and $s \ne l \in \{1,...,K\}$. Here $n_{ij}^{(s,l)}=\sum_{n=1}^N\mathbb{I}(i_s[n]=i)\mathbb{I}(i_l[n]=j)$ denotes the number of observations of level $i$ of factor $s$ and $j$ of factor $l$. In this regime $ I = \bigo{N^{1/K}}$. 
\end{regime}
\begin{regime}
	\label{reg2}
 Same as {Regime 1} but with $\prob= 10/I^{K-1}$. This regime induces more sparsity and one has $ I = \bigo{N}$. 
\end{regime}

We plot the average of the meeting times as a function of the total number of parameters of the model, i.e.\ $1+KI$ plus the number of scale parameters if any. 
Figure \ref{fig:two} reports results for the collapsed Gibbs sampler (left) and vanilla BGS (right), for $K=2$, $I_1= I_2=I \in\{50, 100,250,500,1000 \}$ levels, Regime \ref{reg1}, fixed and free variances.
For each scheme, the correspondent bound of Theorem \ref{thm:bound_expected_rev} is also reported, using the true data generating values for the variance parameters.
\begin{figure}[h!]
	\centering
	\begin{subfigure}{.5\textwidth}
		\centering
		\includegraphics[width=\linewidth]{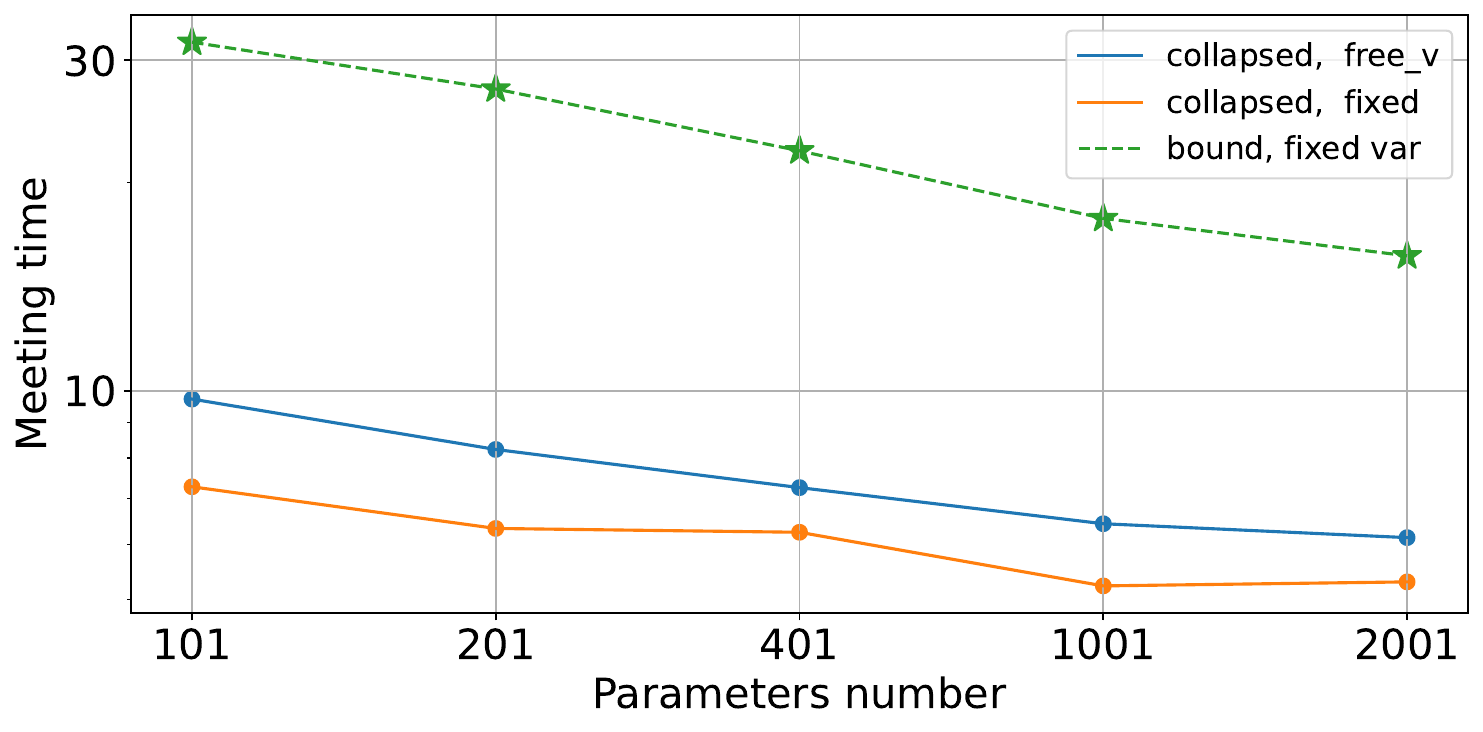}
	\end{subfigure}%
	\begin{subfigure}{.5\textwidth}
		\centering
		\includegraphics[width=\linewidth]{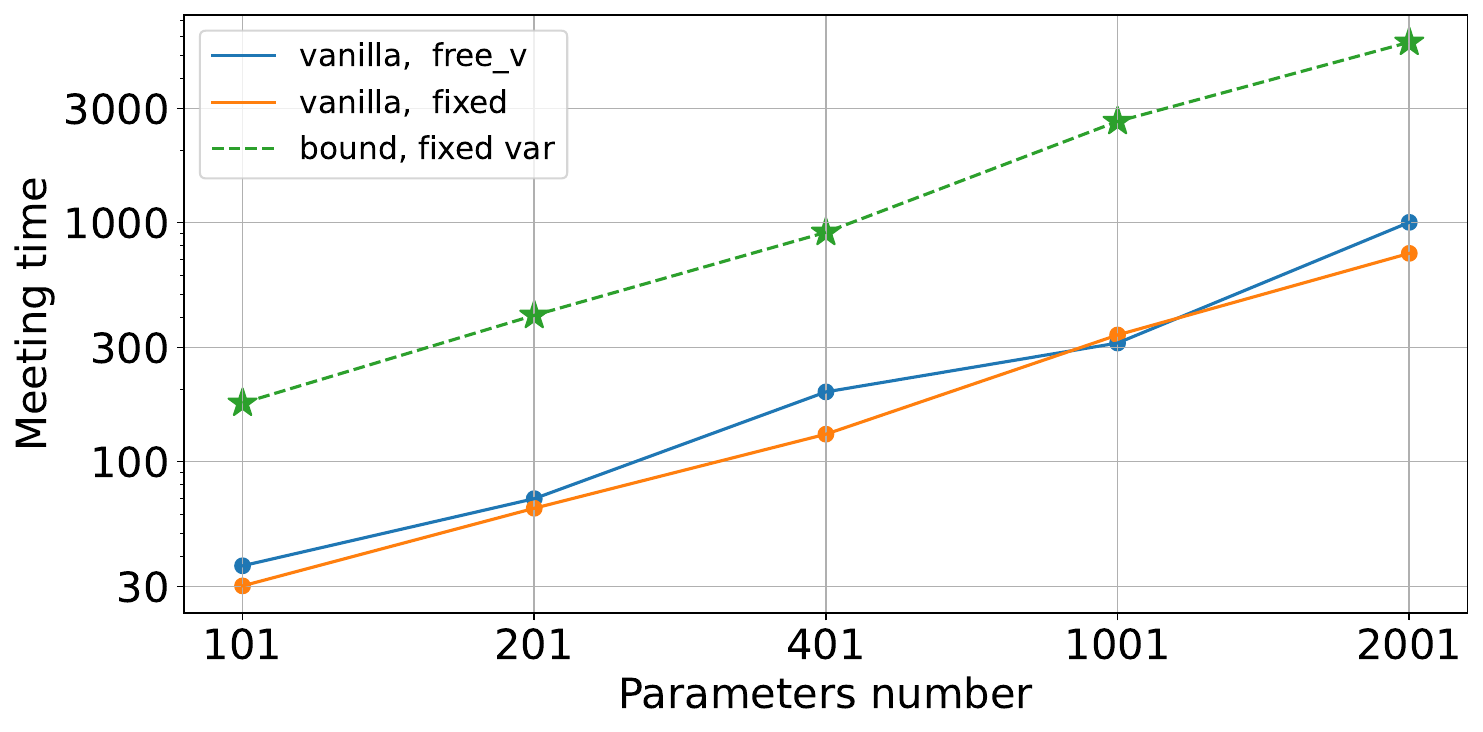}
	\end{subfigure}
	\caption{\add{Average meeting times and estimated bounds in a log-log scale for $K=2$, $I_1 = I_2$, $\tau_0=\tau_1 = \tau_2 =1$, Regime \ref{reg1}. Left: Algorithm  \ref{alg:cg}, right: vanilla BGS.}}
	\label{fig:two}
\end{figure}
The results yield remarkably low meeting times and highlight a close resemblance of the meeting time behavior with that of the bound. As expected, the provably higher relaxation time of the vanilla Gibbs scheme results in a higher bound and, more importantly, in a higher average meeting time of the coupled chains.

\add{In Figure \ref{fig:diff_k} we report the average meeting time for $K=4$ factors, for Regime \ref{reg1} (left) and Regime \ref{reg2} (right), and the bound of Theorem \ref{thm:bound_expected_rev}.} For these models, the relaxation time is not computable explicitly even under the usual simplifying assumptions (fixed variances and balanced levels or cells), see e.g.\ \cite{papaspiliopoulos2018scalable}. Thus proving the scalability of the meeting times (or lack thereof), in light of Theorem \ref{thm:bound_expected_rev}, provides interesting insights on the mixing properties of the single chains themselves.

\begin{figure}[h!]
\centering
\begin{subfigure}{.5\textwidth}
 \centering
 \includegraphics[width=\linewidth]{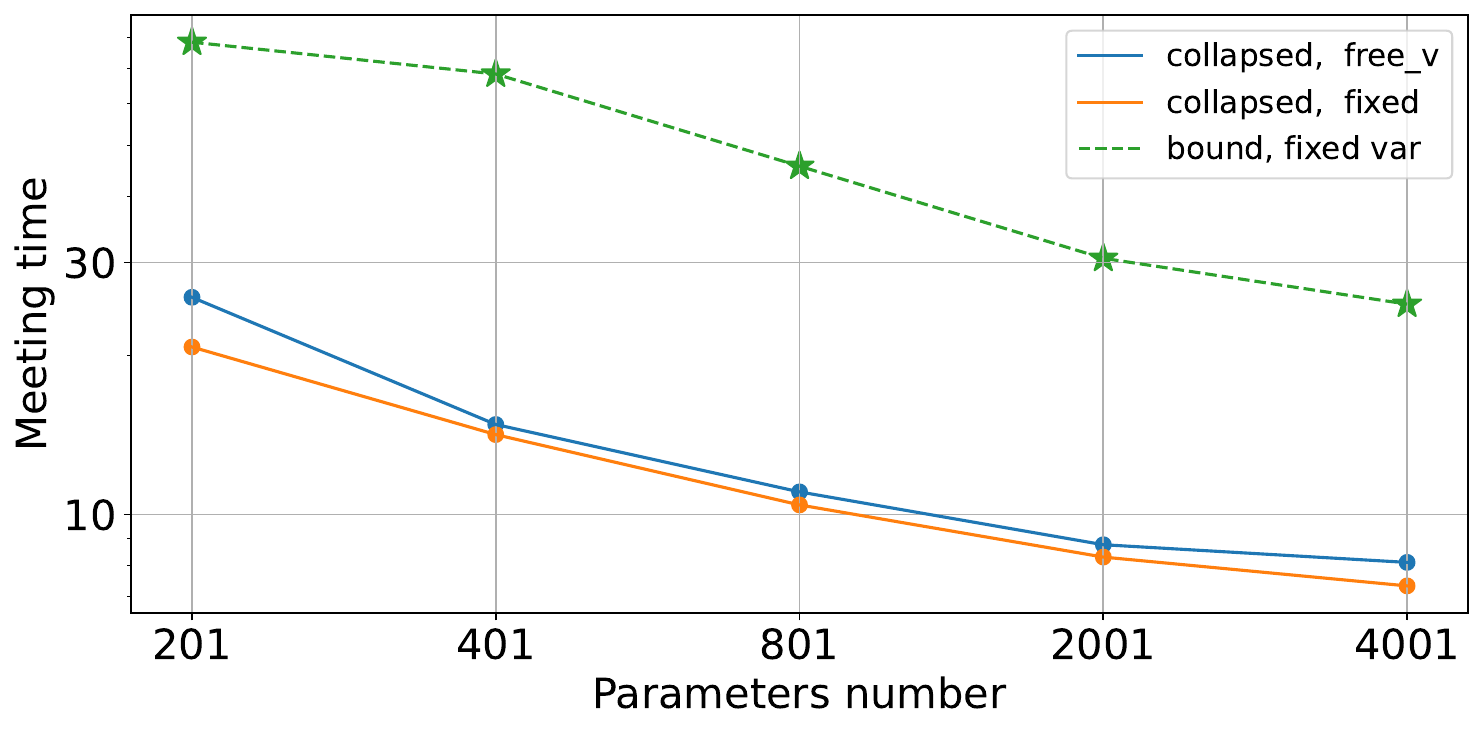}
\end{subfigure}%
\begin{subfigure}{.5\textwidth}
 \centering
 \includegraphics[width=\linewidth]{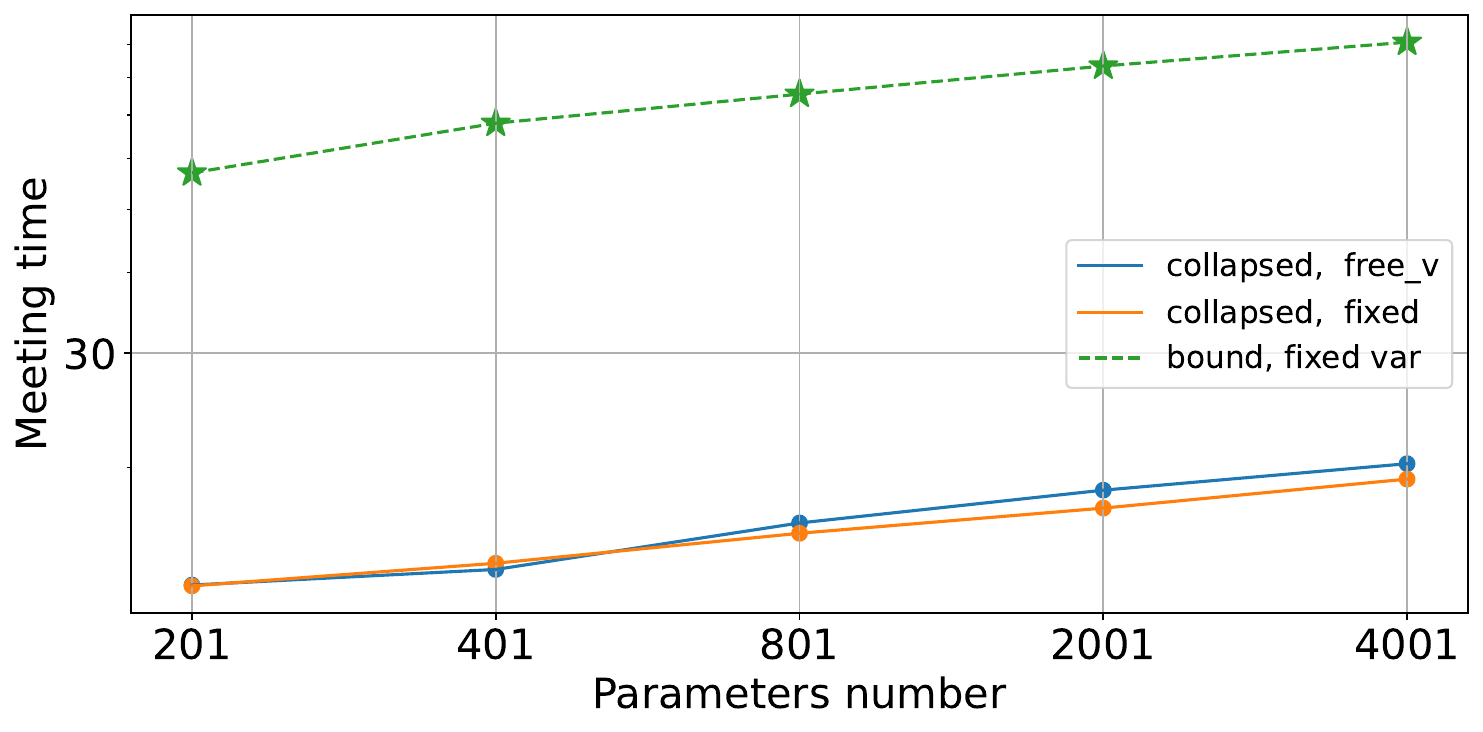}
\end{subfigure}
\caption{\add{Average meeting times and estimated bounds in a log-log scale for $K=4$, $I_1= ... = I_4$,  $\tau_k = 1$ for $k\in\{0,1,2,3,4\}$. Regime \ref{reg1} (left), Regime \ref{reg2} (right), Algorithm \ref{alg:cg}.} }
\label{fig:diff_k}
\end{figure}

\add{The blessing of dimensionality under Regime \ref{reg1} apparent in Figures \ref{fig:two} and \ref{fig:diff_k} is coherent with Corollary \ref{cor:gcrem_bound}, which shows that expected meeting time decreases as $\bigo{1 + d ^{-1/2}}$ as $N \to +\infty$, where $d$ is the number of observation per level. Under Regime \ref{reg1}, the expected value of $d=I_1 \prob$, increases with $N$, and $\mathbb{E}[T]$ should decrease accordingly. A more extensive analysis can be found in \cite{papaspiliopoulos2018scalable,andrea2024}.}

\begin{rem}
	Interestingly, unlike many sampling algorithms (such as gradient-based ones), no adaptation of tuning parameters is required for BGS.
	This couples particularly well with the UMCMC framework: specifically, it avoids the need for potentially long preliminary runs or adaptation phases, thus genuinely allowing for parallelizable short runs and \enquote{early stopping} in case of fast mixing chains, see also \cite{biswas2019estimating} for analogous examples.
\end{rem}

\subsubsection{Real data example}
\label{sssec:real_data}
We now consider a real dataset, 
containing university lecture evaluations by students at ETH Zurich. The dataset is freely available from the R package \textbf{lme4} \citep{insteval} under the name \enquote{InstEval}. Each observation includes a score ranging from 1 to 5, assigned to a lecture, along with 6 factors that may potentially impact the score, including the identity of the student giving the rating or department that offers the course. Following the notation in \eqref{eq:gauss_crem}, we have $N = 73421, K = 6$ and $(I_1,...,I_6) = (2972,1128,4,6,2,14)$. 
We compute the estimated meeting times for different numbers and combinations of factors for Model \ref{ex:gcrem} with fixed variances (estimated via standard MCMC and plugged in the coupling procedure). We numerically computed the bound for each combination using the MCMC variance estimates.
\add{Table \ref{tab:res} reports the results obtained when including the first two and the first five factors of the dataset. We exclude the last factor because it is nested in the second one, leading to additional complications beyond the scope of this work (see \cite{pfvi} and \cite{andrea2024} for further discussion).}
\begin{table}
\begin{center}
	\begin{tabular}{ |c|c|c|c| } 
	\hline
	 Algorithm & Factor number & mean \#iter & bound for fixed $\boldsymbol{\tau}$\\
	\hline
	\multirow{2}{4em}{collapsed collapsed} & [1,2] & \add{6.9} & \add{36.1} \\ 
	& \add{[1,2,3,4,5]} & \add{9.3} & \add{39.7} \\ 
	\hline
	\multirow{2}{4em}{vanilla vanilla} & [1,2] & \add{22.8} & \add{85.3}\\ 
	& \add{[1,2,3,4,5]} & \add{68.3}& \add{113.4}\\ 
	\hline
	\end{tabular}
\end{center}
\captionof{table}{Average meeting times for InstEval Dataset}
\label{tab:res}
\end{table}

\section{High-dimensional GLMMs with crossed effects}
\label{sec:ngcrem}
We now consider applications to Model \ref{ex:ngcrem}. 
First, Section \ref{ssec:alg_ngcrem} reviews state-of-the-art samplers and their computational cost for this class of models, and briefly discusses our coupling strategy, which requires to extend the methodology of Section \ref{sec:methodology} to the case of Metropolis-within-Gibbs algorithms. Then Section \ref{ssec:non_gauss_crem_simulations} reports experimental results on simulated data.

\subsection{Algorithms for Model \ref{ex:ngcrem}}
\label{ssec:alg_ngcrem}
Similarly to what seen for Model \ref{ex:gcrem} in Section \ref{ssec:comp_cost}, also for Model \ref{ex:ngcrem} the vanilla Gibbs procedure, i.e.\ the one updating from the full conditionals $\mu, \mathbf{a}_{1}, ...,\mathbf{a}_{K}, \boldsymbol{\tau}$, suffers from slow mixing, due to strong posterior correlation among random and fixed effects.
Given the impossibility of analytically integrating out the global mean, \cite{tim_scalable} propose to perform at each iteration a $\mathcal{L}\left(\mu, \mathbf{a}_{k} | \mathbf{y}, \boldsymbol{\tau}, \mathbf{a}_{-k}\right)$-invariant update using local centering within each block, hence updating a new pair of variables $(\mu, \boldsymbol{\xi}_{k})$, where $\boldsymbol{\xi}_{k} = \mu + \mathbf{a}_{k}$.
For the re-parametrized model it holds that $\mathcal{L}(\mu| \mathbf{a}_{-k}, \boldsymbol{\xi}_{k}, \boldsymbol{\tau}, \mathbf{y}) = \mathcal{L}(\mu| \boldsymbol{\xi}_{k})$ and that the $\boldsymbol{\xi}_{k}$ are conditionally independent, although their full conditional might not be available in closed form. In such cases one can replace exact Gibbs updates of $\boldsymbol{\xi}_{k}$ with more general invariant Markov updates. The resulting scheme is described in Algorithm \ref{alg:non_gauss}.
\begin{algorithm}
	\For{k= 1,...,K}
	{
		reparametrize $ (\mu, \mathbf{a}_{k}) \rightarrow (\mu, \boldsymbol{\xi}_{k})$ \\
		draw $\mu$ from $\mathcal{L}(\mu | \boldsymbol{\xi}_{k}) 
		$\\
		\For{ i=$1,..., I_k$}
		{update ${\xi}_{k,i}$ with a $\mathcal{L}({\xi}_{k,i}|  \mu, \mathbf{a}_{-k}, \boldsymbol{\tau},\mathbf{y})$-invariant Markov kernel
		}
		reparametrize $ (\mu, \boldsymbol{\xi}_{k}) \rightarrow (\mu, \mathbf{a}_{k}) $ \\
    draw ${\tau}_k$ from $\mathcal{L}(\tau_k |\mu, \textbf{a}, \textbf{y})$\\
	}
	\caption{One iteration of Metropolis-within-Gibbs sampler with local centering for Model \ref{ex:ngcrem}}
	\label{alg:non_gauss}
\end{algorithm} 

For Gaussian targets and fixed $\boldsymbol{\tau}$, Corollary 1 in \cite{tim_scalable} shows that $T_{cg} < T_{lc} < T_{cg}+C,$ where $T_{cg}$ and $T_{lc}$ denote the relaxation times of Algorithms \ref{alg:cg} and Algorithm \ref{alg:non_gauss}, respectively, and $C$ is a constant depending only on $\boldsymbol{\tau}$. 
The previous inequality allows to directly extend the results developed in Section \ref{ssec:bound_crem} for the collapsed Gibbs scheme to the local centering version in Algorithm \ref{alg:non_gauss}. 
Although the inequality holds only for Gaussian targets, numerical results in \cite{tim_scalable} show that also in the non conjugate case, where sampling of $\xi_{k,i}$ is done through Metropolis-Hastings updates, the convergence speed remains bounded as $N$ and the number of parameters increase.

\begin{rem}[Couplings of Metropolis-Hastings]
Analogously to Model \ref{ex:gcrem}, $\mathcal{L}(\boldsymbol{\xi}_{k}| \mu, \mathbf{a}_{-k}, \boldsymbol{\tau},\mathbf{y})$ factorizes as $\prod_{i=1}^{I_k} \mathcal{L}({\xi}_{k,i} |\mu, \mathbf{a}_{-k}, \boldsymbol{\tau},\mathbf{y})$. The difference is that $\mathcal{L}({\xi}_{k,i} |\mu, \mathbf{a}_{-k}, \boldsymbol{\tau},\mathbf{y})$ are not available in closed form and thus we update each ${\xi}_{k,i}$ with a MH step in Algorithm \ref{alg:non_gauss}.
\add{Nevertheless, the special structure of these models (with few blocks, each with weak conditional dependence) facilitates the design of efficient couplings. E.g., in the supplementary material, we show that leveraging conditional independence in the coupling of the MH kernels allows to have meeting times that grow at most logarithmically with $I_k$.}
In the MH case, however, there is additional flexibility \citep{OlearyWang}, such as deciding how to couple both the proposal and acceptance steps as well as which of those to factorize.
In the supplement, we discuss these aspects in some detail, suggesting to use a fully factorized strategy both on the proposal and acceptance.
\add{Note that implementing efficient contractive couplings of MH kernels is particularly challenging. Alternatively, one could consider rejection-free kernels, such as slice samplers \citep[see e.g.][]{biswas2019estimating}. We leave the investigation of such strategies to future work.
In the following experiments, we avoid the two-step strategy of Algorithm \ref{alg:2s} and instead use a one-step approach. }
\end{rem}

\subsection{Numerical results}
\label{ssec:non_gauss_crem_simulations}

We simulate data from Model \ref{ex:ngcrem} with $K=2$ and $K=3$ factors, with Laplace response and $\tau_k=1$, $k=\{1,2,3\}$. For completeness, we consider both balanced ($I_1=I_2$, for $K=2$) and unbalanced designs ($I_1<I_2<I_3$, for $K=3$).
We consider Regime \ref{reg2} of Section \ref{ssec:numerics_gcrem}. We implement the Metropolis-within-Gibbs (MwG) scheme of Algorithm \ref{alg:non_gauss} with Random Walk Metropolis (RWM) updates and unknown $\boldsymbol{\tau}$. As for the coupling strategy, we consider Gaussian proposals  $Q(\x, d\x')$ coupled via maximal independent coupling, and paired acceptance.
We report in Figure \ref{fig:Lapl} the resulting average of the meeting times obtained for different numbers of RWM steps within each iteration ($S=5, 20$).

\begin{figure}[h!]
	\centering
	\begin{subfigure}{.5\textwidth}
		\centering
		\includegraphics[width=.95\linewidth]{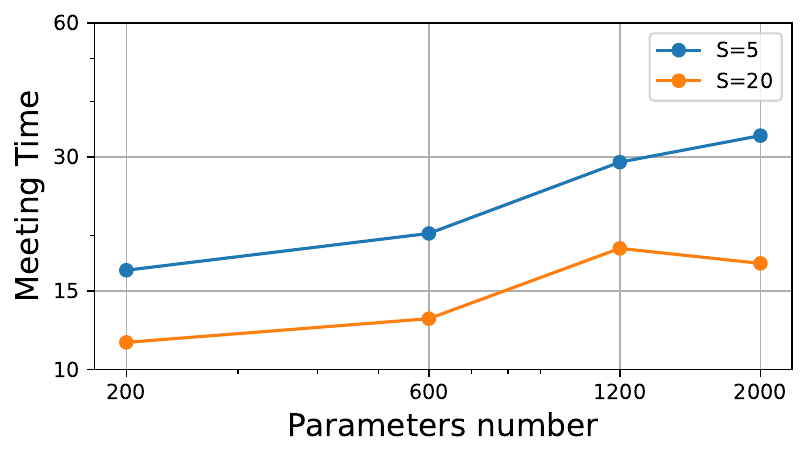}
	\end{subfigure}%
	\begin{subfigure}{.5\textwidth}
		\centering
		\includegraphics[width=.95\linewidth]{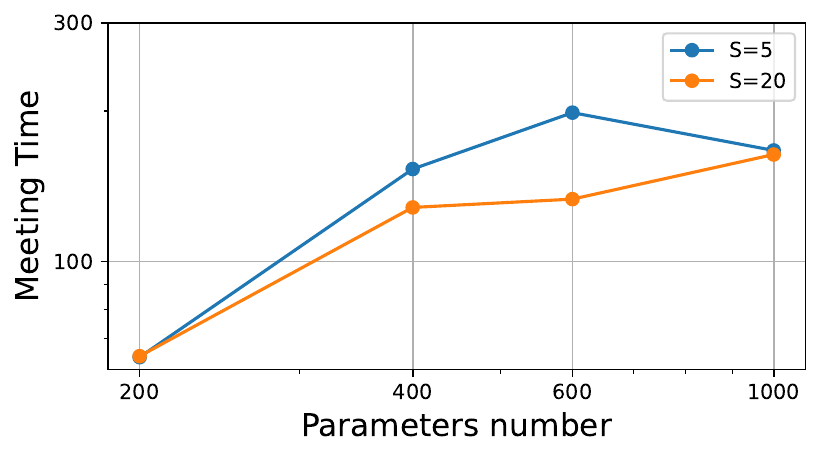}
	\end{subfigure}
	\caption{ \add{Model \ref{ex:ngcrem} with Laplace response, $K=2$ and $I_1 = I_2$ (left), and $K=3$ and $I_1<I_2<I_3$ (right). Average meeting times in log-log scale, for Algorithm \ref{alg:non_gauss} obtained with unknown $\boldsymbol{\tau}$ and for different number of Metropolis steps $S$.} }
	\label{fig:Lapl}
\end{figure}
\add{As $S$ grows, the conditional updates get closer to exact Gibbs updates and the resulting chain converges faster, leading to smaller meeting times. This comes at a higher computational cost per iteration. The optimal choice of $S$ depends on the application at hand.}

\add{To put these results in perspective, \cite{tim_scalable} show that MwG scheme without local centering and black-box samplers such as No-U-Turn-Sampler \citep{hoffman2014no} typically experience a degradation of performance as the dimension increases.
On the contrary, Figure \ref{fig:Lapl} shows that MwG sampler with local centering almost requires $\bigo{1}$ number of full likelihood evaluations per unbiased estimate, as $N$ increases.}

\add{Finally, note that for $K=2$, when one block coalesces, the other block coalesces automatically, and a meeting occurs. This is not true for $K>2$, which partly explains the higher meeting times observed for $K=3$.}

\section{Probabilistic matrix factorization}
\label{sec:pmf}

Finally, we consider Model \ref{ex:pmf}. 
A well-known feature of such model is that, for fixed $\textbf{u}$, the model reduces to a standard linear regression with coefficients $\textbf{v}$, and viceversa for fixed $\textbf{v}$. On the contrary, the likelihood is not analytically tractable if $\textbf{u}$ and $\textbf{v}$ vary jointly.
As mentioned in Section \ref{sec:mot}, this structure naturally lends itself to conditional updating schemes, such as BGS for sampling or block coordinate ascent (usually called alternating least squares whenever applied to Model \ref{ex:pmf}) for optimization. However, the vanilla BGS scheme, which updates  $\textbf{u}, \textbf{v}, \rho, \boldsymbol{\tau}$ from their full conditionals at every iteration, often results in slow mixing of the chain, for reasons analogous to the ones discussed for Models \ref{ex:gcrem} and \ref{ex:ngcrem} in previous sections (see also numerics in Figure \ref{fig:pmf}). 
We thus propose a \enquote{local centering} version of BGS for Model \ref{ex:pmf}, where we reparametrise the random effects using $\rho$, i.e.\ at each iteration we update $(\rho, \bar{\textbf{u}})$, with $\bar{\textbf{u}}= \rho \textbf{u}$, and then $(\rho, \bar{\textbf{v}})$, for $\bar{\textbf{v}}= \rho \textbf{v}$ analogously. Algorithm \ref{alg:pmf} provides high-level pseudo-code for one iteration of the resulting scheme, full implementation details can be found in the supplementary material.
\begin{algorithm}
		\For{ $(\br,\s) \in \{(\textbf{u}, \textbf{v}), (\textbf{v}, \textbf{u}) \}$}
		{
			reparametrize $ (\rho, \textbf{r}) \rightarrow (\rho, \bar{\br})$, with $\bar{\br}:= \rho \br$\\
		draw  
		$ \rho \sim \mathcal{L}(\rho | \bar{\br}, \s, \tau_0, \textbf{y})$ \\
		draw $ \bar{\br} \sim \mathcal{L}(\bar{\br} | \rho, \s, \tau_0, \textbf{y})$ \\
		recover $  \br = \bar{\br} / \rho$ \\
	}
	draw $\tau_0 \sim \mathcal{L}( \tau_0 | \rho, \textbf{u},\textbf{v}, \textbf{y})$\\
	\caption{One iteration of BGS with local centering for Model \ref{ex:pmf}}
	\label{alg:pmf}
\end{algorithm} 

Regarding the UMCMC version of Algorithm \ref{alg:pmf}, since the high-dimensional full conditionals involved are multivariate Gaussian, we can implement the same coupling strategy as for Model \ref{ex:gcrem}. In particular, joint maximal couplings for the high-dimensional updates of $\bar{\br}$ can be implemented efficiently.
Below we provide numerical illustrations of the performances of the UMCMC version of Algorithm \ref{alg:pmf} and vanilla BGS. As discussed in more details in Remark \ref{rem:rotation}, we restrict ourselves  to the case  $d=1$.

\begin{rem}[Related literature on Bayesian factor models]
Model \ref{ex:pmf} is closely related to Bayesian factor analysis. With the same notation as in \eqref{eq:pmf}, a factor model \citep{FA2014}  can be written as
\begin{equation}
	\label{eq:fa}
	y_n | \mu, \textbf{F},\Lambda, \boldsymbol{\tau} \sim N({\mu}_{i[n]} + \Lambda_{j[n],:}\textbf{F}_{i[n]},\tau_0^{-1}),
\end{equation}
for $\boldsymbol{\mu}  \in \mathbb{R}^{I_1}$, $\textbf{F}=(\textbf{F}_i)_{i=1}^{I_1}$ being the collection of unknown factors, $\textbf{F}_i \in \mathbb{R}^d$, and $\Lambda \in \mathbb{R}^{I_2\times d}$ the factor loading matrix, with $d$ being the latent dimension. Indeed, factor models exhibit the same structure of Model \ref{ex:pmf}, and BGS schemes are also widely used in that context \citep{CONTI201431, Papastamoulis2022}, even if there the focus is usually on the full design case (where all the combinations users/films are observed) and on regimes where $I_2 \ll I_1$ and $I_1$ grows.
\end{rem}

\begin{rem}[Rotational invariance]\label{rem:rotation}
A well-known issue of Model \ref{ex:pmf} is the invariance with respect to joint rotations of $\textbf{u}$ and $\textbf{v}$.
This creates multimodality in the posterior, thus inducing slow convergence and lack of posterior interpretability.
Many ad hoc methodologies have been developed to deal with such issue, including constraining a priori the matrix of factor loadings or post-processing \citep{CONTI201431, Papastamoulis2022}. Although of interest, these issues and techniques are somehow orthogonal to our focus here, and thus we restrict to the case $d=1$ and leave further exploration 
to future work.
\end{rem}

\subsection{Numerical results}
We simulate data coming from Model \ref{ex:pmf} for different asymptotic regimes and parameter specifications. 
We consider $I_1= I_2=I \in\{100,200,500,1000 \}$ levels and data coming from Regime \ref{reg1} and \ref{reg2}. In Figure \ref{fig:pmf} we report the average meeting time for both the vanilla BGS and Algorithm \ref{alg:pmf} (Local Centering) as discussed above.
\begin{figure}[h!]
	\centering
	\begin{subfigure}{.5\textwidth}
		\centering
		\includegraphics[width=.9\linewidth]{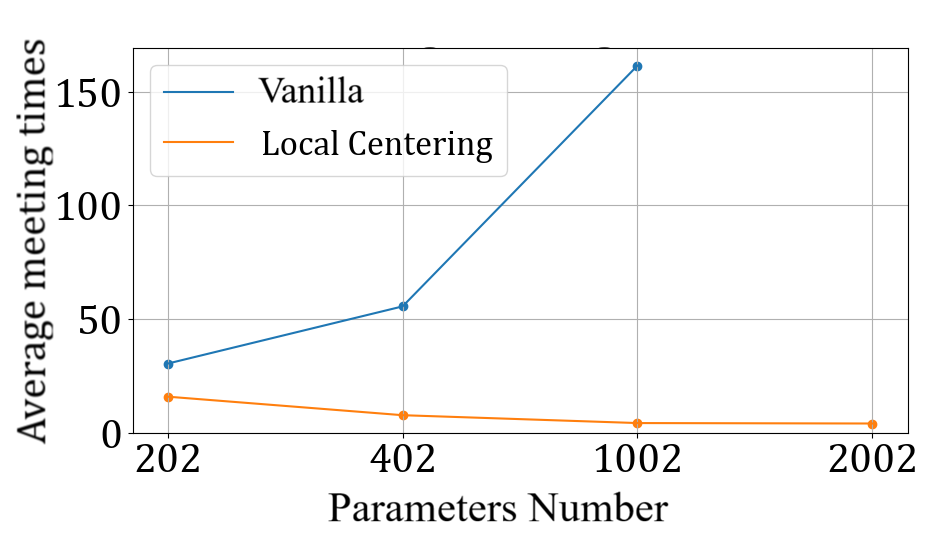}
	\end{subfigure}%
	\begin{subfigure}{.5\textwidth}
		\centering
		\includegraphics[width=.9\linewidth]{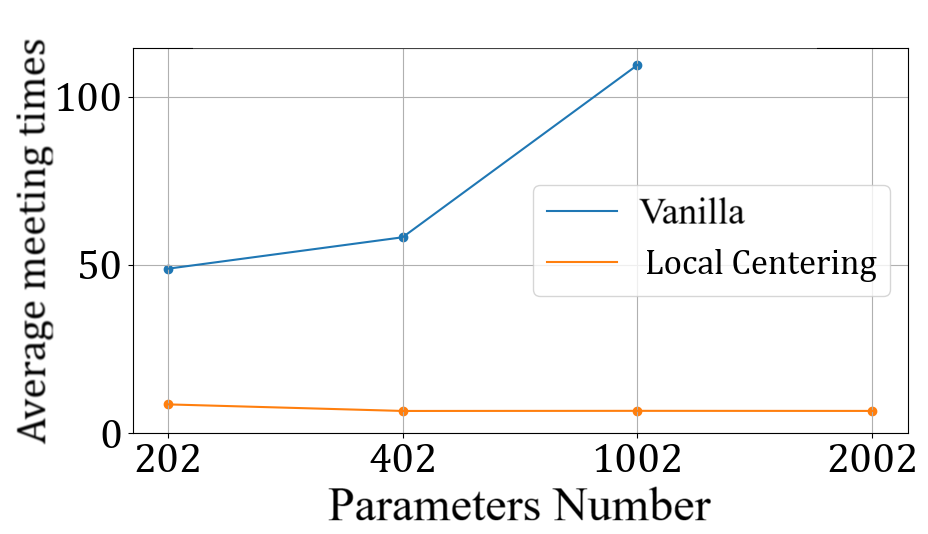}
	\end{subfigure}
	\caption{\add{Average meeting times for Probabilistic Matrix Factorization model. $I_1= I_2=I \in\{100,200,500,1000 \}$. Left: Regime \ref{reg1}, right: Regime \ref{reg2}.}}
	\label{fig:pmf}
\end{figure}
As expected, the slow mixing of vanilla BGS results in exploding meeting times of the coupled chains, which are not reported for $I$ greater than $500$ for visual convenience.
Remarkably, as few as $10$ iterations are on average sufficient for the coupled chains evolving according to Algorithm \ref{alg:pmf} to meet even in the high-dimensional cases.




\section*{Funding}
GZ acknowledges support from the European Research Council (ERC), through StG \enquote{PrSc-HDBayLe} grant ID 101076564.

\bibliography{bibliography}

@article{andrea2024,
author = {Andrea Pandolfi and Omiros Papaspiliopoulos and Giacomo Zanella},
title = {Conjugate Gradient Methods for High-Dimensional GLMMs},
journal = {Journal of the American Statistical Association},
year = {2025},
publisher = {Taylor \& Francis},
doi = {10.1080/01621459.2025.2573522},
URL = {https://doi.org/10.1080/01621459.2025.2573522},
eprint = {https://doi.org/10.1080/01621459.2025.2573522}
}

@article{hoffman2014no,
	title={{The No-U-Turn sampler: adaptively setting path lengths in Hamiltonian Monte Carlo.}},
	author={Hoffman, Matthew D and Gelman, Andrew},
	journal={J. Mach. Learn. Res.},
	volume={15},
	number={1},
	pages={1593--1623},
	year={2014}
}

@article{rosenblatt1952remarks,
	title={{Remarks on a multivariate transformation}},
	author={Rosenblatt, Murray},
	journal={The annals of mathematical statistics},
	volume={23},
	number={3},
	pages={470--472},
	year={1952},
	publisher={JSTOR}
}

@article{knothe1957contributions,
	title={Contributions to the theory of convex bodies},
	author={Knothe, Herbert},
	journal={Michigan Mathematical Journal},
	volume={4},
	number={1},
	pages={39--52},
	year={1957},
	publisher={University of Michigan, Department of Mathematics}
}

@article{CORREA2021785,
	title = {On the asymptotic behavior of the expectation of the maximum of i.i.d. random variables},
	journal = {Operations Research Letters},
	volume = {49},
	number = {5},
	pages = {785-786},
	year = {2021},
	author = {José R. Correa and Matías Romero},
	keywords = {Asymptotics, I.i.d. random variables, Expectation of the maximum},
	abstract = {We study the asymptotic behavior of the expectation of the maximum of n i.i.d. random variables drawn from a fixed distribution F, with finite expectation. In this setting, Downey (1990) [4] showed that this expectation grows as o(n). We provide an alternative simpler proof of Downey's result together with a tight lower bound.}
}

@book{levin2017markov,
	title={{Markov chains and mixing times}},
	author={Levin, David A. and Peres, Yuval},
	volume={107},
	year={2017},
	publisher={American Mathematical Soc.}
}

@article{perry2017fast,
	title={Fast moment-based estimation for hierarchical models},
	author={Perry, Patrick O},
	journal={Journal of the Royal Statistical Society Series B: Statistical Methodology},
	volume={79},
	number={1},
	pages={267--291},
	year={2017},
	publisher={Oxford University Press}
}

@article{brenier,
	author = {Brenier, Yann},
	title = {Polar factorization and monotone rearrangement of vector-valued functions},
	journal = {Communications on Pure and Applied Mathematics},
	volume = {44},
	number = {4},
	pages = {375-417},
	eprint = {https://onlinelibrary.wiley.com/doi/pdf/10.1002/cpa.3160440402},
	abstract = {Abstract Given a probability space (X, μ) and a bounded domain Ω in ℝd equipped with the Lebesgue measure |·| (normalized so that |Ω| = 1), it is shown (under additional technical assumptions on X and Ω) that for every vector-valued function u ∈ Lp (X, μ; ℝd) there is a unique “polar factorization” u = ∇Ψs, where Ψ is a convex function defined on Ω and s is a measure-preserving mapping from (X, μ) into (Ω, |·|), provided that u is nondegenerate, in the sense that μ(u−1(E)) = 0 for each Lebesgue negligible subset E of ℝd. Through this result, the concepts of polar factorization of real matrices, Helmholtz decomposition of vector fields, and nondecreasing rearrangements of real-valued functions are unified. The Monge-Ampère equation is involved in the polar factorization and the proof relies on the study of an appropriate “Monge-Kantorovich” problem.},
	year = {1991}
}

@article{atchade2024unbiasedmarkovchainmonte,
	title={{Unbiased Markov Chain Monte Carlo: what, why, and how}}, 
	author={Yves F. Atchadé and Pierre E. Jacob},
	year={2024},
	journal={Preprint at https://arxiv.org/html/2406.06851v1} 
}

@article{llag,
	author = {Biswas, Niloy and Jacob, Pierre E. and Vanetti, Paul},
	journal = {{Advances in neural information processing systems}},
	pages = {7389-7399},
	publisher = {In: Wallach, H., Larochelle, H., Beygelzimer, A., d’Alché-Buc, F., Fox, E. & Garnett, R. (Eds.) Advances in Neural Information Processing Systems},
	title = {{Estimating convergence of Markov chains with L-lag couplings}},
	year = {2019}
}

@article{optimal_scaling,
	author = {Roberts, Gareth O. and Gelman, Andrew and W. R. Gilks},
	journal = {The Annals of Applied Probability},
	number = {1},
	pages = {110--120},
	publisher = {Institute of Mathematical Statistics},
	title = {{Weak Convergence and Optimal Scaling of Random Walk Metropolis Algorithms}},
	volume = {7},
	year = {1997}
}

@article{Brito2018SpectralGI,
	title={Spectral gap in random bipartite biregular graphs and applications},
	author={Gerandy Brito and Ioana Dumitriu and Kameron Decker Harris},
	journal={Comb. Probab. Comput.},
	year={2018},
	volume={31},
	pages={229-267},
}

@article{coupl_ext,
	author = {Lawrence Middleton and George Deligiannidis and Arnaud Doucet and Pierre E. Jacob},
	title = {{Unbiased Markov chain Monte Carlo for intractable target distributions}},
	volume = {14},
	journal = {Electronic Journal of Statistics},
	number = {2},
	publisher = {Institute of Mathematical Statistics and Bernoulli Society},
	pages = {2842 -- 2891},
	year = {2020}
}

@article{max_refle1,
	author = {Andreas Eberle and Arnaud Guillin and Raphael Zimmer},
	title = {{Couplings and quantitative contraction rates for Langevin dynamics}},
	volume = {47},
	year = {2019},
	journal = {The Annals of Probability},
	number = {4},
	publisher = {Institute of Mathematical Statistics},
	pages = {1982 -- 2010},
	keywords = {Convergence to equilibrium, hypocoercivity, kinetic Fokker–Planck equation, Langevin diffusion, Lyapunov functions, quantitative bounds, Reflection coupling, stochastic Hamiltonian dynamics, Wasserstein distance},
}

@article{max_refle2,
	author = {Nawaf Bou-Rabee and Andreas Eberle and Raphael Zimmer},
	title = {{Coupling and convergence for Hamiltonian Monte Carlo}},
	year = {2020},
	volume = {30},
	journal = {The Annals of Applied Probability},
	number = {3},
	publisher = {Institute of Mathematical Statistics},
	pages = {1209 -- 1250},
	keywords = {Convergence to equilibrium, coupling, geometric integration, Hamiltonian Monte Carlo, hybrid Monte Carlo, Markov chain Monte Carlo, Metropolis–Hastings},
}

@article{CONTI201431,
	title = {Bayesian exploratory factor analysis},
	journal = {Journal of Econometrics},
	volume = {183},
	number = {1},
	pages = {31-57},
	year = {2014},
	author = {Gabriella Conti and Sylvia Fruhwirth-Schnatter and James J. Heckman and Remi Piatek},
	keywords = {Bayesian factor models, Exploratory factor analysis, Identifiability, Marginal data augmentation, Model expansion, Model selection}
}

@article{ Papastamoulis2022,
	author = {Papastamoulis, Panagiotis and Ntzoufras, Ioannis},
	title = {{On the identifiability of Bayesian factor analytic models}},
	journal = {Statistics and Computing},
	volume = {32},
	number = {23},
	year = {2022},
}

@book{Jiang,
	author = {Jiang, Jiming and Nguyen, Thuan},
	year = {2021},
	title = {{Linear and Generalized Linear Mixed Models and Their Applications}},
  publisher = {Springer},
  series    = {Springer Series in Statistics},
  isbn      = {978-1-0716-1281-1},
  doi       = {10.1007/978-1-0716-1282-8},
  url       = {https://doi.org/10.1007/978-1-0716-1282-8}
}

@book{wood2017,
	title={{Generalized Additive Models: An Introduction with R, Second Edition (2nd ed.)}},
	author={Wood, Simon},
	year={2017},
	publisher={Chapman and Hall/CRC},
}

@book{book:DunnSmyth2018,
  author    = {Peter K. Dunn and Gordon K. Smyth},
  title     = {Generalized Linear Models with Examples in R},
  year      = {2018},
  publisher = {Springer New York, NY},
  series    = {Springer Texts in Statistics},
  doi       = {10.1007/978-1-4419-0118-7},
  isbn      = {978-1-4419-0117-0},
  eisbn     = {978-1-4419-0118-7},
  pages     = {XX, 562},
  edition   = {1},
  note      = {115 b/w illustrations},
  url       = {https://doi.org/10.1007/978-1-4419-0118-7}
}

@article{ghitza2013,
  title={{Deep interactions with MRP: Election turnout and voting patterns among small electoral subgroups}},
  author={Ghitza, Yair and Gelman, Andrew},
  journal={American Journal of Political Science},
  volume={57},
  number={3},
  pages={762--776},
  year={2013},
  publisher={Wiley Online Library}
}

@article{goplerud2022,
author = {Max Goplerud},
title = {{Fast and Accurate Estimation of Non-Nested Binomial Hierarchical Models Using Variational Inference}},
volume = {17},
journal = {Bayesian Analysis},
number = {2},
publisher = {International Society for Bayesian Analysis},
pages = {623 -- 650},
keywords = {hierarchical models, marginal augmentation, scalable statistical methodology, variational Bayes},
year = {2022},
doi = {10.1214/21-BA1266},
URL = {https://doi.org/10.1214/21-BA1266}
}

@article{pfvi,
    author = {Goplerud, M and Papaspiliopoulos, O and Zanella, G},
    title = {Partially factorized variational inference for high-dimensional mixed models},
    journal = {Biometrika},
    volume = {112},
    number = {2},
    pages = {asae067},
    year = {2024},
    month = {11},
    issn = {1464-3510},
    doi = {10.1093/biomet/asae067},
    url = {https://doi.org/10.1093/biomet/asae067},
    eprint = {https://academic.oup.com/biomet/article-pdf/112/2/asae067/60923732/asae067.pdf},
}

@book{book:bates,
author = {Alday, P. and Kliegl, R. and Bates, D.},
title = {{Embrace Uncertainty: Mixed-effects models with Julia}},
year = {2025},
publisher = {embraceuncertaintybook.com},

}

@book{FA2014,
	title={{Factor  Analysis: Classic Edition (2nd ed.)}},
	author={Gorsuch, Richard L.},
	year={2014},
	publisher={Routledge},
}

@article{gelman2005analysis,
	title={{Analysis of Variance: Why It Is More Important than Ever}},
	author={Gelman, Andrew},
	journal={Annals of Statistics},
	pages={1--31},
	year={2005},
	publisher={JSTOR}
}

@article{baayen2008mixed,
	title={Mixed-effects modeling with crossed random effects for subjects and items},
	author={Baayen, Rolf H and Davidson, Douglas J and Bates, Douglas M},
	journal={Journal of memory and language},
	volume={59},
	number={4},
	pages={390--412},
	year={2008},
	publisher={Elsevier}
}

@article{menictas2023streamlined,
	title={{Streamlined variational inference for linear mixed models with crossed random effects}},
	author={Menictas, Marianne and Di Credico, Gioia and Wand, Matt P},
	journal={Journal of Computational and Graphical Statistics},
	volume={32},
	number={1},
	pages={99--115},
	year={2023},
	publisher={Taylor \& Francis}
}

@article {gaussw2_1,
	AUTHOR = {Dowson, D. C. and Landau, B. V.},
	TITLE = {{The {F}r\'{e}chet distance between multivariate normal distributions}},
	JOURNAL = {J. Multivariate Anal.},
	FJOURNAL = {Journal of Multivariate Analysis},
	VOLUME = {12},
	YEAR = {1982},
	NUMBER = {3},
	PAGES = {450--455},
	MRREVIEWER = {I.\ Olkin},
}

@article {Gel41,
	AUTHOR = {Gelfand, Izrail},
	TITLE = {Normierte Ringe},
	JOURNAL = {Rec. Math. [Mat. Sbornik] N.S.},
	VOLUME = {9(51)},
	YEAR = {1941},
	NUMBER = {1},
	PAGES = {3--24},
}

@book{Gelman_Hill_2006, place={Cambridge}, series={Analytical Methods for Social Research}, title={{Data Analysis Using Regression and Multilevel/Hierarchical Models}}, publisher={Cambridge University Press}, author={Gelman, Andrew and Hill, Jennifer}, year={2006}, collection={Analytical Methods for Social Research}}

@article {gaussw2_4,
	AUTHOR = {Olkin, Ingram and Pukelsheim, Friedrich},
	TITLE = {The distance between two random vectors with given dispersion
	matrices},
	JOURNAL = {Linear Algebra Appl.},
	FJOURNAL = {Linear Algebra and its Applications},
	VOLUME = {48},
	YEAR = {1982},
	PAGES = {257--263},
	MRCLASS = {62H05 (15A52 60E15)},
	MRNUMBER = {683223},
	MRREVIEWER = {Marietta\ J.\ Tretter},
}

@article{koren2009netflix,
  author={Koren, Yehuda and Bell, Robert and Volinsky, Chris},
  journal={Computer}, 
  title={Matrix Factorization Techniques for Recommender Systems}, 
  year={2009},
  volume={42},
  number={8},
  pages={30-37},
  keywords={Recommender systems;Motion pictures;Filtering;Collaboration;Sea measurements;Predictive models;Genomics;Bioinformatics;Nearest neighbor searches;Computational intelligence;Netflix Prize;Matrix factorization},
  doi={10.1109/MC.2009.263}}

@inproceedings{mnih2007pmf,
 author = {Mnih, Andriy and Salakhutdinov, Russ R},
 booktitle = {Advances in Neural Information Processing Systems},
 editor = {J. Platt and D. Koller and Y. Singer and S. Roweis},
 pages = {},
 publisher = {Curran Associates, Inc.},
 title = {Probabilistic Matrix Factorization},
 url = {https://proceedings.neurips.cc/paper_files/paper/2007/file/d7322ed717dedf1eb4e6e52a37ea7bcd-Paper.pdf},
 volume = {20},
 year = {2007}
}

@inproceedings{matfactmnih,
author = {Salakhutdinov, Ruslan and Mnih, Andriy},
title = {Bayesian probabilistic matrix factorization using Markov chain Monte Carlo},
year = {2008},
isbn = {9781605582054},
publisher = {Association for Computing Machinery},
address = {New York, NY, USA},
url = {https://doi.org/10.1145/1390156.1390267},
doi = {10.1145/1390156.1390267},
booktitle = {Proceedings of the 25th International Conference on Machine Learning},
pages = {880–887},
numpages = {8},
location = {Helsinki, Finland},
series = {ICML '08}
}

@article{Beskos2005,
	author={Alexandros, Beskos and Roberts, Gareth O. },
	title={{One-Shot CFTP, Application to a Class of Truncated Gaussian Densities}},
	volume= {76},
	number={1},
	journal={Methodology and Computing in Applied Probability},
	year={2005},
	pages={407–437}
}

@article{Griffeath75,
	author={Griffeath, David},
	title={{A maximal coupling for Markov chains}},
	volume= {31},
	number={2},
	journal={Zeitschrift für Wahrscheinlichkeitstheorie und Verwandte Gebiete},
	year={1975},
	pages={95--106}
}

@article{T_rel_ros,
	abstract = {This article considers nearly-periodic Markov chains that may have excellent functional estimation properties but poor distributional convergence rate. It shows how simple modifications of the chain (involving using a random number of iterations) can greatly improve the distributional convergence of the chain. Various theoretical results about convergence rates of the modified chains are proven. A number of examples, including a transdimensional Markov chain Monte Carlo example, a card-shuffling example, and several antithetic Metropolis algorithms, are considered.},
	author = {Jeffrey S. Rosenthal},
	journal = {Journal of the American Statistical Association},
	publisher = {[American Statistical Association, Taylor & Francis, Ltd.]},
	title = {{Asymptotic Variance and Convergence Rates of Nearly-Periodic Markov Chain Monte Carlo Algorithms}},
	year = {2003}
}

@article{Khare2009RATESOC,
	title={{Rates of convergence of some multivariate Markov chains with polynomial eigenfunctions}},
	author={Kshitij Khare and Hua Zhou},
	journal={Annals of Applied Probability},
	year={2009},
	volume={19},
	pages={737-777},
}

@article{stan,
	title={{Stan: A Probabilistic Programming Language}},
	volume={76},
	number={1},
	journal={Journal of Statistical Software},
	author={Carpenter, Bob and Gelman, Andrew and Hoffman, Matthew D. and Lee, Daniel and Goodrich, Ben and Betancourt, Michael and Brubaker, Marcus and Guo, Jiqiang and Li, Peter and Riddell, Allen},
	year={2017},
	pages={1–32}
}

@article{ROBERTS2002,
	title = {One-shot coupling for certain stochastic recursive sequences},
	journal = {Stochastic Processes and their Applications},
	volume = {99},
	number = {2},
	pages = {195-208},
	year = {2002},
	author = {Roberts, Gareth O. and Rosenthal, Jeffrey S. },
	keywords = {Markov chain, Coupling, Convergence bounds, Stochastic recursive sequence, One-shot coupling, Gibbs sampler, Dirichlet process},
	abstract = {We consider Markov chains {Γn} with transitions of the form Γn=f(Xn,Yn)Γn−1+g(Xn,Yn), where {Xn} and {Yn} are two independent i.i.d. sequences. For two copies {Γn} and {Γn′} of such a chain, it is well known that L(Γn)−L(Γn′)⇒0 provided E[log(f(Xn,Yn))]<0, where ⇒ is weak convergence. In this paper, we consider chains for which also ||Γn−Γn′||→0, where ||·|| is total variation distance. We consider in particular how to obtain sharp quantitative bounds on the total variation distance. Our method involves a new coupling construction, one-shot coupling, which waits until time n before attempting to couple. We apply our results to an auto-regressive Gibbs sampler, and to a Markov chain on the means of Dirichlet processes.}
}

@article{tim_scalable,
	author = {Omiros Papaspiliopoulos and Timoth{\'e}e Stumpf-F{\'e}tizon and Giacomo Zanella},
	title = {{Scalable Bayesian computation for crossed and nested hierarchical models}},
	volume = {17},
	journal = {Electronic Journal of Statistics},
	number = {2},
	publisher = {Institute of Mathematical Statistics and Bernoulli Society},
	pages = {3575 -- 3612},
	keywords = {crossed random effects, Gibbs sampler, multilevel models, Polya-gamma, Random graphs, reparameterisation, sparse linear algebra},
	year = {2023}
}

@article{jacob2019unbiased,
	author = {Jacob, Pierre E. and O’Leary, John and Atchadé, Yves F.},
	title = {{Unbiased Markov chain Monte Carlo methods with couplings}},
	journal = {Journal of the Royal Statistical Society: Series B (Statistical Methodology)},
	volume = {82},
	number = {3},
	pages = {543-600},
	year = {2020}
}

@article{VanettiDoucet2020,
  author = {Vanetti, P. and Doucet, A.},
  title = {Discussion on the paper by {J}acob, {O}'{L}eary, and {A}tchad{\'e}},
  journal = {Journal of the Royal Statistical Society: Series B (Statistical Methodology)},
  year = {2020},
  volume = {82},
  number = {3},
  pages = {584--585}
}

@article{hmc_eberle, 
	author = {Bou-Rabee, Nawaf and  Eberle, Andreas and Zimmer, Raphael},
	title = {{Coupling and convergence for Hamiltonian Monte Carlo}},
	volume = {30},
	journal = {The Annals of Applied Probability},
	number = {3},
	publisher = {Institute of Mathematical Statistics},
	pages = {1209 -- 1250},
	keywords = {Convergence to equilibrium, coupling, geometric integration, Hamiltonian Monte Carlo, hybrid Monte Carlo, Markov chain Monte Carlo, Metropolis–Hastings},
	year = {2020}
}

@article{biswas2019estimating,
	author = {Biswas, Niloy and Bhattacharya, Anirban and Jacob, Pierre and Johndrow, James},
	year = {2022},
	month = {03},
	title = {{Coupling‐based convergence assessment of some Gibbs samplers for high‐dimensional Bayesian regression with shrinkage priors}},
	volume = {84},
	journal = {Journal of the Royal Statistical Society: Series B (Statistical Methodology)}
}

@article{robsahu,
	author = {Roberts, Gareth O. and Sahu, Sujit K.},
	journal = {Journal of the Royal Statistical Society. Series B (Methodological)},
	number = {2},
	pages = {291--317},
	publisher = {{Royal Statistical Society, Wiley}},
	title = {{Updating Schemes, Correlation Structure, Blocking and Parameterization for the Gibbs Sampler}},
	volume = {59},
	year = {1997},
}

@article{OlearyWang,
	title = 	 {{ Maximal Couplings of the Metropolis-Hastings Algorithm}},
	author =       {Wang, Guanyang and O'Leary, John and Jacob, Pierre},
	journal = 	 {Proceedings of The 24th International Conference on Artificial Intelligence and Statistics, PMLR},
	pages = 	 {1225--1233},
	year = 	 {2021},
}

@book{thorisson,
	title={{Coupling, Stationarity, and Regeneration}},
	author= {Thorisson, Hermann},
	year={2000},
	publisher={Springer}
}

@article{papaspiliopoulos2018scalable,
	author = {Papaspiliopoulos, Omiros and Roberts, Gareth O. and Zanella, Giacomo},
	title = {{Scalable inference for crossed random effects models}},
	journal = {Biometrika},
	volume = {107},
	number = {1},
	pages = {25-40},
	year = {2019},
	month = {11},
}

@article{wass2,
	author = {Clark R. Givens and Rae Michael Shortt},
	journal = {Michigan Mathematical Journal},
	number = {2},
	pages = {231 -- 240},
	publisher = {University of Michigan, Department of Mathematics},
	title = {{A class of Wasserstein metrics for probability distributions.}},
	volume = {31},
	year = {1984},
}

@book{santambrogio,
	author = {Santambrogio, Filippo},
	title = {{Optimal Transport for Applied Mathematicians}},
	volume={87},
	year={2017},
	publisher={Springer International Publishing}
}

@article{gosh_back,
	title={Backfitting for large scale crossed random effects regressions},
	author={Swarnadip Ghosh and Trevor J. Hastie and Art B. Owen},
	journal={The Annals of Statistics},
	year={2022}
}

@article{gao_16,
	author = {Gao, Katelyn and Owen, Art},
	year = {2016},
	month = {01},
	title = {Efficient moment calculations for variance components in large unbalanced crossed random effects models},
	volume = {11},
	journal = {Electronic Journal of Statistics},
}

@article{glynn_rhee, title={{Exact estimation for Markov chain equilibrium expectations}}, volume={51}, number={A}, journal={Journal of Applied Probability}, author={Glynn, Peter W. and Rhee, Chang-Han}, year={2014}, pages={377–389}}

@article{insteval,
	title={Fitting Linear Mixed-Effects Models Using lme4},
	volume={67},
	number={1},
	journal={Journal of Statistical Software},
	author={Bates, Douglas and Mächler, Martin and Bolker, Ben and Walker, Steve},
	year={2015},
	pages={1–48}
}

@article{gcrn,  
	author = {Papp, Tamás P. and Sherlock, Chris},
	title = {{A new and asymptotically optimally contracting coupling for the random walk Metropolis}},
	year={2022},
	journal={Preprint at https://arxiv.org/abs/2211.12585}
}

\newpage
\appendix
\section{Background on couplings}
\label{s:sec:background}
\subsection{Maximal Coupling Algorithms}
\label{ssec:max_coup_alg}
We briefly review algorithms for sampling from maximal couplings of two distributions $p,q \in \mathcal{P}(\sX)$, with $\sX= \mathbb{R}^d$.  Provided $p$ and $q$ admit densities, there always exists an algorithm with maximal meeting probability \citep[Sec. 4.5 of Chap 1]{thorisson}, referred to as \emph{maximal rejection} (Algorithm \ref{alg:max_reg}).
However, in the case of spherically symmetric distributions (e.g.\ multivariate Gaussian with the same covariance matrix), an alternative approach called \emph{maximal reflection coupling} (see Algorithm \ref{alg:max_refl} or \cite{max_refle1,max_refle2}) allows for sampling with a deterministic cost and yet allowing for a good contraction of the square distance between the resulting draws (see e.g.\ Lemma \ref{lem:bound_gen}).

\paragraph{Maximal rejection coupling.} Algorithm \ref{alg:max_reg} reports the pseudo code for implementing a maximal rejection  coupling of $p,q$. 
\begin{algorithm}
 $\text{sample }\mathbf{X} \sim p$\\
 $\text{sample }W \sim U(0,1)$\\
 \If{$W p(\X) \le q(\mathbf{X})$}
 {$\text{set }  \mathbf{Y=X}$}
 \Else{
 sample $  \mathbf{Y}^* \sim q$ and
 $W^* \sim U(0,1)$\\
 \While{$W^* q(\mathbf{Y}^*) < p(\mathbf{Y}^*)$}
 {sample $\mathbf{Y}^* \sim q$ and
 $W^* \sim U(0,1)$}
  set $\mathbf{Y=Y^*}$
 }
 \caption{Maximal rejection coupling of $p,q \in \mathcal{P}(\mathbb{R}^d)$}
\label{alg:max_reg}
\end{algorithm}
The computational cost of Algorithm \ref{alg:max_reg} depends on the number of iterations required to accept the proposed sample. Only one sample is required if $\X$ is accepted as value for $\Y$, and this happens with probability $\Pr(p(\X) \,W \le q(\X))= 1-\| p-q\|_{TV}$. If instead $\X$ is rejected, the number of trials before acceptance follows a Geometric variable with parameter $\Pr(q(\Y^*) \, W^* > p(\Y^*))$, the latter being equal to $\| p-q \|_{TV}$. The resulting expected number of iterations is
$$
1-\| p-q \|_{TV} + \| p-q \|_{TV} (1/ \| p-q \|_{TV}+1) = 2.
$$ 
The variance of the expected number of iterations is equal to $\frac{2(1 - \|p - q\|_{TV})}{\|p - q\|_{TV}}$,
which tends to infinity as \(\|p - q\|_{TV}\) approaches zero. Gerber and Lee proposed accepting the first draw with a lower probability, thereby avoiding the problem of infinite variance, but at the cost of losing maximality (see the discussion in \cite{jacob2019unbiased}).

\paragraph{Maximal reflection coupling.}  
Algorithm \ref{alg:max_refl} reports an implementation of maximal reflection coupling for Gaussian distributions with same covariance matrix. 
\begin{algorithm}
 \add{let $A$ s.t.\ $\Sigma = A A^T$}, set $\mathbf{z}= A^{-1} (\boldsymbol{\xi}-\boldsymbol{\nu}), \, \mathbf{e}=\mathbf{z}/||\mathbf{z}||$ \\
 sample $\dot{\mathbf{X}} \sim N_d(\mathbf{0}, I_d), \, W \sim U(0,1)$\\
 \If{ $ W \le \exp\{-\frac{1}{2} \mathbf{z}^{\top}(2 \dot{\mathbf{X}} + \mathbf{z}) \}$} 
 { set $\dot{\Y} = \mathbf{\dot{X}}+\mathbf{z}$}
 \Else{
 $\mathbf{\dot{Y} = \dot{X}-2(e^\top \dot{X}) e}$ }
 $\mathbf{X}= A \mathbf{\dot{X}}+ \boldsymbol{\xi}$\\
 $\mathbf{Y}= A \mathbf{\dot{Y}}+ \boldsymbol{\nu}$
 \caption{Maximal reflection coupling of $N(\boldsymbol{\xi}, \Sigma)$ and $N(\boldsymbol{\nu}, \Sigma)$ }
\label{alg:max_refl}
\end{algorithm}
Note that, differently from Algorithm \ref{alg:max_reg}, in Algorithm \ref{alg:max_refl} no rejection mechanism is required and the algorithm's runtime is deterministic. Furthermore, in high-dimensional cases, this procedure shows favorable behaviors: in addition to being maximal, the algorithm contracts the distance between chains at a good rate. Thus, when applicable, it is generally preferred to Algorithm \ref{alg:max_reg}.
\subsection{$W_2$-optimal maps}
\label{ssec:w2opt}

For all univariate distributions, it is known that the \emph{monotone map} coupling (i.e.\ using the same random number for the inverse cdf method)
is optimal for every cost function of the form $c(x,y) = h(x-y)$ with $h$ convex \citep[Thm.2.9]{santambrogio}, and hence is $W_2$-optimal.
For $p,q$ univariate distributions, let $F_p(\cdot)$ and $F_q(\cdot)$ denote their cumulative density function (cdf). We define the inverse cdf as $$F^{-1}_p(u) := \inf_{t \in \mathbb{R}} \{t: F_p(t) \ge u \},$$ and $F^{-1}_q$ accordingly. It is then possible to sample from the $W_2$ optimal coupling as in Algorithm \ref{alg:it}.
\begin{algorithm}
	sample $U \sim U(0,1)$\\
	set $X = F_p^{-1}(U)$\\
	set $Y = F_q^{-1}(U)$\\
	\caption{monotone transport map for univariate distributions}
	\label{alg:it}
\end{algorithm}

No universal optimality result exists for general multivariate distributions, but a natural extension of the monotone map above, called \emph{common random number} (\emph{crn}) coupling, is indeed optimal for multivariate Gaussians whose covariance matrices commute \citep{gaussw2_1,gaussw2_4}, as stated below. 
\begin{lem}[Optimality of \emph{crn} coupling for Gaussian distributions]
	\label{lem:opt}
	Let $p=N(\boldsymbol{\xi}, \Sigma_1)$ and $q=N(\boldsymbol{\nu}, \Sigma_2)$ be $d$-dimensional Gaussian, with $ \Sigma_1 \Sigma_2 = \Sigma_2 \Sigma_1$. Define
	\begin{equation}
		\label{eq:general_pc}
		\Gamma^*:= N\left(
		\left(\begin{array}{c} \boldsymbol{\xi}  \\ \boldsymbol{\nu} \end{array} \right), \left(\begin{array}{cc} \Sigma_1 & F G^\top \\ G F^\top & \Sigma_2 \\ \end{array} \right)  
		\right),
	\end{equation}
	\add{where $F $ and $G$ are the principal square roots of $\Sigma_1$ and $\Sigma_2$, respectively.} Then $\Gamma^* \in \Gamma_{W_2}(p,q)$, i.e. $\Gamma^*$ is the $W_2$-optimal coupling of $p$ and $q$. 
\end{lem}

Thus, in order to obtain draws from $\Gamma^*$ in Lemma \ref{lem:opt}, one can simply sample $\textbf{Z} \sim N(\textbf{0}_{d}, {1}_{d})$ and then set
\begin{equation}
	\label{eq:general_pc_sample}
	\begin{cases}
		\X= \boldsymbol{\mu}+ F \textbf{Z},\\
		\Y = \boldsymbol{\nu} + G \textbf{Z}.
	\end{cases}
\end{equation}
We will refer to \eqref{eq:general_pc_sample} as \emph{crn} coupling.
Recall also that the $W_2$-optimal map is unique \citep{brenier}.
\add{Note that when $\Sigma_1 = \Sigma_2$, any square root of $\Sigma_1$ would give an optimal coupling.}

\section{Additional results and simulations}
\label{s:sec:add_res}

\subsection{Bound for non-reversible chains}
\label{ssec:bound_pi_inv}
For BGS with general updating order it is possible to adapt the result of Theorem \ref{thm:bound_expected_rev}.
\begin{thm}
	\label{thm:bound_expected}
	\add{Let $\pi=N(\boldsymbol{\mu},\Sigma)$ and $(\X^{t}, \Y^{t})_{t\ge 0}$ be a Markov chain marginally evolving with $\pi$-invariant BGS kernel and coupled via Algorithm \ref{alg:2s}.
	For any $\varepsilon$ satisfying \eqref{eq:eps_cond_block} and for any $\delta>0$, it holds}
	\begin{equation} 
		\label{eq:bound_thm}
		\add{T \leq 1+\tilde{f}_1(\|\X^0-\Y^0\|, \varepsilon, B, Q,\delta)  +  2\tilde{f}_2(\varepsilon,B, Q, \delta)\,,}
	\end{equation}
	where
	\[
		\add{\tilde{f}_1(r, \varepsilon, B,\delta)
		=\max \left( n^*_\delta, \left\lceil\frac{\log \|\X^{0}-\Y^{0}\|+ \log \sqrt{\kappa (Q)}-\log \varepsilon}{-\log \left( 1-\frac{1-\rho(B)}{1+\delta} \right)}\right\rceil\right),}
	\]
	\[
		\add{\tilde{f}_2(\varepsilon,B, \delta) = \max \left( n^*_\delta, \left\lceil \frac{ 2+  \log \kappa(Q) - \frac{1}{2}\log \rho(\Delta^{-1}) + \frac{1}{2}  \log K  -\frac{1}{2}\log \varepsilon }{-\log \left( 1-\frac{1-\rho(B)}{1+\delta} \right)} \right\rceil \right)\,,}
	\]
	and
 	\[n^*_\delta := \inf\left\{ n_0 \ge 1: \forall n \ge n_0 \; \; 1-\| N^n\|_2^\frac{1}{n} \ge \frac{1-\rho(N)}{1+\delta} \right\}\,.\]
	where $N=L^{-1}BL$ with $L$ such that $\Sigma = L L^\top$.

\end{thm}
The bound in \eqref{eq:bound_thm} features the additional term $n^*_\delta$ compared to the expression in Theorem \ref{thm:bound_expected_rev}.
The reason for it is that, due to the non-reversibility of $P$, $N$ is generally not symmetric. Thus $\|N^{n}\|^\frac{1}{n}\to\rho(N)$ from above \citep{Gel41}, but in general $\|N^{n}\|^\frac{1}{n}\ne \rho(N)$ for finite $n$. 
In order to make the result in \eqref{eq:bound_thm} fully informative, like the ones in the reversible and two-block cases, one would need to provide an explicit bound on $n^*_\delta$ for the given matrix $N$ under consideration.
In all our numerical experiments, we observed $n^*_\delta$ to be smaller than the second term and never the leading term of the bound, and we expect it to be well-behaved in our contexts of interest.
On the other hand, we are not aware of general tight bounds for $n^*_\delta$ and we thus left it as an explicit term in the bound.

\subsection{Additional simulations for Model \ref{ex:gcrem}}
Consider simulated data coming from Model \ref{ex:gcrem}, with the same specification as in Section \ref{ssec:numerics_gcrem} and Section \ref{sssec:sim_data}. We plot the average of the meeting times as a function of the total number of parameters of the model, i.e.\ $1+KI$ plus the number of scale parameters if any. 
Figure \ref{fig:three} reports results for the collapsed Gibbs sampler (left) and vanilla (right), for $K=2$, $I_1= I_2=I \in\{50, 100,250,500,1000 \}$ levels, Regime \ref{reg2}, fixed and free variances.
For each scheme, the corresponding bound of Theorem \ref{thm:bound_expected_rev} is also reported, using the true data generating values for the variance parameters.
\begin{figure}[h!]
	\centering
	\begin{subfigure}{.5\textwidth}
		\centering
		\includegraphics[width=.9\linewidth]{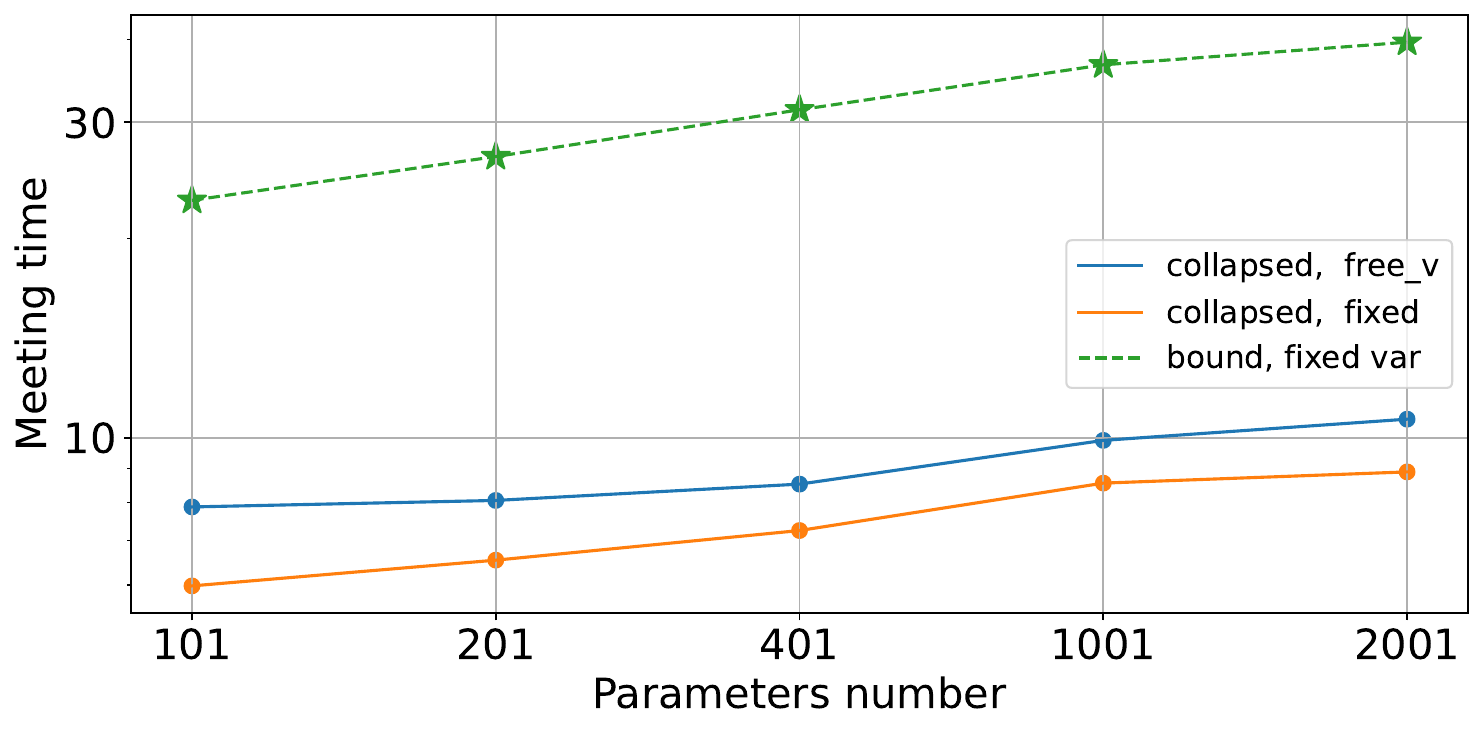}
	\end{subfigure}%
	\begin{subfigure}{.5\textwidth}
		\centering
		\includegraphics[width=.9\linewidth]{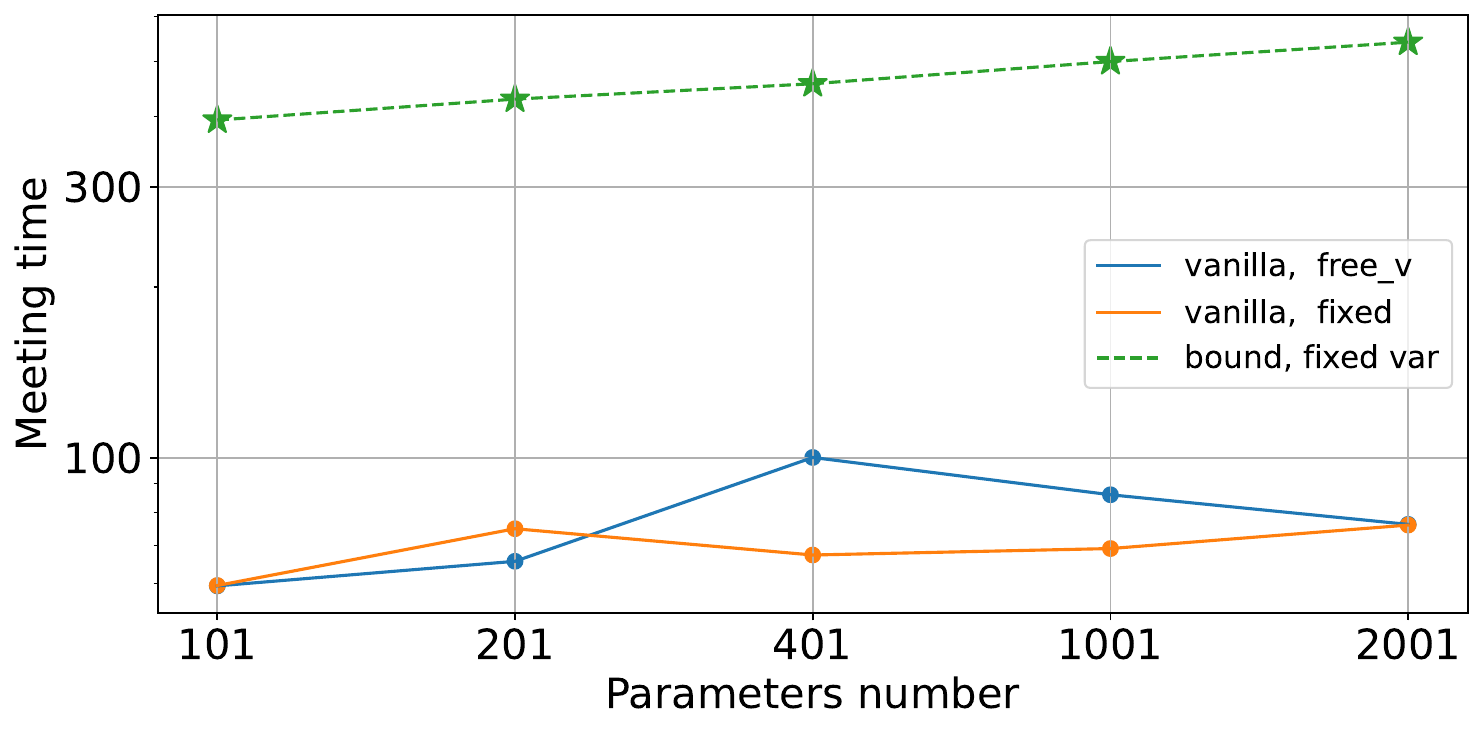}
	\end{subfigure}
	\caption{\add{Estimated meeting times and bounds for $K=2$, $I_1 = I_2$, $\tau_0=\tau_1 = \tau_2 =1$, Regime \ref{reg2}. Left: Algorithm \ref{alg:cg}, right: vanilla algorithm.}}
	\label{fig:three}
\end{figure}
\subsection{One-step vs two-step couplings for Model \ref{ex:gcrem}}
Consider the same specification as in Section  \ref{ssec:numerics_gcrem} and Section \ref{sssec:sim_data}.
In Figure \ref{fig:2s} we report the estimated distribution of meeting times for the collapsed Gibbs scheme, when traditional or \textit{two-step} coupling is implemented on a synthetic dataset with $K=3$ factors and Regime \ref{reg2}.  
\begin{figure}[h!]
	\centering
	\includegraphics[width=.49\linewidth]{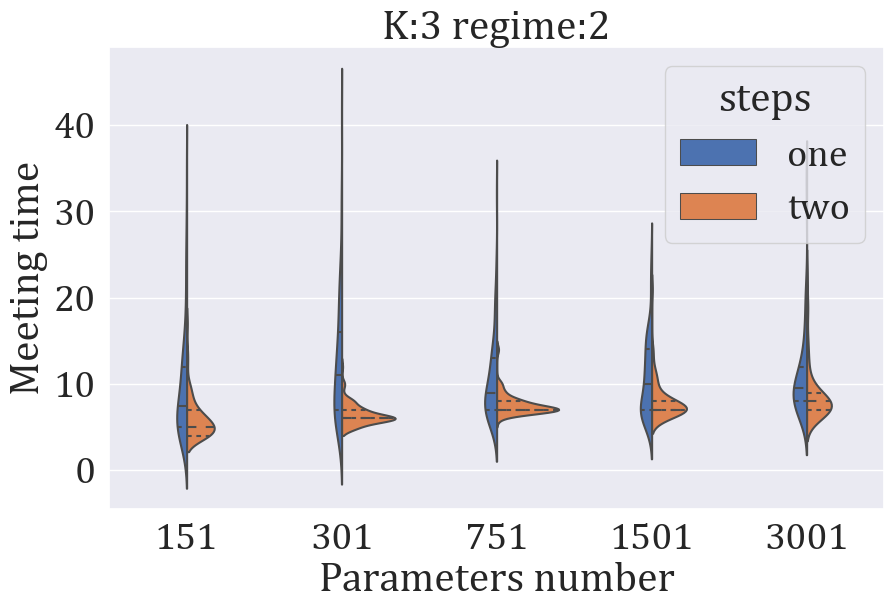}
	\caption{ Estimated distribution of meeting times for $K=3$, $I_1 = ... = I_3$, $\tau_k = 1$ for $k\in\{0,1,2,3\}$ one vs two-step. Regime \ref{reg2}, Algorithm \ref{alg:cg}.}
	\label{fig:2s}
\end{figure}
As can be seen from the above, the distribution of meeting times for the \emph{two-step} strategy is more concentrated on smaller values with considerably lighter tails, thus supporting the choice of a two-step coupling.
On the other hand, we see that, in our context, using one- versus two- step coupling is less influential than, for example, in the one of \citet[Fig.3]{biswas2019estimating}.

\subsection{Couplings of conditionally independent blocks}
\label{ssec:coupl_con_ind}
We now discuss how to implement 
$\bar{P}_{max}[P_k]$ and $\bar{P}_{W_2}[P_k]$, for $P_k$ as in \eqref{eq:kernel_k_update},
 in cases where the associated full conditional factorizes as 
\begin{align}\label{eq:cond_indep}
\pi \left(\x_{k}| \x_{-k} \right) =  \otimes_{i=1}^{I_K} \pi \left(x_{k,i}| \x_{-k} \right),
\end{align}
for $x_{k,i}$ denoting the $i$-th component of the vector $\x_{k}$, i.e. $\x_{k} = (x_{k,1},..., x_{k,I_k})$, and $I_k$ is large.
By \eqref{eq:cond_indep}, independently sampling from the univariate distributions $\pi \left(x_{k,i}| \x_{-k} \right)$ is equivalent to sampling directly from the entire block $\pi \left(\x_{k}| \x_{-k} \right)$. 
The same intuition extends to $W_2$ optimal couplings but not to maximal ones.
In particular, one has that the product of independent $W_2$-optimal couplings of $\pi \left(x_{k,i}| \x_{-k} \right)$ for $i=1,\dots,I_k$ is $W_2$-optimal for $\pi \left(\x_{k}| \x_{-k} \right)$ while the same is not true for maximal ones.
In particular, when $p$ and $q$ are two product measures, one has the following well-known facts, which we collect in a lemma whose proof we omit for brevity.
\begin{lem} 
	\label{lem:prod_tv}
	Let $p,q\in \sP(\sX_1\times\cdots\times\sX_d)$ with 
$p = \bigotimes_{i=1}^d p_i$ and $q = \bigotimes_{i=1}^d q_i$. 
Then 
$\mu_i\in\Gamma_{W_2}(p_i,q_i)$ for all $i=1,\dots,d$ implies $(\otimes_{i=1}^d\mu_i)\in\Gamma_{W_2}(p,q)$.
On the contrary, $\mu_i\in\Gamma_{max}(p_i,q_i)$ for all $i=1,\dots,d$ does not imply $(\otimes_{i=1}^d\mu_i)\in\Gamma_{max}(p,q)$ in general. In particular, one has
	\begin{equation}
		\label{eq:prod_tv}
		\min_{i=1,\dots,d} {\textstyle \Pr_{max}(p_i,q_i)} \ge {\textstyle \Pr_{max}}(p, q) \ge \prod_{i=1}^{d} {\textstyle \Pr_{max}}(p_i,q_i)\,,
	\end{equation}
where we use the notation $\Pr_{max}(p,q):=1-\|p-q\|_{TV}$, and all the inequalities can be strict.
\end{lem}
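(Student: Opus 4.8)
The plan is to prove the three claims of Lemma \ref{lem:prod_tv} separately: the positive statement about $W_2$-optimal couplings of product measures, the counterexample showing the analogous statement fails for maximal couplings, and the two inequalities in \eqref{eq:prod_tv}.

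First I would handle the $W_2$ statement. The key fact is that for product measures the squared Euclidean cost decomposes additively: if $(\X,\Y)\sim\otimes_i\mu_i$ with $\mu_i\in\Gamma(p_i,q_i)$, then $\mathbb{E}\|\X-\Y\|^2=\sum_{i=1}^d\mathbb{E}\|X_i-Y_i\|^2$. Hence $W_2^2(p,q)\le\sum_i W_2^2(p_i,q_i)$ by taking each $\mu_i$ optimal. For the reverse inequality, given any coupling $\nu\in\Gamma(p,q)$, its $i$-th marginal on $\sX_i\times\sX_i$ lies in $\Gamma(p_i,q_i)$, so $\mathbb{E}_\nu\|X_i-Y_i\|^2\ge W_2^2(p_i,q_i)$; summing gives $W_2^2(p,q)\ge\sum_i W_2^2(p_i,q_i)$. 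Therefore $W_2^2(p,q)=\sum_i W_2^2(p_i,q_i)$ and the product of optimal couplings attains this bound, hence is optimal. This is routine and should be a few lines.

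Next, the maximal-coupling counterexample and the inequalities \eqref{eq:prod_tv}. For the right inequality, note that if $X_i=Y_i$ for all $i$ under $\otimes_i\mu_i$ then $\X=\Y$; by independence $\Pr(\X=\Y)\ge\prod_i\Pr_{max}(p_i,q_i)$, and since $\Pr_{max}(p,q)=\sup_{\nu\in\Gamma(p,q)}\Pr(\X=\Y)$ this gives $\Pr_{max}(p,q)\ge\prod_i\Pr_{max}(p_i,q_i)$. For the left inequality, any coupling $\nu\in\Gamma(p,q)$ has $i$-th marginal in $\Gamma(p_i,q_i)$ with $\{\X=\Y\}\subseteq\{X_i=Y_i\}$, so $\Pr_\nu(\X=\Y)\le\Pr_\nu(X_i=Y_i)\le\Pr_{max}(p_i,q_i)$ for each $i$; taking the supremum over $\nu$ and then the minimum over $i$ yields $\Pr_{max}(p,q)\le\min_i\Pr_{max}(p_i,q_i)$. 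Finally, a concrete example where both inequalities are strict and $\otimes_i\mu_i\notin\Gamma_{max}(p,q)$: take $d=2$, $\sX_1=\sX_2=\{0,1\}$, $p_1=p_2=\mathrm{Bern}(1/2)$ and $q_1=q_2=\mathrm{Bern}(1/2+\delta)$ for small $\delta>0$. Then $\|p_i-q_i\|_{TV}=\delta$ so $\Pr_{max}(p_i,q_i)=1-\delta$, while a short computation of $\|p-q\|_{TV}$ for the $2\times2$ product gives $\Pr_{max}(p,q)=1-2\delta+\delta^2<1-\delta$ (strictly between $(1-\delta)^2$ and $1-\delta$), and the product of the two maximal couplings achieves meeting probability exactly $(1-\delta)^2<\Pr_{max}(p,q)$, so it is not maximal.

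I do not anticipate a genuine obstacle here; the statement is folklore and every piece is elementary. The only mild care needed is in presenting the counterexample cleanly — one must exhibit the actual maximal coupling of $p$ and $q$ (or at least argue its existence abstractly) and compute $\|p-q\|_{TV}$ for the product, then compare the three quantities to confirm all inequalities are strict. Since the paper explicitly says ``whose proof we omit for brevity,'' a concise treatment emphasizing the additive decomposition of the $W_2$ cost, the event inclusions $\{\X=\Y\}\subseteq\{X_i=Y_i\}$, and one explicit Bernoulli example suffices.
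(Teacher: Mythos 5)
The paper omits the proof of this lemma entirely (``whose proof we omit for brevity''), so there is no in-paper argument to compare against. Your overall strategy is the standard one and is sound: the additive decomposition of the squared cost plus marginalization of an arbitrary coupling handles the $W_2$ claim correctly; the event inclusion $\{\X=\Y\}\subseteq\{X_i=Y_i\}$ together with the fact that the law of $(X_i,Y_i)$ under any $\nu\in\Gamma(p,q)$ lies in $\Gamma(p_i,q_i)$ gives the left inequality; and the product of maximal couplings, combined with the coupling characterization $\Pr_{max}(p,q)=\sup_{\nu\in\Gamma(p,q)}\Pr_\nu(\X=\Y)$, gives the right one. All of that is correct as written.

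The one genuine error is the arithmetic in your counterexample, which as stated is internally inconsistent. With $p_i=\mathrm{Bern}(1/2)$ and $q_i=\mathrm{Bern}(1/2+\delta)$, set $a=1/2-\delta$ and $b=1/2+\delta$; the pointwise differences of the product measures are $1/4-a^2=\delta-\delta^2$, $1/4-ab=\delta^2$ (twice), and $1/4-b^2=-(\delta+\delta^2)$, so $\|p-q\|_{TV}=\delta+\delta^2$ and $\Pr_{max}(p,q)=1-\delta-\delta^2$. The value $1-2\delta+\delta^2$ you report is exactly $(1-\delta)^2$, i.e.\ the meeting probability of the product of the two maximal couplings, not $\Pr_{max}(p,q)$; this contradicts your own subsequent claims that $\Pr_{max}(p,q)$ lies strictly between $(1-\delta)^2$ and $1-\delta$ and that $(1-\delta)^2<\Pr_{max}(p,q)$. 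With the corrected value the example does deliver everything you want for $\delta\in(0,1/2)$: one has $(1-\delta)^2<1-\delta-\delta^2<1-\delta$, so both inequalities in \eqref{eq:prod_tv} are strict and the product of maximal couplings is not maximal. Fix the computation and the proof is complete.
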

Lemma \ref{lem:prod_tv} implies that, under \eqref{eq:cond_indep}, we can simply take a contractive coupling $\bar{P}_{W_2}[P_k]$ which factorizes across coordinates. 
On the contrary, joint maximal couplings of $\pi \left(\x_{k}| \x_{-k} \right)$ do not factorize across coordinates.
The lower bound $\Pr_{max}(p,q)\geq \prod_{i=1}^{d}\Pr_{max}(p_i,q_i)$ in \eqref{eq:prod_tv} implies that setting $\|p_i,q_i\|_{TV}=O(d^{-1})$ ensures $\Pr_{max}(p,q)$ is bounded away from $0$.

Lemma \ref{lem:asymp} below quantifies the tightness of such lower bound for $d$-dimensional Gaussian distributions with the same variance-covariance matrix. 
We consider the regime where $d$ goes to infinity and the distance between each rescaled mean decreases with $d$, which is arguably descriptive of what happens when using the two-step algorithm (Algorithm \ref{alg:2s}) of Section \ref{ssec:2step} in high dimensions. 
\begin{lem}
	\label{lem:asymp}
	Consider $p= N(\boldsymbol{\mu}, diag(\boldsymbol{\sigma}))$ and $ q =  N(\boldsymbol{\nu}, diag(\boldsymbol{\sigma}))$, $d$-dimensional Gaussian distribution with $\boldsymbol{\sigma}=(\sigma_1^2, ..., \sigma_d^2)$ such that $\frac{{\mu_i}-{\nu_i}}{\sigma_{i}} = c_i d^{-\alpha}, i=1,...,d$, with  $0<\inf_i |c_i|\leq \sup_i |c_i|< + \infty$  and $\alpha > 0$, then:
	\begin{align*}
		&{\textstyle \Pr_{max}}(p, q) \asymp 
		\begin{cases}
			\frac{d^{\alpha-\frac{1}{2}}}{\sqrt{\pi} \bar{c}_d} 
			\exp\left(-\frac{\bar{c}_d^2}{\sqrt{2}} d^{-2\alpha+1} \right)
			&  \text{for } 0<\alpha \le \frac{1}{2}  \\
			1-  \frac{\bar{2 c_d}}{\sqrt{\pi}} d^{-\alpha +\frac{1}{2}} &\text{for } \alpha > \frac{1}{2}
		\end{cases}
		&\hbox{ as }d\to\infty,\\
		&\prod_{i=1}^{d}{\textstyle \Pr_{max}}(p_i, q_i)  \asymp   
		\exp(-d^{1-\alpha} \tilde{c}_d)  
		&\hbox{ as }d\to\infty,
	\end{align*}
	where  $\bar{c}_d:=\sqrt{\frac{\sum_{i=1}^d c_i^2}{8d}}$, $\tilde{c}_d = \frac{\sum_{i=1}^d |c_i|}{d \sqrt{2\pi}}$ and we write $f(x) \asymp g(x)$ whenever $\lim_{x \rightarrow + \infty} \frac{f(x)}{g(x)} = 1$.
\end{lem}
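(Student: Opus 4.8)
The plan is to reduce everything to one-dimensional computations, since both quantities of interest factorize nicely over coordinates in the diagonal-covariance case. For the product $\prod_i \Pr_{max}(p_i,q_i)$ this is immediate: each factor $\Pr_{max}(p_i,q_i) = 1 - \|N(\mu_i,\sigma_i^2) - N(\nu_i,\sigma_i^2)\|_{TV}$, and for two univariate Gaussians with the same variance one has the classical closed form $\|p_i - q_i\|_{TV} = \mathrm{erf}\bigl(|\mu_i-\nu_i|/(2\sqrt{2}\,\sigma_i)\bigr) = \mathrm{erf}(|c_i| d^{-\alpha}/(2\sqrt{2}))$. Hence $\Pr_{max}(p_i,q_i) = 1 - \mathrm{erf}(|c_i| d^{-\alpha}/(2\sqrt2))$, and a Taylor expansion of $1-\mathrm{erf}(x) = 1 - \tfrac{2}{\sqrt\pi} x + o(x)$ as $x\to 0$ gives $\Pr_{max}(p_i,q_i) = 1 - \tfrac{2}{\sqrt\pi}\cdot \tfrac{|c_i|}{2\sqrt2} d^{-\alpha} + o(d^{-\alpha}) = 1 - \tfrac{|c_i|}{\sqrt{2\pi}} d^{-\alpha} + o(d^{-\alpha})$. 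Taking logs, $\log \prod_i \Pr_{max}(p_i,q_i) = \sum_i \log\bigl(1 - \tfrac{|c_i|}{\sqrt{2\pi}} d^{-\alpha} + \cdots\bigr) = -\tfrac{d^{-\alpha}}{\sqrt{2\pi}}\sum_i |c_i| + o(d^{1-\alpha}) = -d^{1-\alpha}\tilde c_d + o(d^{1-\alpha})$ using the uniform boundedness of the $c_i$, which gives the stated $\asymp \exp(-d^{1-\alpha}\tilde c_d)$ up to checking that the $o(\cdot)$ term is genuinely negligible relative to the main term (this needs $\sum|c_i|\gtrsim d$, which follows from $\inf_i|c_i|>0$).

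For $\Pr_{max}(p,q)$ itself, the key reduction is that for two $d$-dimensional Gaussians with a common covariance $\mathrm{diag}(\boldsymbol\sigma)$, the total variation distance depends only on the Mahalanobis norm $r_d := \|\mathrm{diag}(\boldsymbol\sigma)^{-1/2}(\boldsymbol\mu-\boldsymbol\nu)\| = \bigl(\sum_i c_i^2 d^{-2\alpha}\bigr)^{1/2} = \bigl(\sum_i c_i^2\bigr)^{1/2} d^{-\alpha}$. Indeed, after the affine change of variables that whitens the covariance, the problem becomes $N(\mathbf 0, I_d)$ versus $N(\mathbf m, I_d)$ with $\|\mathbf m\| = r_d$, and by rotational invariance this has the same TV distance as the one-dimensional pair $N(0,1)$ vs $N(r_d,1)$, namely $\mathrm{erf}(r_d/(2\sqrt 2))$. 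So $\Pr_{max}(p,q) = 1 - \mathrm{erf}(r_d/(2\sqrt2))$ with $r_d/(2\sqrt2) = \tfrac{1}{2\sqrt2}\bigl(\sum_i c_i^2\bigr)^{1/2} d^{-\alpha}$. Writing $\bar c_d = \bigl(\tfrac{\sum_i c_i^2}{8d}\bigr)^{1/2}$ as in the statement, one checks $r_d/(2\sqrt2) = \bar c_d\, d^{1/2-\alpha}$, so everything comes down to the asymptotics of $1-\mathrm{erf}(x)$ for the argument $x = \bar c_d d^{1/2-\alpha}$. In the regime $\alpha > 1/2$ the argument tends to $0$, and the linear Taylor expansion $1 - \mathrm{erf}(x) \sim 1 - \tfrac{2}{\sqrt\pi} x$ gives $\Pr_{max}(p,q) \sim 1 - \tfrac{2\bar c_d}{\sqrt\pi} d^{1/2-\alpha}$, matching the second branch. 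In the regime $0<\alpha\le 1/2$ the argument $x\to\infty$ (or stays bounded when $\alpha=1/2$, which still falls under the tail expansion since $\bar c_d$ is bounded below), and one uses the Mills-ratio asymptotic $1 - \mathrm{erf}(x) = \mathrm{erfc}(x) \sim \tfrac{e^{-x^2}}{x\sqrt\pi}$ as $x\to\infty$; substituting $x^2 = \bar c_d^2 d^{1-2\alpha}$ gives $\Pr_{max}(p,q) \asymp \tfrac{1}{\sqrt\pi\,\bar c_d d^{1/2-\alpha}} \exp(-\bar c_d^2 d^{1-2\alpha})$, which is the first branch once one absorbs the $\sqrt 2$ bookkeeping from $r_d/(2\sqrt2)$ into the $\bar c_d^2/\sqrt2$ in the exponent. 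I should double-check that constant carefully: $x^2 = \bar c_d^2 d^{1-2\alpha}$ with $\bar c_d^2 = \sum c_i^2/(8d)$, so $x^2 = \tfrac{\sum c_i^2}{8} d^{-2\alpha}$, versus the stated exponent $\tfrac{\bar c_d^2}{\sqrt2} d^{1-2\alpha} = \tfrac{\sum c_i^2}{8\sqrt 2} d^{-2\alpha}$ — so the paper's $\bar c_d$ in the exponent is not literally $x$ but a rescaled version; I would reconcile the normalization constants at the end, likely the paper's $\bar c_d$ is defined so that the exponent reads cleanly, and I'll just track the $\sqrt 2$ honestly.

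The main obstacle is not any single hard estimate but rather handling the boundary case $\alpha = 1/2$ and making the $\asymp$ (asymptotic equivalence, ratio $\to 1$, not just $\Theta$) claims rigorous, since $\asymp$ is a sharp statement: I need the error terms in the $\mathrm{erf}$ and $\mathrm{erfc}$ expansions to vanish in the relevant ratio, which requires that $\bar c_d$ and $\tilde c_d$ do not themselves oscillate in a way that breaks convergence — this is exactly why the hypothesis $0 < \inf_i |c_i| \le \sup_i|c_i| < \infty$ is imposed, and I would use it to bound $\bar c_d$ and $\tilde c_d$ between positive constants so that, e.g., $d^{1-2\alpha}\bar c_d^2 \to\infty$ for $\alpha<1/2$ while the polynomial prefactor correction is $o(1)$ multiplicatively. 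For $\alpha = 1/2$ the argument $x=\bar c_d$ stays in a bounded range away from $0$ and $\infty$, so neither the small-$x$ nor the large-$x$ expansion is uniformly valid; strictly the stated first branch with $\alpha=1/2$ reads $\Pr_{max}(p,q)\asymp \tfrac{1}{\sqrt\pi\,\bar c_d}\exp(-\bar c_d^2/\sqrt2)$, which is just the exact value $1-\mathrm{erf}(\bar c_d)$ rewritten — so I would either note that the "$\asymp$" in that corner is to be read as the exact expression, or assume $\bar c_d$ converges so that both sides converge to the same constant. The rest is bookkeeping: collect the one-dimensional formulas, insert $r_d$ and $\prod_i$, expand, and verify constants.
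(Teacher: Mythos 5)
Your proposal is correct and follows essentially the same route as the paper's own proof: reduce to the closed-form $\mathrm{erf}$ expression for the total variation distance between Gaussians with common covariance, Taylor-expand $\mathrm{erf}$ near $0$ for the product and for the joint coupling when $\alpha>1/2$, and use the Gaussian tail (Mills-ratio) asymptotic when $\alpha<1/2$. The two issues you flag are real rather than bookkeeping on your side: the paper's own computation also yields exponent $-\bar{c}_d^{\,2}\,d^{1-2\alpha}$, so the extra $1/\sqrt{2}$ in the stated exponent appears to be a typo, and the paper's proof only treats $0<\alpha<1/2$, leaving the boundary case $\alpha=1/2$ (where the $\mathrm{erf}$ argument stays bounded) unaddressed exactly as you observe.
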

It follows that both probabilities go to zero for $0<\alpha <0.5$ as $d \rightarrow + \infty$, while for $0.5<\alpha <1$ only $\Pr_{max}(p,q)$ goes to 1. For $\alpha >1$, both probabilities converge to 1 (although with different regimes).
In Figure \ref{fig:d_tv_gauss} we report the ratio of the blocked and the component-wise meeting probabilities, i.e.\  $\Pr_{max}(p,q)/(\prod_{i=1}^{d}\Pr_{max}(p_i,q_i))$, for a $d$-dimensional Gaussian distribution with independent components, where $c_i=1$ for all $i$ and different values of $\alpha$, along with a dotted line representing the value 1.
\begin{figure}[h!]
	\centering
	\includegraphics[width=.65\linewidth]{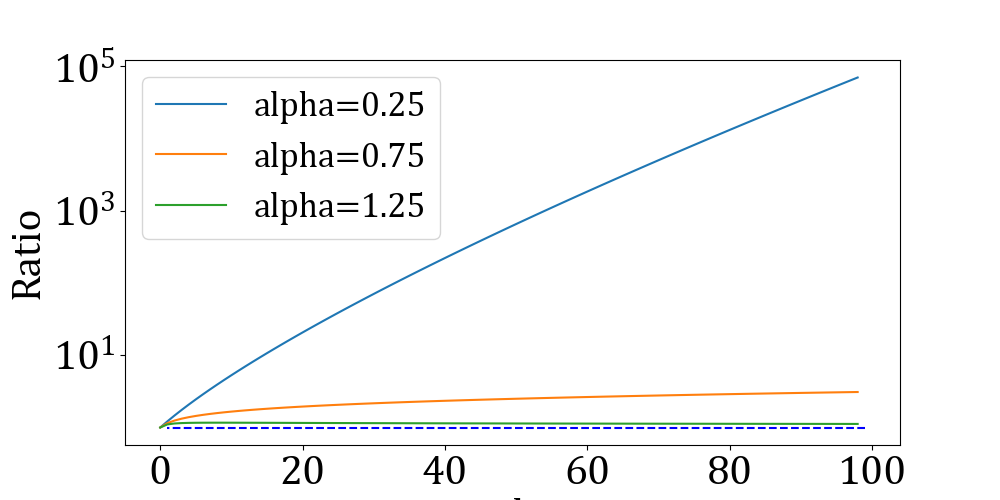}
	\caption{Ratio of blocked and component-wise meeting probability for $d$-dimensional Gaussian, different $\alpha$ values. Dimension on $x$-axis, logarithmic scale on $y$-axis.}
	\label{fig:d_tv_gauss}
\end{figure}
Figure \ref{fig:d_tv_gauss} shows that for $\alpha >1$, the blocked maximal coupling has meeting probabilities comparable to that of the independent counterpart.

\subsection{\add{Additional simulations for Model \ref{ex:ngcrem}}}

We study the performance of the No-U-Turn-Sampler (NUTS) \citep{hoffman2014no} applied to Model \ref{ex:ngcrem}. Note that this approach does not specifically use the structure of Model \ref{ex:ngcrem} and thus might be expected to be sub-optimal for that reason.

\begin{figure}[h!]
	\centering
		\includegraphics[width=.5\linewidth]{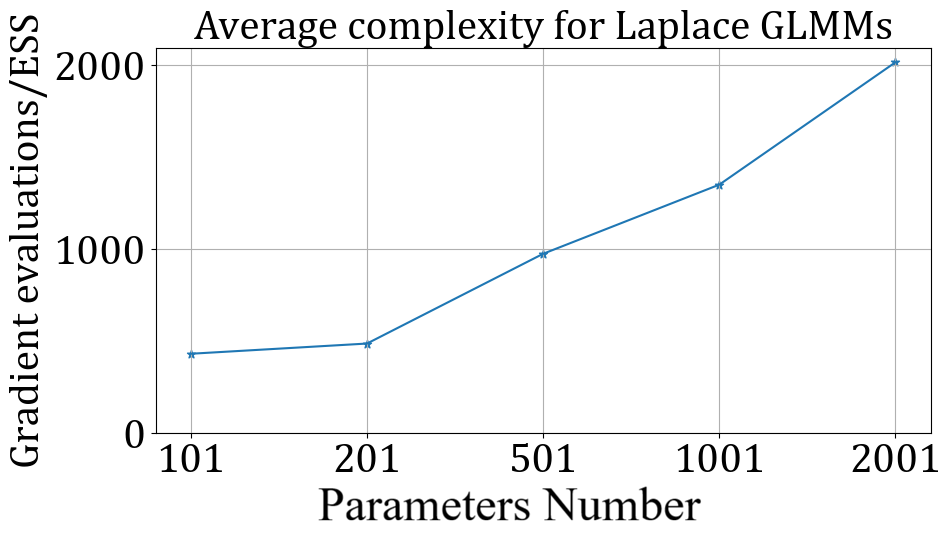}
	\caption{ Model \ref{ex:ngcrem} with Laplace response, $K=2$, $I_1= I_2$, $\tau_1 = \tau_2 = 1$, $b=1$.}
	\label{fig:stan}
\end{figure}

We illustrate in Figure \ref{fig:stan} the convergence speed of the STAN implementation \citep{stan} of NUTS with the default setting, for estimating Model \ref{ex:ngcrem} with different number of parameters.
Specifically, we report the average number of gradient evaluations per Effective Sample Size (ESS), considering the minimum ESS across parameters.

\section{Couplings for Metropolis-Hastings algorithms for product targets}
\label{ssec:MH}
\label{annex:mh}

In this section we discuss procedures for efficient coupling of  Metropolis-Hastings (MH) kernels for targets with independent components. The motivation for such a construction stems from Algorithm \ref{alg:non_gauss} applied to Model \ref{ex:ngcrem}, where each iteration consists of updating $K$ blocks of (conditionally) independent coordinates, since each $\mathcal{L}(\boldsymbol{\xi}_{k}| \mu, \mathbf{a}_{-k}, \boldsymbol{\tau},\mathbf{y})$ factorizes in $\prod_{i=1}^{I_k} \mathcal{L}({\xi}_{k,i} |\mu, \mathbf{a}_{-k}, \boldsymbol{\tau},\mathbf{y})$ for $k=1,...,K$, whose distribution might be known only up to constants.
Exploiting such independence in the coupling construction, one can derive coupling strategies whose meeting times grows logarithmically with $I_k$, see below.
Previous works on couplings for MH
kernels include \cite{OlearyWang}, 
where among other things the authors suggest 
employing a maximal reflection coupling on Gaussian proposals and paired acceptance, and \cite{gcrn}, where the authors focus on asymptotically optimally contractive couplings.

Consider a target distribution $\nu$ on $\sX = \mathbb{R}^{I_k}$ with independent components, i.e. \ $\nu=\otimes_{i=1}^{I_k} \nu_i$. The general Metropolis kernel targeting $\nu$ has the form
\begin{equation}
	\label{eq:mhk_b}
	P_b^{MH}(\x, d\x') = Q_b(\x, d\x') a_b(\x,\x') + \delta_{\x}(d\x') r_b(\x),
\end{equation}
where $Q_b(\x, d\x')$ denotes the proposal distribution on $\sX = \mathbb{R}^{I_k}$,
$a_b(\x,\x')$ is the Metropolis acceptance ratio, i.e.\ $  a_b(\x,\x') = 1 \land \frac{\nu(\x')}{\nu(\x)} \frac{Q_b(\x',\x )}{Q_b(\x, \x')}$, and $r_b(\x) = 1- \int_{\mathcal{X}} Q_b(\x, d\x')  a_b(\x, d\x') $. 
The standard way to sample from $P_b^{MH}$ in \eqref{eq:mhk_b} is sampling a proposal $\x'$ from $Q_b(\x, \cdot)$, compute $a_b(\x, \x')$ and accept if $U \sim U(0,1)$ is smaller than the acceptance ratio. Given the known independence structure of the target, however, it is possible to propose and accept/reject each component individually, leading to much higher acceptance rates and better dimensionality scaling. The resulting kernel, which is a product of univariate MH kernels, can be written as 
\begin{equation}
\label{eq:mhk_f}
P_f^{MH}(\x, d\x') =  \otimes_{i=1}^{I_k} P_i^{MH}(x_i, d x'_i) = \otimes_{i=1}^{I_k} \left(   Q_f(x_i, dx'_i) a_f(x_i,x'_i) + \delta_{x_i}(dx'_i) r_f(x_i) \right),
\end{equation}
where $Q_f(x_i, dx'_i)$ is a proposal kernel on $\mathbb{R}$, $a_f(x_i,x'_i)= 1 \land \frac{\nu_i(x'_i)}{\nu_i(x)} \frac{Q_f(x'_i,x_i)}{Q_f(x_i, x'_i)}$, and analogously  $r_f(x) = 1- \int_{\mathbb{R}} Q_f(x_i, dx'_i)  a_f(x_i, dx'_i)$.
Note that \eqref{eq:mhk_b} proposes and accept jointly all the components at once, while  \eqref{eq:mhk_f} does it component-wise.
The coupling strategy can exploit such independence in two different ways, namely factorizing both the proposal and acceptance step or only the acceptance step. 

Differently from models with conjugate full-conditional distributions (such as e.g.\ Models \ref{ex:gcrem} and \ref{ex:pmf}), (optimally) contractive couplings for MH kernels are difficult to implement, requiring numerical integration, and in our simulations they did not provide significant enough decrease in distance within subsequent steps to justify their use.
Similarly, simple \emph{crn} couplings of the MH kernels, i.e.\ using same random number for the proposal distributions (amounting at implementing the $W_2$ optimal coupling on the proposals whenever Gaussians) and acceptance steps, were also not effective in contracting efficiently the chains (specifically they typically soon reach a plateau distance not small enough to provide high chances of coalescence). For the above reasons, when using MH steps to update from high-dimensional and conditionally independent blocks, we avoid the two step strategy of Algorithm \ref{alg:2s} and instead concentrate on one-step, maximal-only strategies.

We consider kernels with synchronous acceptance, i.e. using same uniform for accept-reject in the $\x$ and $\y$ chain. For $a_b, a_f, r_b$ and $r_f$ as in \eqref{eq:mhk_b} and \eqref{eq:mhk_f}, we define
\begin{align*}
    &\bar{a}_b = \left(\begin{array}{c}  a_b(\x,\x')\cdot \textbf{1}_{I_k}  \\  a_b(\y,\y') \cdot \textbf{1}_{I_k} \end{array} \right) \in \mathbb{R}^{2I_k}, \qquad &\bar{a}_f= \left(\begin{array}{c}  a_f(x_i,x'_i) \\  a_f(y_i, y'_i) \end{array} \right) \in \mathbb{R}^{2}, \\
    &\Delta_b= \left(\begin{array}{c}   \delta_{\x}(d\x') r_b(\x) \\   \delta_{\y}(d\y') r_b(\y) \end{array} \right) \in \mathbb{R}^{2I_k},  \qquad &\Delta_f= \left(\begin{array}{c}   \delta_{x_i}(dx'_i) r_f(x_i) \\   \delta_{y_i}(dy'_i) r_b(y_i) \end{array} \right) \in \mathbb{R}^{2},
\end{align*}
where $\textbf{1}_{I_k}$ denotes the vector of ones of length $I_k$.
Below we illustrate numerically the performance of the following list of possible coupled kernels:
\begin{enumerate} 
\item \emph{Blocked reflection}: $\bar{P}_{b,r} :=  \bar{P}_{max}[Q_b] \odot \bar{a}_b + \Delta_b  $, where $\bar{P}_{max}[Q_b]$ is Algorithm \ref{alg:max_refl} and $\odot$ denotes the Hadamard product, i.e.\ component-wise product. 
\item \emph{Blocked maximal}: $\bar{P}_{b,m}:=  \bar{P}_{max}[Q_b] \odot \bar{a}_b +\Delta_b  $, where $\bar{P}_{max}[Q_b]$ is Algorithm \ref{alg:max_reg}. 
\item  \emph{Blocked factorized reflection}: $\bar{P}_{bf,r}:=   \otimes_{i=1}^{I_k} \left( \bar{P}_{max}[Q_b]_{[i]} \odot \bar{a}_f + \Delta_f \right)  $, where, if $(\x, \y) \sim \bar{P}_{max}[Q_b]$, the symbol $\bar{P}_{max}[Q_b]_{[i]}$ indicates the vector $(x_i,y_i)$, and $\bar{P}_{max}[Q_b]$ is Algorithm \ref{alg:max_refl}. 
\item  \emph{Blocked factorized maximal}: $\bar{P}_{bf,m}:= \otimes_{i=1}^{I_k}  \left( \bar{P}_{max}[Q_b]_{[i]} \odot \bar{a}_f + \Delta_f \right)  $, where $\bar{P}_{max}[Q_b]$ is Algorithm \ref{alg:max_reg}.
\item  \emph{Fully factorized reflection}:  $\bar{P}_{ff,r}:= \otimes_{i=1}^{I_k} \left(  \bar{P}_{max}[Q_f] \odot \bar{a}_f + \Delta_f  \right) $, where $\bar{P}_{max}[Q_b]$ is Algorithm \ref{alg:max_refl}.
\item   \emph{Fully factorized maximal}:  $\bar{P}_{ff,m}:=\otimes_{i=1}^{I_k} \left(  \bar{P}_{max}[Q_f] \odot \bar{a}_f + \Delta_f  \right)  $, where $\bar{P}_{max}[Q_b]$ is Algorithm \ref{alg:max_reg}. 
\end{enumerate}
We report in Algorithm \ref{alg:mh_max} the pseudo-code for one iteration of either $\bar{P}_{bf,r}$, $\bar{P}_{bf,m}$, $\bar{P}_{ff,r}$ or $\bar{P}_{ff,m}$, depending on the specification of $\bar{P}[Q]$.
\begin{algorithm}[h!]
	\textbf{Input:} $(\X^t, \Y^t)$, target $\nu$, proposal $Q$, desired coupling $\bar{P}$\\
	sample $(\X', \Y') \sim \bar{P}[Q]((\X^t, \Y^t), \cdot) $
	\For{$i = 1,...,I_k$}{
		sample $U  \sim U(0,1) $
		\If{$U \le \frac{\nu(X'_i)}{\nu(X^t_i)} \frac{Q(X'_i,X^t_i )}{Q(X^t_i, X'_i)} $}
		{set $X^{t+1}_i= X'_i$}
		\Else {set $X^{t+1}_i= X^t_i$}
		\If{$U \le \frac{\nu(Y'_i)}{\nu(Y^t_i)} \frac{Q(Y'_i,\Y^t_i )}{Q(Y^t_i, Y'_i)} $}
		{set $Y^{t+1}_i= Y'_i$}
		\Else {set $Y^{t+1}_i= Y^t_i$}
	}
	$t = t+1$\\	
	\caption{Coupling strategy for MH with independent target}
	\label{alg:mh_max}
\end{algorithm}

We provide a numerical illustration where $\nu$ is taken to be a product of independent Laplace distributions, i.e.\ $\nu = \bigotimes_{i=1}^{d} Lapl(0,1/\sqrt{2})$. 
Consider $(\X^t, \Y^t)_{t \ge 0}$  coupled chains marginally evolving via the kernels 1 to 6 above with 
$Q_b(\x, \cdot)= N(\x, \sqrt{2}I_d)$ or $Q_f(x_i, dx_i) =  N(x_i, \sqrt{2})$, where step-sizes are chosen following the guidance in \cite{optimal_scaling} for univariate Metropolis steps.
We plot in Figure \ref{fig:meeting_times} the average meeting times for coupled chains with different strategies, as the target dimension $d$ grows.
\begin{figure}[h!]
	\centering
	\begin{subfigure}{.99\textwidth}
		\centering
		\includegraphics[width=.5\linewidth]{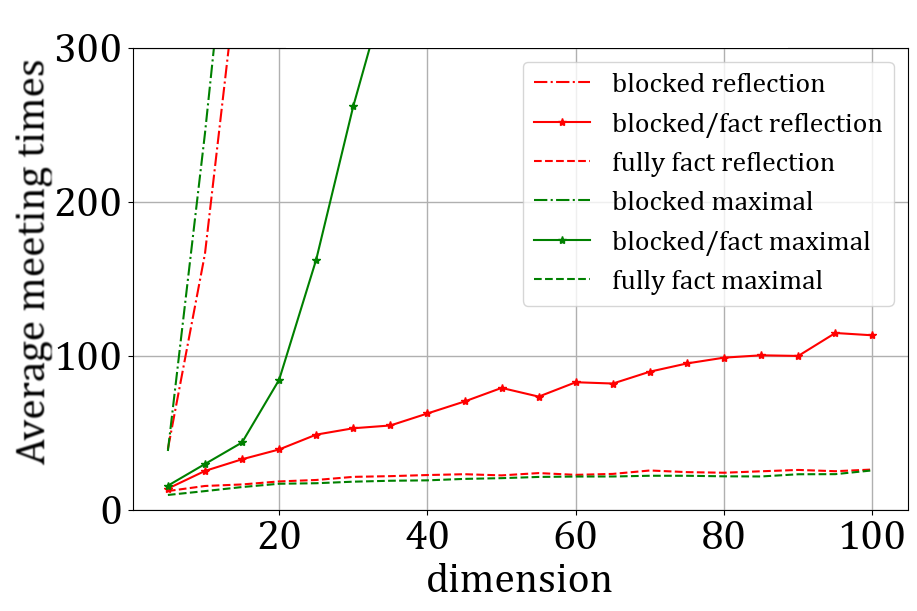}
	\end{subfigure}%
	\caption{Average meeting times for different dimensions, Laplace target.}
	\label{fig:meeting_times}
\end{figure}
As one might expect, Figure \ref{fig:meeting_times} shows that the strategies yielding smaller meeting times are those leveraging the independence structure of the target, proposing and accepting the components independently. \emph{Blocked} strategies instead  generally perform worse, with the sole exception of \emph{block/fact} reflection, due to the intrinsic contraction properties of Algorithm \ref{alg:max_refl}. 

To better illustrate the phenomenon, Figure \ref{fig:comp_prob} plots the proportion of components that did not meet for the same chains as in Figure \ref{fig:meeting_times}, for $d = \{3,100\}$.
\begin{figure}[h!]
	\centering
	\begin{subfigure}{.5\textwidth}
		\centering
		\includegraphics[width=.9\linewidth]{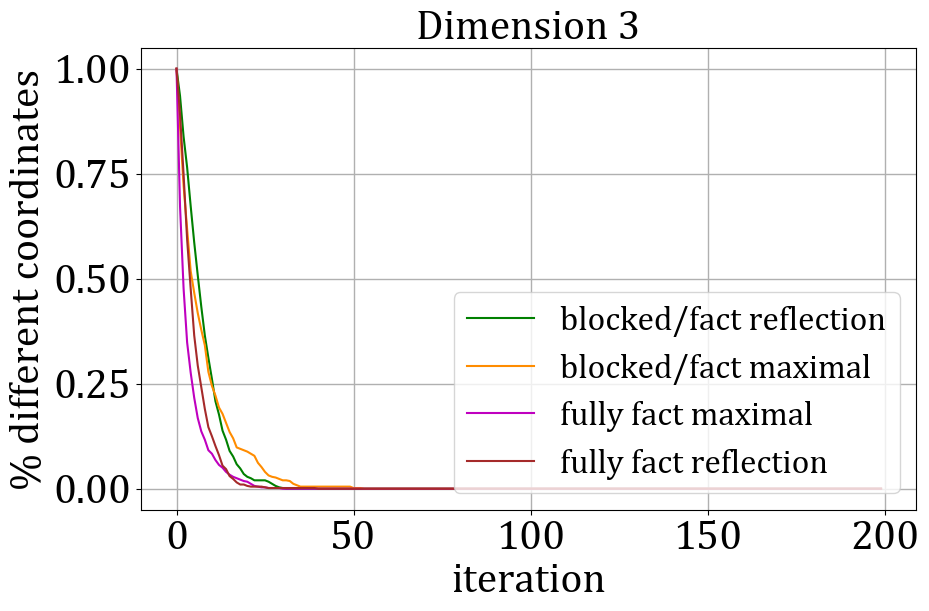}
	\end{subfigure}%
	\begin{subfigure}{.5\textwidth}
		\centering
		\includegraphics[width=.9\linewidth]{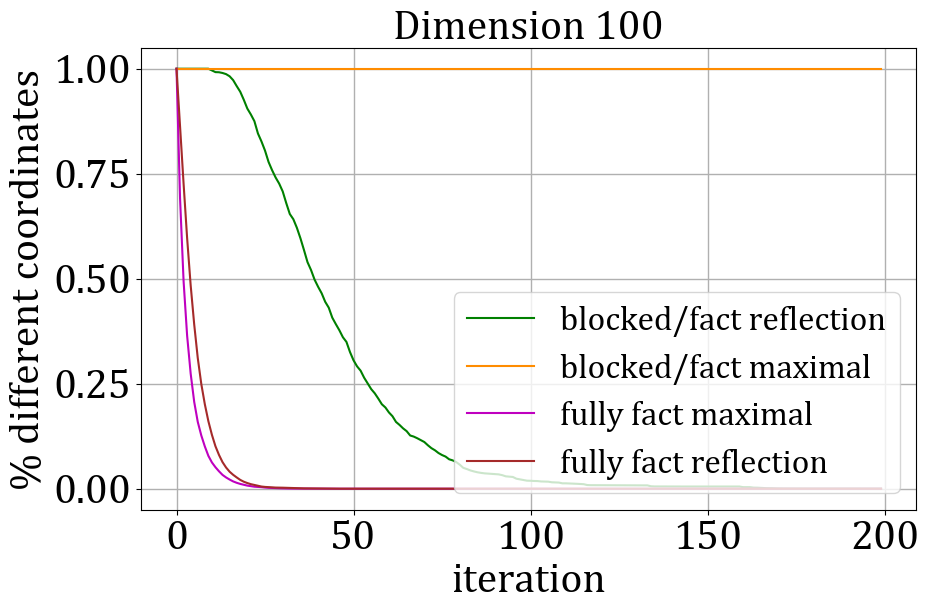}
	\end{subfigure}
	\caption{Estimated percentage of non coalesced components for blocked/component-wise proposals and component-wise acceptance; $d=3,100$.}
	\label{fig:comp_prob}
\end{figure}

The above examples suggest that, in case of conditionally independent blocks, fully-factorized couplings of the MH updates are preferable. 
This is consistent with the consideration that, 
in the fully-factorized case, 
the overall meeting time, which coincides with the one of the slowest component that coalesces, is simply the supremum of $d$ independent random variables, hence typically growing logarithmically as the dimensionality grows (or at least sub-linearly, see \cite{CORREA2021785}).

\section{Algorithmic implementation details}
\label{s:sec:algimpl}
In this section, we report the explicit expressions for the full-conditional distributions required to implement the proposed algorithms for Models \ref{ex:gcrem} and \ref{ex:pmf}.
\subsection{Full conditionals for Model \ref{ex:gcrem}}
\label{ssec:distributions}
Under \eqref{eq:gauss_crem}, the conditional distributions required to implement Algorithm \ref{alg:cg} are
\begin{align*}
\mathcal{L}(\mu | \mathbf{a}_{-k}, \boldsymbol{\tau}, \mathbf{y})&= N \left( \frac{1}{\sum_j s^{(k)}_j} \sum_j s^{(k)}_j \left( \tilde{y}_{k,j} - \frac{\sum_{l\ne k} \sum_i a_{l,i} n^{(k,l)}_{ji}}{n^{(k)}_j} \right), \frac{1}{\tau_k \sum_j s^{(k)}_j } \right),\\
\mathcal{L}(a_{k,i}|\textbf{a}_{-k},\mu, \boldsymbol{\tau}, \textbf{y})&= N\left( \frac{n^{(k)}_i \tau_0}{n^{(k)}_i \tau_0 + \tau_k} \left(\tilde{y}_{k,i} -\mu- \frac{ \sum_{l\ne k,0} \sum_{j=1}^{I_l} a_{l,j} n^{(k,l)}_{ij}}{n^{(k)}_i}\right), \frac{1}{n^{(k)}_i\tau_0 + \tau_k} \right), \\
\mathcal{L}(\tau_{k}|\textbf{a}, \mu, \boldsymbol{\tau}_{-k}, \textbf{y})&= Gamma \left( \frac{I_k-1}{2}, \frac{2}{\sum_{i=1}^{I_k} (a_{k,i})^2} \right),
\end{align*}
where 
$n_i^{(k)}=\sum_{n=1}^N\mathbb{I}(i_k[n]=i)$,  $s_j^{(k)}= n_j^{(k)}\tau_0 /(\tau_k + n_j^{(k)} \tau_0)$,    $n_{ji}^{(k,l)}=\sum_{n=1}^N\mathbb{I}(i_k[n]=j)\mathbb{I}(i_l[n]=i)$ denotes the number of observations of level $j$ of factor $k$ and $i$ of factor $l$ and finally $ \tilde{y}_{k,i}=\sum_{n: i_k[n]=i} y_n /|\{n: i_k[n]=i\} |$ is the average of all observations with level $i$ on factor $k$.
See also \citet[Eq. 4 and Prop. 2]{papaspiliopoulos2018scalable} for similar expressions.

\begin{algorithm}
    \textbf{Input:} $\X   = (\mu  , \mathbf{a} , \tau _0   , \dots ,\tau_K  )$, $\tilde{\X}  = (\tilde{\mu}  , \tilde{\mathbf{a}} , \tilde{\tau} _0   , \dots ,\tilde{\tau}_K  )$.\\
    \If{ $\|\X - \tilde{\X} \| > \varepsilon$}{
	\add{Sample $(\tau_0, \tilde{\tau}_0) \mid \textbf{a}, \mu, \boldsymbol{\tau}_{-0},\tilde{\textbf{a}}, \tilde{\mu}, \tilde{\boldsymbol{\tau}}_{-0}, \textbf{y}$, using CRN coupling in Algorithm \ref{alg:it}.\\}
	\For{k= 1,...,K}
	{
		\add{Sample $(\tau_k, \tilde{\tau}_k) \mid \textbf{a}, \mu, \boldsymbol{\tau}_{-k}, \tilde{\textbf{a}}, \tilde{\mu}, \tilde{\boldsymbol{\tau}}_{-k}, \textbf{y}$, using CRN coupling in Algorithm \ref{alg:it}.\\
		Sample $(\mu  , \tilde{\mu}  ) \mid \mathbf{a}_{-k} , \boldsymbol{\tau} , \tilde{\mathbf{a}}_{-k} , \tilde{\boldsymbol{\tau}} , \mathbf{y}
		$,\, using CRN coupling in \eqref{eq:general_pc_sample}.\\
        Sample $(\mathbf{a}_k  , \tilde{\mathbf{a}}_k  ) \mid \mu,  \mathbf{a}_{-k} , \boldsymbol{\tau} , \tilde{\mu},  \tilde{\mathbf{a}}_{-k} , \tilde{\boldsymbol{\tau}} , \mathbf{y}
		$,\, using CRN coupling in \eqref{eq:general_pc_sample}.\\}
	}
    }
    \If{ $\|\X - \tilde{\X} \| \leq \varepsilon$}{
	\add{Sample $(\tau_0, \tilde{\tau}_0) \mid \textbf{a}, \mu, \boldsymbol{\tau}_{-0},\tilde{\textbf{a}}, \tilde{\mu}, \tilde{\boldsymbol{\tau}}_{-0}, \textbf{y}$, using Algorithm \ref{alg:max_reg}.}\\
	\For{k= 1,...,K}
	{
		\add{Sample $(\tau_k, \tilde{\tau}_k) \mid \textbf{a}, \mu, \boldsymbol{\tau}_{-k}, \tilde{\textbf{a}}, \tilde{\mu}, \tilde{\boldsymbol{\tau}}_{-k}, \textbf{y}$, using Algorithm \ref{alg:max_reg}.\\
		Sample $(\mu  , \tilde{\mu}  ) \mid \mathbf{a}_{-k} , \boldsymbol{\tau} , \tilde{\mathbf{a}}_{-k} , \tilde{\boldsymbol{\tau}} , \mathbf{y}
		$,\, using Algorithm \ref{alg:max_reg}, or Algorithm \ref{alg:max_refl} when the variances are equal.\\
        Sample $(\mathbf{a}_k  , \tilde{\mathbf{a}}_k  ) \mid \mu,  \mathbf{a}_{-k} , \boldsymbol{\tau} , \tilde{\mu},  \tilde{\mathbf{a}}_{-k} , \tilde{\boldsymbol{\tau}} , \mathbf{y}
		$,\, using Algorithm \ref{alg:max_reg}, or Algorithm \ref{alg:max_refl} when the variances are equal.\\}
	}
    }
	\caption{\add{Coupled collapsed Gibbs kernel for Model \ref{ex:gcrem}}}
	\label{alg:cg_coupling}
\end{algorithm}

\subsection{Local centering algorithm for Model \ref{ex:pmf}}
Under \eqref{eq:pmf}, 
the conditional distributions required to implement Algorithm \ref{alg:pmf} are
\begin{align*}
    \mathcal{L}(\textbf{u}_i | \textbf{v}, \rho, \tau_0, \textbf{y} ) &= N( \bmu_{\textbf{u}_i}, Q_{\textbf{u}_i}^{-1}),
    &
    \mathcal{L}(\bar{\textbf{u}}_i | \textbf{v}, \rho, \tau_0, \textbf{y} ) &= \frac{1}{\rho }  N( \bmu_{\textbf{u}_i}, Q_{\textbf{u}_i}^{-1}),\\ 
    \mathcal{L}(\textbf{v}_j | \textbf{u}, \rho, \tau_0, \textbf{y} ) &= N( \bmu_{\textbf{v}_j}, Q_{\textbf{v}_j}^{-1}),
    &
    \mathcal{L}(\bar{\textbf{v}}_j | \textbf{u}, \rho, \tau_0, \textbf{y} ) &=  \frac{1}{\rho}  N( \bmu_{{\textbf{v}}_j}, Q_{{\textbf{v}}_j}^{-1}) , 
\end{align*}
where $\bmu_{\textbf{u}_i}$, 
$\bmu_{\textbf{v}_j}$, 
$Q_{\textbf{u}_i} = (q_{rs})_{r,s =1}^d$ and $Q_{\textbf{v}_j} = (p_{rs})_{r,s =1}^d$ are given by
\begin{align*}
q_{rr} & = 1+\tau_0 \rho^2 \sum_{n:i[n]=i} v_{j[n],r}^2,  &p_{rr}= 1+\tau_0 \rho^2 \sum_{n:j[n]=j} u_{i[n],r}^2 \text{  for $r=1,...,d$}, \\
q_{rs} & = \tau_0 \rho^2 \sum_{n:i[n]=i} v_{j[n],r}  v_{j[n],s}, &p_{rs}= \tau_0 \rho^2 \sum_{n:j[n]=j} u_{i[n],r}  u_{i[n],s}, \text{  for $r,s=1,...,d$},\\
\bmu_{\textbf{u}_i} &= Q_{\textbf{u}_i}^{-1} \left(\tau_0 \rho \sum_{n:i[n]=i}  \textbf{v}_{j[n]} y_n \right), &\bmu_{\textbf{v}_j} = Q_{\textbf{v}_j}^{-1}\left(\tau_0 \rho \sum_{n:j[n]=j} \textbf{u}_{i[n]} y_n \right)\,,
\end{align*}
and
\begin{align*}
    \mathcal{L}(\rho^{-2}| \bar{\textbf{u}}, \textbf{v}, \tau_0, \textbf{y} ) &= Gamma\left(a+\frac{dI_1}{2}, \left(\frac{1}{b}+ \sum_{i=1}^{I_1} \frac{\| \bar{\textbf{u}}_i\|^2}{2} \right)^{-1} \right),\\
    \mathcal{L}(\rho^{-2}| \textbf{u}, \bar{\textbf{v}}, \tau_0, \textbf{y} ) &= Gamma\left(a+\frac{dI_2}{2}, \left(\frac{1}{b}+ \sum_{j=1}^{I_2} \frac{\| \bar{\textbf{v}}_j\|^2}{2} \right)^{-1} \right),\\
     \mathcal{L}(\tau_0| \textbf{u}, \textbf{v}, \rho, \textbf{y} ) &= Gamma  \left(c + \frac{N}{2}, \left( \frac{1}{d} + \sum_{n=1}^N  \frac{(y_n-\rho \textbf{u}_{i[n]}\textbf{v}_{j[n]})^2}{2}   \right)^{-1} \right).
\end{align*}
For the vanilla scheme with improper prior $p(\rho) \propto 1$, then 
\begin{equation}
     \mathcal{L}(\rho^{-2}| {\textbf{u}}, \textbf{v}, \tau_0, \textbf{y} ) = TG \left(\frac{\sum_{n=1}^N \textbf{u}_{i[n]}^\top \textbf{v}_{j[n]} y_n}{ \sum_{n=1}^N (\textbf{u}_{i[n]}^\top \textbf{v}_{j[n]})^2}, \frac{1}{\tau_0  \sum_{n=1}^N (\textbf{u}_{i[n]}^\top \textbf{v}_{j[n]})^2 }; 0, + \infty \right).
\end{equation}
We report in Algorithm \ref{alg:pmf2} a more detailed pseudo-code for implementing the local centering approach described in Algorithm \ref{alg:pmf}.
\begin{algorithm}
	\vspace{6pt}
			$\bar{\textbf{u}}  = \rho \cdot \textbf{u} $\\ 
            $\rho = \left(Gamma\left(a+\frac{d \, I_1 }{2}, \left(\frac{1}{b}+ \sum_{i=1}^{I_1} \frac{\| \bar{\textbf{u}}_i\|^2}{2} \right)^{-1} \right)\right)^{-\frac{1}{2}}$\\
            \For{ $i=1,...,I_1$}
            { $\bar{\textbf{u}}_i \sim N \left(\bmu_{\bar{\textbf{u}}_i}, Q_{\bar{\textbf{u}}_i}^{-1} \right)$
            }
		$  \textbf{u} = \bar{\textbf{u}}/ \rho$ \\

  $\bar{\textbf{v}}  = \rho \cdot \textbf{v} $\\
            $\rho = \left(Gamma\left(a+\frac{d \, I_2 }{2}, \left(\frac{1}{b}+ \sum_{i=1}^{I_2} \frac{\| \bar{\textbf{v}}_i\|^2}{2} \right)^{-1} \right)\right)^{-\frac{1}{2}}$\\
            \For{ $i=1,...,I_2$}
            { $\bar{\textbf{v}}_i \sim N \left(\bmu_{\bar{\textbf{v}}_i}, Q_{\bar{\textbf{v}}_i}^{-1} \right)$
            }
		$  \textbf{v} = \bar{\textbf{v}}/ \rho$ \\
	$\tau_0 \sim Gamma  \left(c + \frac{N}{2}, \left( \frac{1}{d} + \sum_{n=1}^N  \frac{(y_n-\rho \textbf{u}_{i[n]}\textbf{v}_{j[n]})^2}{2}   \right)^{-1} \right)$\\
	\caption{One iteration of BGS with local centering for Model \ref{ex:pmf}}
	\label{alg:pmf2}
\end{algorithm}

\section{Proofs}
\label{s:sec:proofs}
\subsection{Proof of Lemma \ref{lem:crn_opt}}
\begin{proof}[Proof of Lemma \ref{lem:crn_opt}]
We first show that $\bar{P}^{c} \in \bar{P}_{W_2}[P]$.
Given any updating order $(k_1,...,k_K)$, let $\sigma$ be the permutation of $(1,...,K)$, such that $(k_1,...,k_K) = (\sigma(1),...,\sigma(K))$. Define $A= 1_d - diag(Q_{1,1}^{-1},..., Q_{K,K}^{-1})\, Q$, $A^*$  the matrix whose blocks are $A^*_{i,j} =A_{k_i,k_j}$ and also $B^* = (I-L^*)^{-1}U^*$, for $U^*$ and $L^*$ upper and lower decomposition of $A^*$, i.e. $U^*+L^* = A^*$. Lastly define the matrix $B^{(\sigma)}$ as $B^{(\sigma)}_{k_i, k_j} = B^*_{i,j}$. 

\add{By composing the kernels of $\bar{P}^{c}$, we obtain that $ (\X , \Y ) \sim \bar{P}^{c} ((\x, \y ) , \cdot) $ can be written as \citep[Lemma 1]{robsahu}
\[\begin{cases}
    \X = B^{(\sigma)} \x + \mathbf{b}^{(\sigma)} + F^{(\sigma)} \mathbf{Z} \,,\\
    \Y = B^{(\sigma)} \y + \mathbf{b}^{(\sigma)} + F^{(\sigma)} \mathbf{Z} \,,\\
\end{cases}\]
where $\mathbf{b}^{(\sigma)}=(I-B^{(\sigma)})\boldsymbol{\mu}$, $F^{(\sigma)}(F^{(\sigma)})^\top = \Sigma - B^{(\sigma)} \Sigma (B^{(\sigma)})^\top$ and $\mathbf{Z}\sim N(\mathbf{0}_d, 1_d)$. Which is $W_2$-optimal, since
\begin{align*}
    \mathbb{E}[\| \X - \Y \|^2] & = \| B^{(\sigma)}\x - B^{(\sigma)}\y \|^2 = W_2\left( P(\x, \cdot), P(\y, \cdot) \right)\,,
\end{align*}
and $P (\x , \cdot) = N (B^{(\sigma)} \x + \mathbf{b}^{(\sigma)}, \, \Sigma - B^{(\sigma)} \Sigma (B^{(\sigma)})^\top) $ \citep{wass2}. Analogously, one can show that $(\bar{P}^{c})^n \in \bar{P}_{W_2} [P ^n]$, where 
	$$ {P}^n\left(\x, \cdot  \right) \stackrel{d}= N\left((B^{(\sigma)})^n\x + \left(\sum_{j=0}^{n-1} (B^{(\sigma)})^j\right) \textbf{b}^{(\sigma)}, \Sigma -(B^{(\sigma)})^n\Sigma (B^{(\sigma)})^{n, \top}\right)\,. $$   }
\end{proof}

\subsection{Proofs of the results in Section \ref{sec:bound}}
\subsubsection{Proof of Theorem \ref{thm:bound_expected_rev}}
The proof of Theorem \ref{thm:bound_expected_rev} builds upon Lemma \ref{lem:contraction}, \ref{lem:max_block_coupl_bound} and \ref{lem:distance}, whose statements and proofs are deferred after the end of the former.
\begin{proof}[Proof of Theorem \ref{thm:bound_expected_rev}] 
In the following, we will state the results assuming that $\X^{0},\Y^{0}$ are fixed, or equivalently conditioning on their values, omitting the explicit conditioning in the notation for brevity.

\add{Denote by $(t_j)_{j \geq 1}$ as the sequence of times at which maximal coupling is attempted, i.e.\
\begin{align} 
\label{eq:t_k}
	t_j &:= \min	\{ t>t_{j-1}: \|\X ^t - \Y ^t \| < \varepsilon \} &j \ge 1,
\end{align}
with $t_0:= -1$ by convention.} 
Also, let $A_j$ be a binary variable indicating whether the maximal coupling attempt at $t_j$ is successful, i.e.\
\begin{align} 
\label{eq:a_k}
	A_j &:= \begin{cases}
		1 &\text{if } \X^{t_j+1} = \Y^{t_j+1}\\
		0 &\text{otherwise}
	\end{cases} 
	 &j \ge 1.
\end{align}
By faithfulness, $A_j = 1$ implies that $\X^t = \Y^t,$ for all $t \ge t_j$ and by convention $A_{j'} = 1$ for all $j' > j$. 
Thus, $T$ can be written as
\begin{equation}\label{eq:T_expr}
	T= t_1+1 + \sum_{j=1}^{+ \infty} (1-A_j)(t_{j+1}-t_j).
\end{equation}
We bound $\mathbb{E}[T]$ using the form of \eqref{eq:T_expr}. In particular, by Lemma \ref{lem:contraction}, we have
\begin{equation}
	t_1+1 \le f_1(\| L^{-1}(\X^0-\Y^0)\| , \varepsilon,B, Q),
\end{equation}
for $f_1$ defined therein. Note that, conditionally on $(\X^0, \Y^0)$, the bound is deterministic.

\add{By definition of Algorithm  \ref{alg:2s}, $\{A_j = 0 \}$ implies $\{ A_i = 0\}$ for $i \leq j$. Then, by Lemma \ref{lem:max_block_coupl_bound}, we have
\begin{gather}\label{eq:bound_prob}
	\begin{aligned}
	&\Pr(A_j=0  ) = \Pr(A_1=0, ..., A_{j-1}=0, A_j=0 ) \\
	&=\Pr(A_1 = 0 ,..., A_{j-1}=0 )\Pr(A_j=0 \mid A_{j-1}=0)\\
	&\leq \left( \frac{1}{2} \right)^{j-1} \Pr(A_j=0 \mid A_{j-1}=0) &j\geq 1,
	\end{aligned}
\end{gather}}
Combining the last equality with the Monotone Convergence Theorem, we can rewrite the third term of \eqref{eq:T_expr} as
\add{\begin{align}
	\label{eq:ak}
	\mathbb{E}&\left[\sum_{j=1}^{+ \infty}(1-A_j) (t_{j+1}-t_j) \right] = \sum _{j=1}^{+\infty }\Pr(A_j= 0)\,\mathbb{E}\left[ t_{j+1} - t_j \mid A_j = 0 \right]\nonumber\\
	&\quad \leq \sum _{j=1}^{+\infty }\left( \frac{1}{2} \right)^{j-1}\Pr(A_j= 0\mid A_{j-1}=0)\,\mathbb{E}\left[ t_{j+1} - t_j \mid A_j = 0 \right]\nonumber\\
	&\quad =  \sum_{j=1}^{+ \infty} \left( \frac{1}{2} \right)^{j-1}\mathbb{E}\left[ \Pr ( A _j = 0 \mid A_{j-1}=0, \X ^{t_j}, \Y ^{t_j}) \mathbb{E}\left[t_{j+1}-t_{j} \mid A_{j}=0, \X^  {t_{j}}, \Y^{t_{j}}\right] \right] \, .
\end{align}}
\add{By Lemma \ref{lem:distance}, we have
\begin{equation*}
		 \Pr ( A _j = 0 \mid A_{j-1}=0, \X ^{t_j}, \Y ^{t_j}) \mathbb{E}\left[t_{j+1}-t_{j} \mid A_{j}=0, \X^  {t_{j}}, \Y^{t_{j}}\right] \leq  f_2(\varepsilon,B, Q),
	\end{equation*}
 for $f_2$ defined therein. Note that the inequality above holds almost surely, as it is independent of $\X^{t_j}$ and $\Y^{t_j}$.
Substituting in \eqref{eq:ak} we obtain
\[  \mathbb{E}\left[\sum_{j=1}^{+ \infty}(1-A_j) (t_{j+1}-t_j) \right]  \le  f_2(\varepsilon,B, Q)  \sum_{j=1}^{+\infty} \left( \frac{1}{2} \right)^{j-1}
= 2\, f_2(\varepsilon,B, Q)  .\]
Finally, we obtain
\begin{align*}
	\mathbb{E}[T] &\le
	1+ f_1(\|\X^0-\Y^0\|, \varepsilon, B,Q)  + 2\, f_2(\varepsilon,B,Q),\\
  & \leq 3 + \frac{4 + \log \|\X^{0}-\Y^{0}\|- 2\log \varepsilon + 2.5\log \kappa(Q) +  \log K - \log \lambda_{min}(\Delta) }{-\log \rho(B)} \,,
\end{align*}
which concludes the proof.}
\end{proof}

\begin{lem}	\label{lem:contraction}
  \add{Under the assumptions of Theorem \ref{thm:bound_expected_rev}, we have
	\begin{equation*}
		t_1 \le f_1(\|  \X^0 - \Y^0 \|, \varepsilon, B, Q)\,,
	\end{equation*}
  where
  \[f_{1}\left(\|\X^{0}-\Y^{0}\|, \varepsilon, B,Q\right)=\left\lceil\frac{\log \|\X^{0}-\Y^{0}\|+ \log \sqrt{\kappa (Q)}-\log \varepsilon}{-\log \rho(B)}\right\rceil \,.\]}

\end{lem}

\begin{proof}[Proof of Lemma \ref{lem:contraction}]
	\add{We begin by establishing some notation.
	For $A \in \mathbb{R}^{m\times n}$ we denote by $\| A\|_2 = \sup _{\x \neq 0, \x \in \mathbb{R}^n}{\frac {\|A \x\|_{2}}{\|\x\|_{2}}}$ the induced $2$ norm. For a square matrix, we denote by $\rho(A)$ its spectral radius and by $\kappa(A) = \|A\|_2 \|A^{-1}\|_2$ its condition number.
	We denote by $L$ a square root of $\Sigma$, i.e.\ $LL^\top = \Sigma$. From which it follows that $\| L \| _2 \| L^{-1}\|_2 = \sqrt{\kappa (\Sigma)}=\sqrt{\kappa(Q)}$.
	Finally, we define $N= L^{-1}BL$. By $\pi$-reversibility of $P$, one has $ \Sigma B^\top = B \Sigma$ \citep[Proposition 4.27]{Khare2009RATESOC}, implying $ N = N^\top$.
	Hence, by properties of the spectral radius and symmetric matrices, we have $\| N\|_2 = \rho(N)  = \rho(B)$.}
	
	\add{Recall that for Algorithm \ref{alg:2s}, the chains evolve via \emph{crn} coupling up to time $t_1$. Then, by Lemma \ref{lem:rob}, we have
  \[
  \begin{aligned}
   \left\|\X^{t}-\Y^{t}\right\|&\overset{a.s.}{=}\| B^t(\X^{0}-\Y^{0})\| \\
   &=\|L N^t L^{-1}(\X^{0}-\Y^{0})\| \\
   &\leq \|L\|_2 \rho(N)^t \|L^{-1}\|_2 \|\X^{0}-\Y^{0}\| \\
   &=\sqrt{\kappa (Q)} \rho(B)^t \|\X^{0}-\Y^{0}\| \,.
  \end{aligned}
  \]
  which is less than $\varepsilon$ if 
  \[t>\frac{\log \|\X^{0}-\Y^{0}\|+ \frac{1}{2} \log \kappa (Q) -\log \varepsilon}{-\log \rho (B)}\,. \]}
\end{proof}

\begin{lem}
\label{lem:max_block_coupl_bound}
\add{Under the assumptions of Theorem \ref{thm:bound_expected_rev}, we have, for each $j \geq 1$,
\begin{equation}
  \label{eq:max_block_coupl_bound}
 \Pr(A_j = 0\mid A_{j-1}=0) \leq \frac{1}{2} \,.
\end{equation}}
\end{lem}
\begin{proof}
\add{Note that $\Pr(A_j = 0\mid A_{j-1}=0)  = \mathbb{E}[\Pr(A_j = 0\mid A_{j-1}=0,  \X ^{t_j}, \Y ^{t_j})\mid A_{j-1}=0]$. We will show that the inner probability is at most $1/2$ almost surely.}

\add{We want to show that the probability of the complementary event in \eqref{eq:max_block_coupl_bound} is at least $1/2$.
We will drop the conditioning on $A_{j-1} = 0 $ to reduce notational clutter.  
The complementary event is
\[\begin{aligned}
 \Pr\left(\X^{t_j+1}=\Y^{t_j+1} \mid \X^{t_j}, \Y^{t_j}\right) = \prod_{k=1}^{K}\Pr\left(\X_k^{t_j+1}=\Y_k^{t_j+1} \mid \X^{t_j}, \Y^{t_j}, \X_{1:k-1}^{t_j+1} = \Y_{1:k-1}^{t_j+1}\right) \,,
\end{aligned}\] 
where $\X ^{t_j}_{1:k}$ denotes the first $k$ blocks of $\X ^{t_j}$, and  $\X_{1:k-1}^{t_j} = \Y_{1:k-1}^{t_j}$ is ignored when $k=1$.}

\add{The conditional law for the update of block $k$ is
\[
\mathcal{L}\left( \X_k^{t_j+1} \mid \X^{t_j}, \X_{1:k-1}^{t_j+1} \right)  = N\left( \bb a _k + A^k \left[\X_{1:k-1}^{t_j+1}, \X_{k:K}^{t_j} \right] , Q_{kk} ^{-1}\right)\,,
\]
where $Q_{kk}$ is the $k$-th diagonal block of $Q = \Sigma^{-1}$, $\bb a_k$ is a given vector, $A^k $ is the row corresponding to block $k$ in the matrix $A = I - \Delta ^{-1} Q$, and $\left[\X_{1:k-1}^{t_j+1}, \X_{k+1:K}^{t_j} \right]$ is the vector with first $k-1$ blocks from $\X^{t_j+1}$ and the remaining ones from $\X^{t_j}$ \citep{robsahu}. 
Recall that given $p= N(\boldsymbol{\mu}, \Sigma)$ and $q =  N(\boldsymbol{\nu}, \Sigma)$, it holds
\begin{equation}
	\label{eq:d_tv_gauss}
	\| p-q \|_{TV}= \erf\left(\sqrt{\frac{(\boldsymbol{\mu}-\boldsymbol{\nu})^\top \Sigma^{-1} (\boldsymbol{\mu}-\boldsymbol{\nu}) }{8}} \right).
\end{equation}
In Algorithm \ref{alg:2s}, we consider block-wise maximal coupling, then
\[
\begin{aligned}
 \Pr(\X_k^{t_j+1}=\Y_k^{t_j+1} &\mid \X^{t_j}, \Y^{t_j}, \X_{1:k-1}^{t_j+1} = \Y_{1:k-1}^{t_j+1})  \\&= 1 - TV ( \mathcal{L}(\X _k^{t_j+1} \mid \dots) , \mathcal{L}(\Y_k^{t_j+1} \mid \dots) ) \\ &
  = 1- \erf \left( \frac{\| Q_{kk} ^{1/2} A^k [0_{1:k-1} ; \X_{k:K} ^{t_j} - \Y_{k:K} ^{t_j} ] \|}{\sqrt{8}} \right)\,.
\end{aligned}
\]
In order to lower bound the above product of conditional probabilities, we use the following inequality 
\[
\prod _{k=1} ^K (1 - p_k ) \geq \exp  \left( -\sum _{k} p_k   - \sum _k p_k ^2  \right)\geq \exp  \left( -\sqrt{K\sum _{k} p_k^2 }   - \sum _k p_k ^2 \right)  \,,
\]
that holds true if $p_k < 0.68$, for each $k$. Note that the exponential is lower bounded by $1/2$ if $\sqrt{K \sum _k p_k ^2 } + \sum _k p_k ^2 \leq \log (2)$, which is true if $\sqrt{K \sum _k p_k ^2 } \leq 1/2$, provided that $K>1$.
Note that $\sqrt{K \sum _k p_k ^2 } \leq 1/2$ implies that $p_k \leq  0.5 < 0.68$, for each $k$.}

\add{To conclude the proof, we show $\sqrt{K \sum _k p_k ^2 } \leq 1/2$. 
\[\begin{aligned}
  \sum_k \erf \left( \frac{\| Q_{kk} ^{1/2} A^k [0_{1:k-1} ; \X_{k:K} ^{t_j} - \Y_{k:K} ^{t_j} ] \|}{\sqrt{8}} \right)^2 &\leq \frac{4}{\pi}\sum_k \frac{\| Q_{kk} ^{1/2} A^k [0_{1:k-1} ; \X_{k:K} ^{t_j} - \Y_{k:K} ^{t_j} ] \|^2}{8},\\
  &\leq \frac{1}{2\pi}\sum_k \| Q_{kk} ^{1/2} A^k \| _2 ^2  \| [0_{1:k-1} ; \X_{k:K} ^{t_j} - \Y_{k:K} ^{t_j} ] \|^2,\\
  &\leq \frac{1}{2\pi} \| \X ^{t_j} - \Y ^{t_j}\|^2 \sum_k \| Q_{kk} ^{1/2} A^k \| ^2_2, \\
  &\leq \frac{1}{2\pi} \| \X ^{t_j} - \Y ^{t_j}\|^2 K  \| \Delta  ^{1/2} A \|^2_2, \\
  &\leq \frac{1}{2\pi} \| \X ^{t_j} - \Y ^{t_j}\|^2 K  \| \Delta  ^{1/2} A\Delta ^{-1/2}  \| _2^2\| \Delta ^{1/2}\|_2^2\,, \\
\end{aligned}\]
where the first inequality comes from $\erf(x) \leq 2x / \sqrt{\pi}$.
Note that $\Delta ^{1/2} A \Delta ^{-1/2}  =  I - \Delta ^{-1/2} Q \Delta ^{-1/2} = I - \bar{Q} $, and 
\[\| I - \bar{Q}  \|_2 = \max (|1 - \lambda _{min} (\bar{Q})| , | 1 - \lambda _{max} (\bar{Q})|)\leq \max ( 1, \lambda_{max}(\bar{Q}))\,,\]
since $\bar{Q}$ is positive definite and symmetric. 
Since $\|\X ^t - \Y ^t\| < \varepsilon$, we have
\[
\sqrt{K\sum_k p_k^2} \leq \frac{1}{\sqrt{2\pi}}  K \max (1 , \rho(\bar{Q})) \, \sqrt{\rho (\Delta)}\, \varepsilon\,.
\]
If we set $\varepsilon$ as in \eqref{eq:eps_cond_block}, then the product of the conditional probabilities is at least $1/2$. }
\end{proof}

\begin{lem}
	\label{lem:distance}
	\add{Under the assumptions of Theorem \ref{thm:bound_expected_rev}, we have
	\begin{align*}
		&\Pr ( A _j = 0 \mid A_{j-1}=0, \X ^{t_j}, \Y ^{t_j}) \mathbb{E}\left[t_{j+1}-t_{j} \mid A_{j}=0, \X^  {t_{j}}, \Y^{t_{j}}\right]
  \leq  f_2 (\varepsilon, B, Q)\,, &a.s., 
	\end{align*}
	where
	\[ f_2 (\varepsilon, B, Q) =\left\lceil \frac{ 2+  \log \kappa(Q) - \frac{1}{2}\log \lambda_{min}(\Delta) + \frac{1}{2}  \log K  -\frac{1}{2}\log \varepsilon }{-\log \rho(B)} \right\rceil\,.
  \]}
\end{lem}
\begin{proof}[Proof of Lemma \ref{lem:distance}]
	\label{proof:lem_distance}
\add{We start by noting that the result of Lemma \ref{lem:contraction} can be extended for every $t_k - t_{k-1}$ with $  k \ge 2$, since the argument relies only on the form of Algorithm \ref{alg:2s}. Then
	\[
\begin{aligned}
\mathbb{E}\left[t_{j+1}-t_{j} \mid A_j = 0 , \X ^{t_{j}}, \Y ^{t_{j}}\right] &= \mathbb{E}\left[\mathbb{E}[t_{j+1}-t_{j} \mid \X^{t_{i}+1}, \Y^{t_{j}+1},A_j = 0, \X^{t_{1}}, \Y^{t_{j}}] \mid \ldots\right]\\
&\leq \mathbb{E}\left[f_{1}(\|\X^{t_{j}+1}-\Y^{t_{j}+1}\|, \varepsilon, B, Q) \mid A_j=0, \X^{t_{j}}, \Y^{t_{j}}\right]\\
&\leq  f_{1}\left(\mathbb{E}[\|\X^{t_{j}+1}-\Y^{t_{j}+1}\| \mid A_j=0, \X^{t_{j}}, \Y^{t_{j}}], \varepsilon, B, Q\right)\,.
\end{aligned}\]
where the last inequality follows from Jensen applied to $f_1(\cdot,\varepsilon, B ,Q)$.  Now, recall that $\Pr ( A_j = 0 \mid A_{j-1}= 0,  \X ^{t_j}, \Y ^{t_j})\leq 1/2$, by Lemma \ref{lem:max_block_coupl_bound}, so
\[
\begin{aligned}
 \Pr ( A _j = 0 \mid& A_{j-1}=0, \X ^{t_j}, \Y ^{t_j}) \mathbb{E}\left[t_{j+1}-t_{j} \mid A_{j}=0, \X^  {t_{j}}, \Y^{t_{j}}\right] \leq \\
  & \leq \frac{\Pr ( A_j = 0 \mid \dots ) \log \mathbb{E}[\|\X^{t_{j}+1}-\Y^{t_{j}+1}\| \mid \dots] + \frac{1}{2}\log \kappa(Q)-\frac{1}{2}\log \varepsilon  }{-\log \rho(B)}\,.
\end{aligned}\]}

\add{In the remaining part of this proof, we will use Lemma \ref{lem:bound_gen} to upper bound $$\Pr(A_j = 0 \mid A_{j-1}=0, \X ^{t_j}, \Y ^{t_j})\log  \mathbb{E}[\|\X^{t_{j}+1}-\Y^{t_{j}+1}\| \mid A_j=0, \X^{t_{j}}, \Y^{t_{j}}]\,.$$
We drop the conditioning on $\X ^{t_j}$ and  $\Y ^{t_j}$ to reduce notational clutter.
  Let $M_{j,k}=0$ be the event that the maximal coupling at time $t_j$ is unsuccessful for block $k$. Then, by definition of Algorithm \ref{alg:2s}, we have that the above quantity is equal to
  \[
  \sum _k\Pr ( M_{j,k} = 0 \mid A_j = 0) \log \mathbb{E}[\| \X^{t_j +1} - \Y^{t_j +1} \| \mid M_{j,k}=0]\,.\]
  Note that $\Pr ( M_{j,k} = 0 \mid A_j = 0) \leq\Pr ( M_{j,k} = 0 )$, since $\{ M_{j,k} =0\} \subseteq \{ A_j = 0\}$. Moreover $\sum _k \Pr ( M_{j,k} =0 ) \leq 1$.}

  \add{First, we bound the expectation. $M_{j,k} = 0$ implies that the first $k-1$ maximal coupling are successful, i.e.\ $\X _{1:k-1}^{t_j +1} = \Y _{1:k-1}^{t_j +1}$, after which the $k$-th block is updated with an unsuccessful maximal coupling and the remaining $K-k$ blocks are updated using CRN coupling. By definition of Gaussian CRN coupling, each block update is a constrained minimization w.r.t. the norm induced by $Q = \Sigma ^{-1}$, which implies that almost surely holds
  \[
  \begin{aligned}
    \| \X^{t_j +1} - \Y^{t_j +1} \|_Q& = \| L^{-1}(\X^{t_j +1} - \Y^{t_j +1}) \|  \\
    &\leq \| L^{-1}\left[ (\X ^{t_j + 1 }- \Y ^{t_j + 1}) _{1:k}\,; (\X ^{t_j}- \Y ^{t_j})_{k+1:K} \right]  \|\\
    & \leq \| L^{-1}\|_2 \left \| \left[ \mathbf{0}_{1:(k-1)}\,; \X ^{t_j + 1 }_k - \Y ^{t_j + 1} _{k}\,; (\X ^{t_j}- \Y ^{t_j})_{k+1:K}  \right] \right \| \,.
  \end{aligned} 
  \]
  For the $k$-th block, we can use Lemma \ref{lem:bound_gen} to obtain
  \[
  \mathbb{E}[\| X_k^{t_j +1} - Y_k^{t_j +1} \| \mid M_{j,k}=0] \leq 5 \dfrac{ \| U^k (\X^{t_j} - \Y^{t_j}) \| }{\| Q_{kk}^{1/2} U^k (\X^{t_j} - \Y^{t_j}) \| ^2} 
  \leq 5\dfrac{ (\lambda_{min}(Q_{kk}))^{-1/2}}{\| Q_{kk}^{1/2} U^k (\X^{t_j} - \Y^{t_j}) \| } \,,
  \]
  where $U^k$ is the $k$-th block row of the upper triangular part of $A$, and we used that $\| Q_{kk}^{1/2} U^k (\X^{t_j} - \Y^{t_j}) \|\leq 1$ that we will prove in the following steps. By combining these two inequalities, we bound the expectation as follows
  \[\begin{aligned}
    \mathbb{E}[\| \X^{t_j +1}- &\Y^{t_j +1} \| \mid M_{j,k}=0] \leq \| L \|_2 \mathbb{E}[\| L^{-1}(\X^{t_j +1} - \Y^{t_j +1}) \| \mid M_{j,k}=0] \\
    &\leq \| L \|_2 \| L ^{-1}\| _2 \, \mathbb{E}\left [\left \| \left[ \mathbf{0}_{1:(k-1)}\,; \X ^{t_j + 1 }_k - \Y ^{t_j + 1} _{k}\,; (\X ^{t_j}- \Y ^{t_j})_{k+1:K}  \right] \right \| \mid M_{j,k}=0\right ] \\
    &\leq \sqrt{\kappa(Q)} \, \left( \mathbb{E}\left [\| \X ^{t_j + 1 }_k - \Y ^{t_j + 1} _{k}  \| \mid M_{j,k}=0\right ] +  \| (\X ^{t_j}- \Y ^{t_j})_{k+1:K}\| \right) \\
    &\leq \sqrt{\kappa(Q)} \, \left( 5\dfrac{ (\lambda_{min}(Q_{kk}))^{-1/2}}{\| Q_{kk}^{1/2} U^k (\X^{t_j} - \Y^{t_j}) \|} + \varepsilon  \right) \,. \\
  \end{aligned}\]
  Note that if we take the logarithm of the upper bound, we can further obtain
  \[\begin{aligned}
    \log &\mathbb{E}[\| \X^{t_j +1}- \Y^{t_j +1} \| \mid M_{j,k}=0] \leq \log \sqrt{\kappa(Q)}  + \log \left( 5\dfrac{ (\lambda_{min}(\Delta))^{-1/2}}{\| Q_{kk}^{1/2} U^k (\X^{t_j} - \Y^{t_j}) \|} + \varepsilon  \right) \,, \\
    &= \frac{1}{2} \log \kappa(Q)  + \log \left( 5\dfrac{ (\lambda_{min}(\Delta))^{-1/2}}{\| Q_{kk}^{1/2} U^k (\X^{t_j} - \Y^{t_j}) \|}   \right) + \log \left( 1 + \varepsilon \dfrac{\| Q_{kk}^{1/2} U^k (\X^{t_j} - \Y^{t_j}) \|}{5  (\lambda_{min}(\Delta ))^{-1/2}}  \right) \,. \\
    &= \frac{1}{2} \log \kappa(Q)  + \log \left( 5\dfrac{ (\lambda_{min}(\Delta))^{-1/2}}{\| Q_{kk}^{1/2} U^k (\X^{t_j} - \Y^{t_j}) \|}   \right) +   \varepsilon \dfrac{\| Q_{kk}^{1/2} U^k (\X^{t_j} - \Y^{t_j}) \|}{5  (\lambda_{min}(\Delta))^{-1/2}}   \,. \\
  \end{aligned}\]}

  \medskip
  \add{Denote with $\alpha _k = \| Q_{kk}^{1/2} U^k (\X^{t_j} - \Y^{t_j}) \|$. In the proof of Lemma \ref{lem:max_block_coupl_bound}, we have shown that $\sqrt{K \sum _k \alpha_k ^2 }\leq \sqrt{\pi /2 }$. From which we have $S := \sum_k \alpha _k \leq \sqrt{\pi / 2}$ and $\alpha _k \leq \sqrt{\pi / 2K } < 1$, if $K>1$. Now, we can bound the sum over $k$ as
  \[\begin{aligned}
    &\sum _k\Pr ( M_{j,k} = 0 ) \log \mathbb{E}[\| \X^{t_j +1} - \Y^{t_j +1} \| \mid M_{j,k}=0] \leq \\
    &\leq  \frac{1}{2} \log \kappa(Q) +    \dfrac{\varepsilon}{5  }  \sqrt{\lambda_{min}(\Delta)}+ \log 5 - \frac{1}{2}\log \lambda_{min}(\Delta) + \sum _k\Pr ( M_{j,k} = 0 ) \log \left( \dfrac{ 1}{\alpha _k}   \right) \,. \\
\end{aligned}\]
Note that 
\[\Pr ( M_{j,k} = 0 ) =\Pr ( M_{j,k} = 0\mid M_{j,k-1} \neq 0 ) \cdot\Pr ( M_{j,k-1} \neq 0 ) \leq \erf \left( \frac{\alpha _k}{\sqrt{8}} \right)\leq  \frac{\alpha _k}{\sqrt{2\pi}} \,,\]
since $\erf(x) \leq 2x / \sqrt{\pi}$. Define now $p_k = \alpha _k / S$, so that $\sum _k p_k = 1$. We can bound the last term as
\[\begin{aligned}
  \sum _k\Pr ( M_{j,k} = 0 ) \log \left( \dfrac{ 1}{\alpha _k}   \right) &\leq \sum _k \frac{\alpha _k}{\sqrt{2\pi}} \log \left( \dfrac{ 1}{\alpha _k}   \right) \\
  &= \frac{S}{\sqrt{2\pi}} \sum _k p_k \log \left( \dfrac{1}{S p_k}   \right) \\
  &= \frac{S}{\sqrt{2\pi}} \left( -\log S + \sum _k p_k \log \left( \dfrac{1}{ p_k}   \right)  \right)\\
  &\leq  \frac{S}{\sqrt{2\pi}} ( -\log S + \log K )  \,,\\
\end{aligned}\]
from the bound on the entropy of a discrete distribution. Since $S \leq \sqrt{\pi / 2}$, we have that $S / \sqrt{2\pi} \leq 1/2$, and $-S \log S \leq \frac{2}{5} $. We can conclude that
\[\begin{aligned}
  \sum _k\Pr ( M_{j,k} = 0 ) &\log \mathbb{E}[\| \X^{t_j +1} - \Y^{t_j +1} \| \mid M_{j,k}=0] \leq \\
  &\leq  \frac{1}{2} \log \kappa(Q) +    \dfrac{\varepsilon}{5  }  \sqrt{\lambda_{min}(\Delta)}+ \log 5 - \frac{1}{2}\log \lambda_{min}(\Delta) + \frac{1}{2}  \log K + \frac{2}{5\sqrt{2\pi }}   \,, \\
  &\leq 2+  \frac{1}{2} \log \kappa(Q) - \frac{1}{2}\log \lambda_{min}(\Delta) + \frac{1}{2}  \log K   \,, \\
\end{aligned}\]
where we used $\varepsilon < 1$ and $\lambda_{min}(\Delta) \leq 1$.}
\end{proof}

\subsubsection{Proof of Lemma \ref{lem:bound_gen}}

In order to prove Lemma \ref{lem:bound_gen}, we use an instrumental lemma, namely Lemma \ref{lem:bound12}. In the following, we denote $TG(\mu, \sigma^2;a,b)$ a truncated Gaussian with mean parameter $\mu$, variance parameter $\sigma^2$, and constrained between $a$ and $b$.
\begin{lem}
	\label{lem:bound12}
	Let $\sigma\in(0,\infty)$ and $\alpha \in \mathbb{R}$ and ${X} \sim TG(0,\sigma^2; \alpha, + \infty)$. It holds that
	$$ \mathbb{E}[{X}] \le \max(0,\alpha) + \sigma \sqrt{\frac{2}{\pi}}, \; \; \; \mathbb{E}[{X}^2] \le \sigma^2 + \alpha^2+\sqrt{\frac{2}{\pi}} \alpha \sigma.$$
\end{lem}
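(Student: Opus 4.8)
The plan is to reduce to the standard-normal case and then reduce both inequalities to two monotonicity facts about the inverse Mills ratio. First I would observe that if $X\sim TG(0,\sigma^2;\alpha,+\infty)$ then $Z:=X/\sigma\sim TG(0,1;\beta,+\infty)$ with $\beta:=\alpha/\sigma$, so it is enough to bound $\mathbb{E}[Z]$ and $\mathbb{E}[Z^2]$ and afterwards multiply by $\sigma$ and $\sigma^2$. Writing $\phi,\Phi$ for the standard normal density and cdf and $m(\beta):=\phi(\beta)/(1-\Phi(\beta))$ for the inverse Mills ratio, a one-line integration by parts (using $\phi'(u)=-u\phi(u)$, so that $\int_\beta^{\infty}u\phi(u)\,du=\phi(\beta)$ and $\int_\beta^{\infty}u^2\phi(u)\,du=\beta\phi(\beta)+1-\Phi(\beta)$) gives the closed forms $\mathbb{E}[Z]=m(\beta)$ and $\mathbb{E}[Z^2]=1+\beta\,m(\beta)$. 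Hence the lemma is equivalent to proving, for every $\beta\in\mathbb{R}$, that (i)~$m(\beta)\le\max(0,\beta)+\sqrt{2/\pi}$ and (ii)~$\beta\,m(\beta)\le\beta^2+\sqrt{2/\pi}\,\beta$.

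Next I would establish two elementary monotonicity properties. The quotient rule together with $\phi'(u)=-u\phi(u)$ yields $m'(\beta)=m(\beta)\big(m(\beta)-\beta\big)$; since $Z>\beta$ almost surely when $\beta\ge0$ while $m(\beta)>0>\beta$ when $\beta<0$, one has $m(\beta)>\beta$ for all $\beta$, hence $m'>0$ and $m$ is (strictly) increasing. Setting $g(\beta):=m(\beta)-\beta$ we get $g'=m'-1=m\,g-1$, and the second-moment identity above gives $\mathrm{Var}(Z)=\mathbb{E}[Z^2]-\mathbb{E}[Z]^2=1+\beta m-m^2=1-m\,g>0$ (strictly, as $Z$ is non-degenerate), so $g'<0$ and $g$ is (strictly) decreasing. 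Evaluating at zero, $m(0)=\phi(0)/(1-\Phi(0))=\sqrt{2/\pi}$ and $g(0)=m(0)=\sqrt{2/\pi}$.

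Finally I would read off (i) and (ii) from these facts. For (i): if $\beta\le0$ then $m(\beta)\le m(0)=\sqrt{2/\pi}=\max(0,\beta)+\sqrt{2/\pi}$, while if $\beta\ge0$ then $m(\beta)=\beta+g(\beta)\le\beta+g(0)=\beta+\sqrt{2/\pi}=\max(0,\beta)+\sqrt{2/\pi}$. For (ii): if $\beta\ge0$, multiply $m(\beta)\le\beta+\sqrt{2/\pi}$ (just shown) by $\beta\ge0$; if $\beta<0$, the decreasing property of $g$ gives $g(\beta)\ge g(0)=\sqrt{2/\pi}$, i.e.\ $m(\beta)\ge\beta+\sqrt{2/\pi}$, and multiplying by the negative number $\beta$ reverses the inequality to $\beta\,m(\beta)\le\beta^2+\sqrt{2/\pi}\,\beta$ (the case $\beta=0$ being trivial). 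Undoing the rescaling of the first step gives $\mathbb{E}[X]=\sigma\,m(\beta)\le\max(0,\alpha)+\sigma\sqrt{2/\pi}$ and $\mathbb{E}[X^2]=\sigma^2\big(1+\beta\,m(\beta)\big)\le\sigma^2+\alpha^2+\sqrt{2/\pi}\,\alpha\sigma$.

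The substantive content lies entirely in the two monotonicity facts of the middle step; the main (mild) obstacle I anticipate is the sign bookkeeping, namely verifying $m(\beta)>\beta$ for every real $\beta$ and correctly reversing the inequality for $\beta<0$ in part (ii), with everything else being routine. A calculus-free shortcut is available for the easy half: for $\beta\le0$ one has $1-\Phi(\beta)\ge 1/2$ and $\phi(\beta)\le\phi(0)$, hence $m(\beta)\le2\phi(0)=\sqrt{2/\pi}$ directly.
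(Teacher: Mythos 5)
Your proof is correct, and it is complete: the rescaling to $Z=X/\sigma$, the closed forms $\mathbb{E}[Z]=m(\beta)$ and $\mathbb{E}[Z^2]=1+\beta m(\beta)$, the identities $m'=m(m-\beta)$ and $g'=mg-1$ with $1-mg=\mathrm{Var}(Z)>0$, and the sign bookkeeping for $\beta<0$ all check out. The route is genuinely different from the paper's. The paper works directly from the truncated-Gaussian moment formulas and argues the first moment by two separate devices: for $\alpha<0$ an explicit comparison of normalizing constants (truncation at $\alpha$ versus at $0$), and for $\alpha\ge 0$ a stochastic-ordering (monotone likelihood ratio) comparison of $TG(0,\sigma^2;\alpha,+\infty)$ with $TG(\alpha,\sigma^2;\alpha,+\infty)$; the second moment is then obtained by plugging the first-moment bound into the moment formula. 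Your argument instead packages everything into two monotonicity facts about the inverse Mills ratio ($m$ increasing, $g=m-\beta$ decreasing via positivity of the truncated variance) and reads both inequalities off from the values at $\beta=0$. What your version buys is uniformity and, in particular, a clean treatment of the second-moment bound when $\alpha<0$: there one needs the reverse inequality $m(\beta)\ge \beta+\sqrt{2/\pi}$ before multiplying by the negative $\beta$, which your decreasing-$g$ step supplies explicitly, whereas the paper's one-line second-moment deduction is written as if $\alpha\ge 0$ and leaves that case implicit. What the paper's approach buys is that the stochastic-ordering step is probabilistic and avoids differentiating the Mills ratio, but it requires the two-case split and an extra comparison distribution.
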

\begin{proof}[Proof of Lemma \ref{lem:bound12}]
	\label{proof:unitg}
	For ${T} \sim TG(\mu,\sigma^2; \alpha, +\infty)$, we know
	\begin{equation} 
		\label{eq:fm}
		\mathbb{E}[{T}] = \mu + \frac{\phi\left(\frac{\alpha-\mu}{\sigma}\right)}{1- \Phi\left(\frac{\alpha-\mu}{\sigma}\right) }\sigma,
	\end{equation}
	\begin{equation}
		\label{eq:sm}
		\mathbb{E}[{T}^2] =\sigma^2 + \sigma^2 \frac{\frac{\alpha-\mu}{\sigma} \phi\left(\frac{\alpha-\mu}{\sigma}\right)}{1- \Phi\left(\frac{\alpha-\mu}{\sigma}\right) }+ \mu^2 + 2\mu\sigma \frac{\phi\left(\frac{\alpha-\mu}{\sigma}\right)}{1- \Phi\left(\frac{\alpha-\mu}{\sigma}\right) }, 
	\end{equation}
	where $\phi(\cdot), \Phi(\cdot)$ denote respectively the density and the cumulative functions of the standard normal distribution. We divide the proof in the cases $\alpha <0 $ and $\alpha \ge 0$.
	
	Consider $\alpha <0$. Denote by $c_{\mu,\sigma^2; \alpha}$ the normalizing constant $c_{\mu,\sigma^2;\alpha}=\int_{\alpha}^{+\infty}  e^{\frac{-(x-\mu)^2}{2\sigma^2}} dx$. Since $X\sim TG(0,\sigma^2; \alpha, + \infty)$ and $\alpha <0$, we have
	\begin{equation*}
		\mathbb{E}[{X}]= \int_{\alpha}^{+\infty} x \frac{e^{\frac{-x^2}{2\sigma^2}}}{c_{0,\sigma^2;\alpha}} dx \le \int_{0}^{+\infty} x  \frac{e^{\frac{-x^2}{2\sigma^2}}}{c_{0,\sigma^2;\alpha}} dx.
	\end{equation*}
	Multiplying and dividing by $c_{0,\sigma^2;0}$ and recalling that for $Y \sim TG(0,\sigma^2;0,+\infty)$ one has $\mathbb{E}[Y] = \sqrt{\frac{2}{\pi}} \sigma$ from \eqref{eq:fm},  then
	$$ \int_0^{+\infty} x  \frac{e^{\frac{-x^2}{2\sigma^2}}}{c_{0,\sigma^2;\alpha}} dx \le \frac{c_{0,\sigma^2;0}}{c_{0,\sigma^2; \alpha}} \sqrt{\frac{2}{\pi}} \sigma.$$ 
	Furthermore since $ \frac{c_{0,\sigma^2;0}}{c_{0,\sigma^2; \alpha}} = \frac{c_{0,\sigma^2;0}}{\int_{\alpha}^0 e^{-\frac{t^2}{2\sigma^2}}dx +\, c_{0,\sigma^2;0}} <1$, it follows that 
 \begin{equation}
 \label{eq:res1}
     \mathbb{E}[{X}] \le \sqrt{\frac{2}{\pi}} \sigma.
 \end{equation}
	
	Consider now $\alpha \ge 0$. We prove 
 \begin{equation}
 \label{eq:res2}
 \mathbb{E}[{X}] \le  \alpha + \sqrt{\frac{2}{\pi}} \sigma= \mathbb{E}[Y], \end{equation} 
 with $Y \sim TG(\alpha, \sigma^2; \alpha, + \infty)$.
	We exploit stochastic ordering: if there exists a coupling between ${X} \sim TG(0,\sigma^2; \alpha, + \infty)$ and $ Y \sim TG(\alpha, \sigma^2; \alpha, + \infty)$ such that $\Pr({X}<Y) = 1$, then the desired result follows.
	Given that the Gaussian distribution belongs to the exponential family, it has monotone likelihood ratio in its canonical statistics, that is $x$, hence implying stochastic ordering.
	
	For the second moment, by \eqref{eq:sm} and the bound just found in \eqref{eq:res1} and \eqref{eq:res2}, we get
	\begin{align*}
		\mathbb{E}[{X}^2] = \sigma^2 + \alpha \sigma \mathbb{E}[Y] \le \sigma^2 + \alpha^2+\sqrt{\frac{2}{\pi}} \alpha \sigma, 
	\end{align*}
	for $Y \sim TG(0,1; \frac{\alpha}{\sigma}, +\infty)$.
\end{proof}

\begin{proof}[Proof of Lemma \ref{lem:bound_gen}]
	\label{proof:bound_gen}
	\add{Denote with $\mathbf{z} = \Sigma^{-1/2}(\bb \xi-\bb \nu)$, and $\mathbf{e} = \mathbf{z} / \| \mathbf{z}\|$, as in Algorithm \ref{alg:max_refl}, and denote $\zeta = \| \mathbf{z} \|$.
  If we condition on $\X \neq \Y$, then
  \begin{equation}\label{eq:trunc_norm}
    2 \mathbf{z}^\top \dot{\X}  + \mathbf{z}^\top \mathbf{z} = 2 \zeta \,\mathbf{e}^\top  \dot{\X} + \zeta^{2} > - 2 \log U \,, \qquad U \sim \text{Unif}(0,1)\,.
  \end{equation}
  Moreover, it holds that
  \[
  \begin{aligned}
    \X-\Y  =\bb \xi-\bb \nu+\Sigma^{1 / 2}(\dot{\X}-\dot{\Y}) 
    &=\Sigma^{1 / 2}\left(\mathbf{z}+2(\bb e ^{\top} 
    \dot{\X}) \bb e \right) \\
    &= \left( \zeta^2 + 2 \zeta \mathbf{e}^\top \dot{\X} \right) \frac{1}{\zeta }\Sigma ^{1 / 2} \mathbf{e} \\
    & =  \left( \zeta^2 + 2 \zeta \mathbf{e}^\top \dot{\X} \right) \frac{1}{\zeta ^2}(\bb \xi - \bb \nu)\,.
  \end{aligned}
  \]
  We can obtain the lower bound by noticing
  \[\mathbb{E}\left[\|A(\X-\Y)\|\mid \X \neq \Y\right] \geq \mathbb{E} [-2 \log U ] \frac{\|A(\bb \xi - \bb \nu)\|}{\zeta^2} = 2 \frac{\|A(\bb \xi - \bb \nu)\|}{\zeta^2}\,.
  \]}

  \add{We move to the upper bound.
  Since $\dot{\X}\sim N(\mathbf{0}, I_d)$, we have that $\mathbf{e}^\top \dot{\X} \sim N(0,1)$. By \eqref{eq:trunc_norm}, we have that $W = 2 \zeta \mathbf{e}^\top \dot{\X}$ conditional on $U$ is distributed as a truncated normal $TN\left( 0, 4\zeta^2 ; -2\log U - \zeta ^2, +\infty  \right)$. Then, by Lemma \ref{lem:bound12}, we have
  \[
  \mathbb{E}[W] =\mathbb{E}[\mathbb{E}[W \mid U]] \leq \mathbb{E}\bigg[ \max ( 0, -\zeta^2 - 2 \log U ) + \sqrt{\frac{8}{\pi} }\,\zeta\bigg] = 2 e^{-\frac{\zeta^2}{2}} + \sqrt{\frac{8}{\pi}} \zeta\,,
  \]
  from which the result follows.}
\end{proof}

\subsubsection{Proof of Theorem \ref{thm:tail_bound_block}}
The proof of Theorem \ref{thm:tail_bound_block} builds upon Lemma \ref{lem:time_diff_tail_bound}, which can be found below.

\begin{proof}[Proof of Theorem \ref{thm:tail_bound_block}]
  \add{Recall that 
  \[T = 1 + t_1 + \sum _{j\geq 1} (1-A_j ) (t_{j+1} - t_j)\,,\]
  therefore, if we denote with $G = \sum _{j\geq 1} (1-A_j ) (t_{j+1} - t_j)$, we have for each $x> 0 $ 
  \[\{ T \geq 1 +y +x g \} \subseteq  \{t_1 > y\} \cup \{ G > g \} \cup \{ (1-A_j ) (t_{j+1 } - t_j) > x, \text{ for some } j \leq g \}\,.\]
  By union bound, we have
  \[Pr ( T \geq 1 + y +  x g ) \leq Pr (t_1 > y) + Pr ( G > g ) +  \sum _{j=1} ^g Pr ( (1-A_j ) (t_{j+1 } - t_j) > x )\,.\]}
  
  \add{By Lemma \ref{lem:max_block_coupl_bound}, the tail of $G$ is bounded by the tail of a geometric random variable with parameter $1/2$, therefore $Pr ( G > g ) \leq \left( \frac{1}{2} \right)^g$. Moreover, by Lemma \ref{lem:contraction}, we have $Pr (t_1 > y) =0$, if $y \geq f_1(\|\X ^0 - \Y ^0\|, \varepsilon, B, Q)$. Finally, by Lemma \ref{lem:time_diff_tail_bound}, we have
  \[Pr ( (1-A_j ) (t_{j+1 } - t_j) > x ) \leq  \left( \frac{1}{2} \right)^j Pr ( t_{j+1} -t_j > x \mid A_j = 0) \leq \left( \frac{1}{2} \right) ^j U(x)\,,\]
  where $U(x)$ denotes the upper bound in Lemma \ref{lem:time_diff_tail_bound}. 
  By combining these three bounds, we have for any $y \geq f_1(\|\X ^0 - \Y ^0\|, \varepsilon, B, Q)$ and any $g \geq 1$
  \[Pr ( T \geq 1 + y +  x g ) \leq \left( \frac{1}{2} \right)^g + U(x) \sum _{j=1} ^g \left( \frac{1}{2} \right) ^j \leq \left( \frac{1}{2} \right)^g + U(x)\,.\]}

  \add{We can set $g = \floor {\log U(x) /(- \log 2)}$ to obtain
  \[Pr \left(  T \geq 1 + y + x \left( -\frac{\log U(x)}{\log 2} +1 \right) \right)  \leq 2 U(x)\,.\]
  Now, let 
  \[\begin{aligned}
    x \left(\frac{-\log U(x)}{\log 2} +1\right)& = x \left( \frac{1}{16 \log 2}\lambda_{min}(\Delta)\,\varepsilon^2\,\left( \frac{e^{-\log \rho (B)x }}{\kappa(  Q)}-1 \right)^2  - \frac{\log 7}{ \log 2 }+1\right) \\
    &\leq   \frac{1}{16 \log 2}\frac{\lambda_{min}(\Delta)\,\varepsilon^2}{\kappa(  Q)^2} \, x e^{-2\log \rho (B)x } = t(x) \,,
  \end{aligned}\]
  for each $x>0$. If we denote the last upper bound with $t(x) $, we have that is strictly increasing in $x$ on the set $[0, +\infty)$, and its inverse is given by
  \[x(t) = -\frac{1}{2\log \rho (B)} W \left(\gamma  \,t\right)\,,\qquad
  \gamma = 32\log 2 \frac{-\log \rho(  B) \,\kappa (  Q)^2}{\lambda _{min}(  \Delta)\,\varepsilon^2 }\,,\]
  where $W$ denotes the Lambert W function, i.e., the inverse of the function $w \to w e^w$, on the set $w \in [-1, +\infty)$. Therefore, for any $y = t> f_1(\|\X ^0 - \Y ^0\|, \varepsilon, B, Q)$, we have $P(T > 1+2t) \leq 2 U(x(t)) $. 
  Note that for each $z=\gamma  \,t > 32\log 2$ (which is implied by the condition of Theorem \ref{thm:tail_bound_block}), we can lower bound
  \[W(z) \geq \sqrt{ \frac{z}{\log z} } \,, \qquad \quad \left( \sqrt{\frac{z}{\log z}} - 1 \right)^2 \geq \frac{1}{4\log 2}\frac{z}{\log z}\,,\]
  From which we obtain
  \[
    Pr ( T > 1 + 2t ) \leq 14\,\exp \left( -\frac{1}{2}\frac{-\log \rho (B)\, t}{\log (\gamma t)}\right) \,.
  \]}
\end{proof}

\begin{lem}\label{lem:time_diff_tail_bound}
  \add{\[
  Pr(t_{j+1 }- t_j > x \mid A_j = 0) \leq 7\,\exp \left( -\frac{1}{16}\lambda_{min}(\Delta)\,\varepsilon^2\,\left( \frac{e^{-\log \rho (B)x }}{\kappa(  Q)}-1 \right)^2\right)\,.
  \]}
\end{lem}
\begin{proof}
  \add{Note that $Pr(t_{j+1 }- t_j >x\mid A_j = 0) = \mathbb{E}[Pr(t_{j+1 }- t_j > x \mid A_j = 0, \X ^{t_j}, \Y^{t_j})\mid A_j = 0]$, we will find a bound for the inner probability that holds almost surely independently of $\X ^{t_j}$ and $\Y^{t_j}$. By Lemma \ref{lem:contraction}, we have 
  \[\begin{aligned}
    Pr(t_{j+1 }- t_j > x \mid &A_j = 0, \X ^{t_j}, \Y^{t_j})\leq Pr\left( f_1(\| \X ^{t_j + 1} - \Y ^{t_j +1 }\| , \varepsilon,   B ,   Q ) > x \mid A_j = 0, \dots \right)\\
    &= Pr\left(\| \X ^{t_j + 1} - \Y ^{t_j +1 }\| > \frac{\varepsilon}{\sqrt{\kappa(  Q)}} \, e^{-\log \rho (B)x }  \mid A_j = 0, \X ^{t_j}, \Y^{t_j} \right)\,.
  \end{aligned}\]
  As in Lemma \ref{lem:max_block_coupl_bound}, we let $M_{j,k}=0$ be the event that the maximal coupling at time $t_j$ is unsuccessful for block $k$. Then, we can write the above probability as 
  \[\sum _{k=1}^K Pr (M_{j,k}=0 \mid A_j=0,\dots) Pr \left(\| \X ^{t_j + 1} - \Y ^{t_j +1 }\| > \frac{\varepsilon\, e^{-\log \rho (B)x } }{\sqrt{\kappa(  Q)}} \Bigg \vert M_{j,k} = 0, \X ^{t_j}, \Y^{t_j} \right)\,,\]
  where we know that $\sum _k Pr (M_{j,k}=0 \mid A_j=0,\X ^{t_j}, \Y^{t_j}) = 1$.}

  \add{Using the same argument as Lemma \ref{lem:max_block_coupl_bound}, we have that, conditioned on $M_{j,k}=0$, it  holds 
  \[\begin{aligned}
    \| \X^{t_j +1} - \Y^{t_j +1} \| &\leq \|  L \|_2 \| \X^{t_j +1} - \Y^{t_j +1} \|_Q \\
    &\leq \sqrt{\kappa (  Q)} \left \| \left[ \mathbf{0}_{1:(k-1)}\,; \X ^{t_j + 1 }_k - \Y ^{t_j + 1} _{k}\,; (\X ^{t_j}- \Y ^{t_j})_{k+1:K}  \right] \right \| \\
    &\leq \sqrt{\kappa (  Q)} \left( \| \X_k ^{t_j+1} - \Y ^{t_j+1}_k\| + \varepsilon \right)\,.
  \end{aligned}\]
  Therefore, we can bound the probability with
  \[Pr \left(\| \X ^{t_j + 1} _k- \Y ^{t_j +1 }_k\| >\varepsilon\left(  \frac{\, e^{-\log \rho (B)x } }{\kappa(  Q)} -1 \right) \Bigg \vert M_{j,k} = 0, \X ^{t_j}, \Y^{t_j} \right)\,.\]
  Now, denote with $\alpha _k = \| Q_{kk}^{1/2} U^k (\X^{t_j} - \Y^{t_j}) \|$. From the proof of Lemma \ref{lem:bound_gen}, we have that, conditioned on $M_{j,k}=0$, $\X^{t_j}$ and $\Y ^{t_j}$, holds
  \[\|X ^{t_j+1} - \Y ^{t_j+1} \| \overset{d}{=} \left( \alpha _k ^2 + 2\alpha _k W  \right) \frac{\| U^k (\X^{t_j} - \Y^{t_j}) \|}{\alpha _k ^2}\,,\]
  where $W \sim N(0,1)$, and $\alpha _k ^2 + 2\alpha _k W > -2 \log U$, for some $U \sim Unif(0,1)$ independent of $W$. Moreover, $\| U^k (\X^{t_j} - \Y^{t_j}) \| \leq (\lambda_{min}(\Delta))^{-1/2} \alpha _k$. Therefore, we can furhter bound the probability as
  \[\begin{aligned}
    &Pr \left( \left( \alpha _k ^2 + 2\alpha _k W  \right) \frac{\| U^k (\X^{t_j} - \Y^{t_j}) \|}{\alpha _k ^2} >\varepsilon\left(  \frac{\, e^{-\log \rho (B)x } }{\kappa(  Q)} -1 \right)  \Bigg \vert M_{j,k} = 0, \X ^{t_j}, \Y^{t_j} \right)\\
    &\leq \mathbb{E}\left[ Pr \left( \alpha _k ^2 + 2\alpha _k W  >\alpha _k \,\varepsilon\, (\lambda_{min}(\Delta))^{1/2}\left(  \frac{\, e^{-\log \rho (B)x } }{\kappa(  Q)} -1 \right)  \Bigg \vert \alpha _k ^2 + 2\alpha_k W > -2\log (U) \right) \right]\,.
  \end{aligned}\]
  Denote with $\ell =\alpha _k \, \varepsilon\, (\lambda_{min}(\Delta))^{1/2}\left(  \frac{\, e^{-\log \rho (B)x } }{\kappa(  Q)} -1 \right)$ and recall that for a standard Gaussian distribution holds $Pr ( Z > x) \leq e^{-x^2/2}$, for all $x > 0$. Than, we can compute the inner probability as
  \[\begin{aligned}
    \int _0 ^1 &Pr ( \alpha _k ^2 + 2\alpha _k W > \max (\ell , -2 \log u ) )  du \leq \int _0 ^1 e^{- \frac{(\max (\ell , -2\log u )- \alpha_k ^2)^2}{8 \alpha_k ^2}}du \\
    &= \int _0 ^{+\infty } e^{- \frac{(\max (\ell , t)- \alpha_k ^2)^2}{8 \alpha_k ^2}}e^{-\frac{t}{2}} dt \\
    &= \int _0 ^{\ell } e^{- \frac{(\ell- \alpha_k ^2)^2}{8 \alpha_k ^2}}e^{-\frac{t}{2}} dt + \int _{\ell}^{+\infty} e^{- \frac{(t- \alpha_k ^2)^2}{8 \alpha_k ^2}}e^{-\frac{t}{2}} dt \\
    &= \int _0 ^{\ell } e^{- \frac{(\ell- \alpha_k ^2)^2}{8 \alpha_k ^2}}e^{-\frac{t}{2}} dt + 2 \sqrt{2\pi}\alpha _k\int _{\ell}^{+\infty} \frac{1}{\sqrt{8\pi \alpha _k ^2}} e^{- \frac{(t+ \alpha_k ^2)^2}{8 \alpha_k ^2}} dt \\
    &=  e^{- \frac{(\ell- \alpha_k ^2)^2}{8 \alpha_k ^2}}+ 2 \sqrt{2\pi}\alpha _k \, e^{- \frac{(\ell+ \alpha_k ^2)^2}{8 \alpha_k ^2}}  \\
    &\leq  e^{- \frac{\ell^2}{16\alpha_k ^2}}  \left( e^{ \frac{1}{8}\alpha _k ^2} + 2 \sqrt{2\pi}\alpha _k \right)  \,,
  \end{aligned}\]
  where in the last inequality we used that $(a-b)^2 \geq \frac{1}{2}a^2 - b ^2$ and $(a+b)^2 \geq \frac{1}{2}a^2 + b ^2$, for any $a,b >0$. 
  From the proof of Lemma \ref{lem:max_block_coupl_bound}, recall that $\alpha _k  <1$, if $K>1$. Therefore, $e^{ \frac{1}{8}\alpha _k ^2} + 2 \sqrt{2\pi}\alpha _k < e^{ \frac{1}{8}} + 2 \sqrt{2\pi}\leq 7$. By substituting back $\ell$, we obtain the result.}
\end{proof}

\subsection{Proof of Corollary \ref{cor:gcrem_bound}}
The proof of Corollary \ref{cor:gcrem_bound} follows directly from Corollary \ref{cor:bound_rev} and Lemma \ref{lem:a2} below. 
In this section, we will always assume $P^{(F)}= P_2 P_1$ and $B^{(F)}$ as in Lemma \ref{lem:rob} accordingly. Let also $L$ be the block triangular matrix such that $L L^\top = \Sigma$.

\begin{lem}
	\label{lem:a2}
	Let $B^{(F)}$ and $B^{(FB)}$ be respectively the auto-regressive matrices induced by $P^{(F)}$ and $P^{(FB)}$, the forward backward kernel of Section \ref{ssec:bound_pirev}, for $\pi = N(\boldsymbol{\mu}, \Sigma)$, with $K=2$ blocks. Let $N^{(F)}=L^{-1}B^{(F)}L$ and $ N^{(FB)}= L^{-1}B^{(FB)}L$. For all $ t > 1$ one has
	\[ \| \left(N^{(F)}\right)^t\|_2 \le \rho \left( \left(B^{(FB)}\right)^{t-1}\right) =\rho \left(B^{(F)} \right)^{t-1},\]
 where $\| \cdot \|_2$ is the induced $2$-norm. 
\end{lem}
\begin{proof}
	For a two-block Gaussian it holds $\mathbb{E}[\x_{(i)} | \x_{(j)}] = A_{ij} \x_{(j)} +\textbf{a}_{(i)}$ for $i\ne j \in \{1,2\}$. So
	\begin{align}
		\label{eq:bbf}
		&B^{(F)} = \left(\begin{array}{c|c}
			0 &A_{12}\\  \hline
			0 & A_{21} A_{12}
		\end{array} \right),
		&B^{(FB)} = \left(\begin{array}{c|c}
			0 &A_{12} A_{21} A_{12}\\  \hline
			0 & A_{21} A_{12}
		\end{array} \right).
	\end{align}
	Note that from the above $\rho \left(B^{(F)} \right) = \rho( A_{21} A_{12}) = \rho \left(B^{(FB)}\right)$.
	One can rewrite \eqref{eq:bbf} as $B^{(F)}= A_2 A_1$ and $B^{(FB)}= A_1 A_2 A_1$ for
	\begin{align}
		\label{eq:a2}
		&A_1 = \left(\begin{array}{c|c}
			1 &0\\  \hline
			A_{21} & 0
		\end{array} \right), 
		&A_2 = \left(\begin{array}{c|c}
			0 &A_{12}\\ \hline
			0 & 1
		\end{array} \right) .
	\end{align}
	Simple algebra shows that $A_1^2=A_1, A_2^2=A_2$ and $(B^{(F)})^t = A_2 (B^{(FB)})^{t-1}$. Furthermore: 
	\begin{equation}
		\label{eq:n}
		(N^{(F)})^t = L^{-1} (B^{(F)})^t L = L^{-1}A_2 L L^{-1} (B^{(FB)})^{t-1} L = \tilde{A}_2 (N^{(FB)})^{t-1},\end{equation}
	where we defined $\tilde{A}_2:= L^{-1}A_2 L$. By symmetry of $N^{(FB)}$ and submultiplicativity of the matrix norm, it follows: 
	$$\| (N^{(F)})^t \|_2 = \| \tilde{A}_2 (N^{(FB)})^{t-1}\|_2 \le \| \tilde{A}_2\|  _2\rho(N^{(FB)})^{t-1}.$$
\end{proof}

\subsection{Proof of Theorem \ref{thm:bound_expected}}
\begin{proof}[Proof of Theorem \ref{thm:bound_expected}]
	\add{With the same reasoning as in Theorem \ref{thm:bound_expected_rev}, one can show that
	\[T \le 1+\tilde{f}_1(\|\X^0-\Y^0\|, \varepsilon, B, Q,\delta)  +  2\tilde{f}_2(\varepsilon,B, Q, \delta)\,,\]
	where
	\begin{align*}
		\tilde{f}_1(r, \varepsilon, B,\delta)
		&=\max \left( n^*_\delta, \left\lceil\frac{\log \|\X^{0}-\Y^{0}\|+ \log \sqrt{\kappa (Q)}-\log \varepsilon}{-\log \left( 1-\frac{1-\rho(B)}{1+\delta} \right)}\right\rceil\right),
	\end{align*}
	and
	\begin{equation*}
		\tilde{f}_2(\varepsilon,B, \delta) = \max \left( n^*_\delta, \left\lceil \frac{ 2+  \log \kappa(Q) - \frac{1}{2}\log \lambda_{min}(\Delta) + \frac{1}{2}  \log K  -\frac{1}{2}\log \varepsilon }{-\log \left( 1-\frac{1-\rho(B)}{1+\delta} \right)} \right\rceil \right).
	\end{equation*}
	Then combining the results with the same reasoning as in the proof of Theorem \ref{thm:bound_expected_rev} gives the result. }
	
	\add{The form of $\tilde{f}_1$ comes from a generalization of $f_1$ in Lemma \ref{lem:contraction}. Following the steps of the proof of Lemma \ref{lem:contraction}, we have
	\[
	\left\|\X^{t}-\Y^{t}\right\|=\| B^t(\X^{0}-\Y^{0})\| \leq\sqrt{\kappa (Q)} \| N^t\| \|\X^{0}-\Y^{0}\| \,.
	\]
	The definition of $n^*_\delta$ implies that for all $t \ge n^*_\delta$:
	\[\| N^t\|_2 \le \left( 1-\frac{1-\rho(N)}{1+\delta} \right)^t = \left( 1-\frac{1-\rho(B)}{1+\delta} \right)^t  \,.\]
	Imposing $\| \X ^t - \Y ^t \| < \varepsilon$ and solving for $t$ leads to the result.}
	
	\add{As for $\tilde{f}_2$, the result follows from substituting $f_1$ with $\tilde{f}_1$ in the proof of Lemma \ref{lem:distance}.}
\end{proof}

\subsection{Proof of Lemma \ref{lem:asymp}}
\begin{proof}[Proof of Lemma \ref{lem:asymp}]
	By \eqref{eq:d_tv_gauss}, and the definition of $\bar{c}_d$,
it holds 
	\begin{equation}
		\| p-q \|_{TV} = \erf\left(\bar{c}_d d^{-\alpha + \frac{1}{2}}  \right).
	\end{equation}
	If $\alpha>\frac{1}{2}$, as $d \rightarrow + \infty $, Taylor expanding the $\erf$ function around $0$ gives
	\begin{equation} 
		\label{eq:erf_expansion}
		{\textstyle \Pr_{max}}(p, q)= 1- \|p-q\|_{TV} \asymp 1-  \frac{2 \bar{c_d}}{\sqrt{\pi}} d^{-\alpha +\frac{1}{2}}.
	\end{equation}
	If instead $0 <\alpha < \frac{1}{2}$, the argument of the $\erf$ function goes to $+\infty$. We exploit Gaussian tail bounds to characterize the behaviour. Recall indeed that
	$$ \erf(x) = 2 \Phi(\sqrt{2} x)-1,$$ $${\textstyle \Pr_{max}}(p,q) = 1-  \erf\left(\bar{c}_d d^{-\alpha + \frac{1}{2}}  \right) = 2 \left(1-\Phi\left(\sqrt{2} d^{-\alpha + \frac{1}{2}} \bar{c}_d \right)\right),$$ where $\Phi(\cdot)$ indicates the standard Gaussian cumulative. Furthermore for $x$ going to infinity, it holds $1-\Phi(x) \asymp \frac{\phi(x)}{x}$, where $\phi$ denotes the density function of the standard Gaussian, it follows
	$${\textstyle \Pr_{max}}(p,q )   \asymp \frac{d^{\alpha-\frac{1}{2}}}{\sqrt{\pi} \bar{c}_d} e^{-\frac{\bar{c}_d^2}{\sqrt{2}} d^{-2\alpha+1} }.$$
 
	On the other hand, considering the product of independent maximal couplings, the argument of each $\erf$ function goes to $0$ as $d \rightarrow + \infty$ and hence we exploit the same expansion as in \eqref{eq:erf_expansion}, getting
	\begin{align*}
		\prod_{i=1}^{d}{\textstyle \Pr_{max}}(p_i, q_i) &= \prod_{i=1}^d \left(1-\erf \left(d^{-\alpha} \sqrt{\frac{c_i^2}{8}} \right)\right) \\
		&=  \prod_{i=1}^d \left(1- \frac{|c_i|}{\sqrt{2\pi}} d^{-\alpha} + o(d^{-\alpha}) \right) \asymp e^{-d^{1-\alpha} \tilde{c}_d},
	\end{align*} 
	where $\tilde{c}_d = \frac{\sum_{i=1}^d |c_i|}{d \sqrt{2\pi}}$. 
\end{proof}
\end{document}